\documentclass[10pt]{amsart}

\usepackage{amsfonts}

\newtheorem{Theorem}{Theorem}

\newtheorem{Definition}{Definition}
\newtheorem{Proposition}{Proposition}

\newtheorem{Remark}{Remark}
\newtheorem{Example}{Example}

\newcommand{\LL}{{\mathrm{L}}}
\newcommand{\Id}{{\mathbf{1}}}
\newcommand{\dom}{{\mathrm{dom}~}}
\newcommand{\img}{{\mathrm{rng}~}}
\newcommand{\N}{\ensuremath{{\mathbb N}}}

\newcommand{\R}{\ensuremath{{\mathbb R}}}

\newcommand{\Imm}{{\mathrm{Im}~}}
\newcommand{\Ree}{{\mathrm{Re}~}}

\newcommand{\hil}{\mathcal H}
\newcommand{\ra}{\rangle}
\newcommand{\la}{\langle}
\newcommand{\wgconv}{\stackrel{\mathrm{W\Gamma}}{\longrightarrow}}
\newcommand{\sgconv}{\stackrel{\mathrm{S\Gamma}}{\longrightarrow}}
\newcommand{\wseta}{\rightharpoonup}

\begin{document}

\title[Quantum resolvent and $\Gamma$ con\-ver\-gences]{ Quantum singular operator limits of thin Dirichlet tubes via $\Gamma$-convergence}
\author{C\'{e}sar R. de Oliveira}
\thanks{The author thanks Profs.\ R.\ Froese and J.\ Feldman for discussions, the warm hospitality of
PIMS  and UBC (Vancouver), and the partial support by CAPES (Brazil).}
\address{Departamento de Matem\'{a}tica -- UFSCar, S\~{a}o Carlos, SP, 13560-970 Brazil}
\email{oliveira@ufscar.br}

\subjclass{81Q15, 49R50, 35P20 (11B,47B99)}

\begin{abstract} The $\Gamma$-con\-ver\-gence of lower bounded quadratic
forms is  used to study the singular operator limit of thin tubes (i.e., the vanishing
of the cross section diameter) of the Laplace operator with Dirichlet boundary conditions;  a
procedure to obtain the effective Schr\"odinger operator (in different subspaces) is proposed,
generalizing recent results in case of compact tubes. Finally, after scaling curvature and torsion the limit of a broken line is briefly investigated.  \\ \\ Keywords: {quantum thin tubes, singular operators, $\Gamma$-convergence, broken line.}
\end{abstract}

\maketitle

\tableofcontents

\section{Introduction} \label{secIntro} Among the tools for studying limits of self-adjoint operators
in Hilbert spaces are the resolvent con\-ver\-gence and the sesqui\-lin\-e\-ar form con\-ver\-gence.
In the context of quantum mechanics, it is well known that in case of monotone sequences of operators 
these approaches are strictly related, as discussed, for example, in Section~10.4 of~\cite{ISTQD} and
Section~VIII.7 (Supplementary Material) of~\cite{RS1}.  These form con\-ver\-gences have the advantage
of dealing with some singular limits in quantum mechanics, which is well exemplified by mathematical
arguments supporting the Aharonov-Bohm Hamiltonian \cite{MVG,deOPer}.   

Let $\Id$ denote the identity
operator and $0 \le T_j$ be a sequence of positive (or uniformly lower bounded in general)
self-adjoint operators acting in the Hilbert space $\hil$. From the technical point of view
what happens is that the monotone increasing of the sequence of resolvent operators
$R_{-\lambda}(T_j):= (T_j+\lambda\Id)^{-1}$ ($\lambda>0$) implies the monotone decreasing of the
corresponding sequence of sesqui\-lin\-e\-ar forms and vice versa. Due to monotonicity, in such cases
one clearly understands the existence of limits and have some insight in limit form domains. However,
in principle it is not at all clear what are the relations between strong resolvent con\-ver\-gence of
operators and sesqui\-lin\-e\-ar form con\-ver\-gence in more general cases, say, if one requires that
the self-adjoint operators or closed forms are only uniformly bounded from below. It happens that such
relations are well known among people interested in variational con\-ver\-gences and applied
mathematics, and it is directly related to the concept of $\Gamma$-con\-ver\-gence; but when
addressing sesqui\-lin\-e\-ar forms these variational problems are usually formulated in real Hilbert
spaces and the theory has been developed under this condition. However, in quantum mechanics the Hilbert spaces are usually complex, and an adaptation of the main results to complex Hilbert spaces will appear elsewhere \cite{deOcomplex}.    Due to the particular class of applications we have in mind, here it will be enough to reduce some key arguments to real Hilbert spaces.

In Section~\ref{sectionGammaConv} the very basics of $\Gamma$-convergence are recalled in a suitable way; for details the reader is referred to the important 
monographs in the field  \cite{DalMaso,Braides}.

As an application of $\Gamma$-con\-ver\-gence of forms we study the
limit operator obtained from the Laplacian (with Dirichlet boundary conditions) restricted to a tube
in $\R^3$ that shrinks to a smooth curve. This problem has been considered in the interesting work
\cite{BMT}, where the strong $\Gamma$-con\-ver\-gence was employed to study the limit operator and
con\-ver\-gence of eigenvalues and eigenvectors, but the curve was supposed to have finite length $L$
and the con\-ver\-gence restricted to the subspace of vectors of the form $w(s)u_0(y)$ (``the first
sector''), with
$w\in \hil^1_0[0,L]$ (i.e., the usual Sobolev space)  and $u_0$ being the first eigenfunction of the
restriction of the Laplacian to the tube cross section ($s$ denotes the curve arc length and
$y=(y_1,y_2)$ the cross section variables). Here we indicate how the proofs in \cite{BMT} can be worked out to get an effective operator in case of
curves of infinite length; in fact it will be necessary to go a step further and we prove both strong
and weak
$\Gamma$-con\-ver\-gences of forms which imply the strong resolvent con\-ver\-gence of operators to an
effective Schr\"odinger operator on the curve. We also give an alternative proof of the spectral
con\-ver\-gence discussed in
\cite{BMT} in case of finite length curves:  compactness arguments due to the  boundedness of the tube
in this case will be essential; our proof also clarifies the mechanism behind the spectral
con\-ver\-gence. These issues are discussed in Section~\ref{sectionLimitCurve}.

The use of quadratic forms, and the possibility of considering $\Gamma$-con\-ver\-gence, in the
application mentioned above is important because due to the intricate geometry of the allowed tubes
the expressions of the actions of the involved operators are rather complicated, with the presence of
mixed derivatives and nonlinear coefficients. In case the curve lives in
$\R^2$ there are results about the limit operator in
\cite{DellAntTen,ACF} and the effective potential is written in terms of the curvature; the main
novelty in case of
$\R^3$ considered in \cite{BMT} is the additional presence of ``twisting'' and torsion in the
effective potential, since the case of untwisted tubes has also been previously studied in~\cite{DE}
(see also~\cite{ClaBra,FreitasKrej}).  Since in both
$\R^2$ and untwisted tubes in $\R^3$ we have simpler expressions for the involved operators, it is
possible to deal directly with the strong resolvent con\-ver\-gence and consider more general spaces
than that related only to the first eigenfunction of the laplace operator in the cross section (i.e.,
the first sector). So, based on the
$\Gamma$-con\-ver\-gence of sesqui\-lin\-e\-ar forms, we propose here a rather natural procedure to
handle the strong con\-ver\-gence in case of vectors of the form $w(s)u_n(y)$, i.e., the $(n+1)$th
sector spanned by the general eigenvector $u_n$ of the Laplacian restricted to the cross section,
and we will impose that each eigenvalue of this operator is simple in order to simplify the implementation
of the
$\Gamma$-con\-ver\-gence of  $b_n^\varepsilon$ and the strong resolvent
con\-ver\-gence of the associated self-adjoint operators $H^\varepsilon_n$, $n\ge1$, which are
introduced in Definition~\ref{defHneps}. It turned out that the resulting effective operator depends
on the sector considered, an effect not present in
$\R^2$
\cite{DellAntTen,ACF}. This ``new'' effect is understood since in $\R^2$ there is a separation of
variables
$s$ and $y$ (see the top of page~14 of \cite{ACF}, but be aware of the different notations), which in
general does not occur in $\R^3$ and so the necessity of introducing a specific procedure. These
developments are also presented in  Section~\ref{sectionLimitCurve}. Roughly speaking, from the
dynamical point of view the dependence of the limit operator on the sector of the Hilbert space would
correspond to the dependence of the effective potential on the initial condition, a phenomenon already
noticed in~\cite{FH,Takens}.   

It must be mentioned that  the author has no intention to claim that
$\Gamma$-con\-ver\-gence should replace the traditional operator techniques in such kind of problems
with no separation of variables \cite{BF,FS1,FS2,Krej2}, including some important spectral results for
twisted tubes~\cite{EKK}. Further, we need a combination of strong and weak $\Gamma$-convergences of
quadratic forms in order to get strong resolvent con\-ver\-gence of the related operators, and other
assumptions and tools must be added to get norm resolvent con\-ver\-gence (see, for instance,
Subsection~\ref{subsecBTubes}). The $\Gamma$-con\-ver\-gence in this context must be seen as just
another available tool for  mathematical physicists.   

With the effective limit operator at hand, we finally
discuss the limit of a smooth curve approaching a broken line, similarly to \cite{DellAntTen,ACF}, but
here we confine ourselves to take one limit at a time, that is, first we constrain the particle motion
from the tube to the curve (as discussed above), then we take the limit when the curve approaches the
broken line; we shall closely  follow a discussion in
\cite{ACF}. Besides the negative term related to the curvature, in our case the effective potential
$V^{\mathrm{eff}}(s)$ has the additional presence of a positive term related to ``twist'' and torsion
so that the technical condition 
\[
\int_\R V^{\mathrm{eff}}(s)ds\ne0
\] assumed in \cite{ACF} might not hold in some cases. We have then checked that it is still possible
to follow the same proofs, but with suitable adaptations, even if the above integral vanishes. The
bottom line is that different self-adjoint realizations for the operator on the broken line are found.
These broken-line limits are shortly discussed in Section~\ref{sectBrokenLine}.

Of course in this introduction we have skipped many technical details, some of them are fundamental to
a correct understanding of the contents of this work. In the following sections I shall try to fill
out those gaps.  I expect  this work will motivate researchers to seriously consider the
$\Gamma$-con\-ver\-gence of forms as a useful way to study (singular) limits of observables in quantum
mechanics.

\section{Strong Resolvent and $\Gamma$ Convergences}\label{sectionGammaConv}
\subsection{$\Gamma$-Convergence}  Our sequences (more properly they should be called {\rm families})
of self-adjoint operators $T_\varepsilon$, with domain $\dom T_\varepsilon$ in a  separable
Hilbert space $\hil$, and the corresponding closed sesqui\-lin\-e\-ar forms
$b_\varepsilon$ will be indexed by the parameter $\varepsilon>0$ and, by definiteness, we think of the
limit
$\varepsilon\to0$ and we want to study the limit $T$ (resp.\  $b$) of
$T_\varepsilon$ (resp.\  $b_\varepsilon$).  The domain of $T$ will not be supposed to be dense
in~$\hil$ and its closure will be denoted by $\hil_0=\overline{\dom T}$ (with $\img T\subset
\hil_0$); usually this is indicated by simply saying that ``$T$ is self-adjoint in~$\hil_0$.''

We assume that  a sesqui\-lin\-e\-ar form $b(\zeta,\eta)$ is linear in the second entry and antilinear
in the first one.   As usual the real-valued function $\zeta\mapsto b(\zeta,\zeta)$ will be simply
denoted by
$b(\zeta)$ and called the associated quadratic form; we will use the terms sesqui\-lin\-e\-ar and
quadratic forms almost interchangeably, since usually the context makes it clear  which one is being
referred to.  It will also be assumed that $b$ is positive (or lower bounded in general) and
$b(\zeta)=\infty$ if $\zeta$ does not belong to its domain
$\dom  b$; this is important in order to guarantee that in some cases $b$ is lower semicontinuous,
which is equivalent to $b$ be the sesqui\-lin\-e\-ar form generated by a positive self-adjoint
operator $T$, that is, 
\[ b(\zeta,\eta)=\la T^{1/2}\zeta,T^{1/2}\eta\ra, \quad \zeta,\eta\in\dom b=\dom T^{1/2};
\]see Theorem 9.3.11 in \cite{ISTQD}. By allowing $b(\zeta)=\infty$ we have a handy way to work in the
larger space
$\hil$ instead of only in~$\hil_0=\overline{\dom T}$. If
$\lambda\in\R$, then $b+\lambda$ indicates the sesqui\-lin\-e\-ar form
$(b+\lambda)(\zeta,\eta):=b(\zeta,\eta)+\lambda \la\zeta,\eta\ra$, whose corresponding quadratic form
is
$b(\zeta)+\lambda\|\zeta\|^2$.

It is known that (Lemma~10.4.4 in \cite{ISTQD}), for any $\lambda>0$, one has
$b_{\varepsilon_1}\le b_{\varepsilon_2}$ iff $R_{-\lambda}(T_{\varepsilon_2})\le
R_{-\lambda}(T_{\varepsilon_1})$, so that a sequence of quadratic forms is monotone iff the
corresponding sequence of resolvent operators is monotone. This has been explored in the quantum
mechanics literature in order to get strong resolvent limits of self-adjoint operators through the
study of quadratic forms (see, for instance, Section~10.4 of~\cite{ISTQD} and Section~VIII.7
of~\cite{RS1}). For more general sequences of operators, the form counterpart of the strong resolvent
con\-ver\-gence is not so direct and it was found that the correct concept comes from the so-called
$\Gamma$-con\-ver\-gence \cite{DalMaso}. In what follows the concept of
$\Gamma$-con\-ver\-gence will be recalled in a suitable way, and then applied to the study of singular limits of Dirichlet tubes in other sections.

The general concept of $\Gamma$-con\-ver\-gence is not restricted to quadratic forms and can be
applied to quite general topological spaces, but in this section the ideas will mostly be suitably
adapted to our framework; e.g., we try to restrict the discussion to Hilbert spaces and to lower
semicontinuous functions, since the quadratic forms we are interested in have this property. The
general theory is nicely presented in the book
\cite{DalMaso}, to which we will often  refer. $\hil$ always denote a separable Hilbert space and
$B(\zeta;\delta)$ the open ball cantered at $\zeta\in\hil$ of radius $\delta>0$; finally $\overline\R
:=\R\cup\{\infty\}$ and the symbol l.sc.\ will be a shorthand to {\em lower semicontinuous}.

\begin{Definition}\label{defGamma} The lower $\Gamma$-limit  of a sequence of l.sc.\ functions
$f_\varepsilon:\hil\to\overline \R$ is the function $f^-:\hil\to \overline\R$ given by
\[ f^-(\zeta) = \lim_{\delta\to0}\; \liminf_{\varepsilon\to0}\; \inf \left\{ f_\varepsilon (\eta):
\eta\in B(\zeta;\delta) \right\},\quad \zeta\in\hil.
\]The upper  $\Gamma$-limit $f^+(\zeta)$ of $f_\varepsilon$ is defined by replacing $\liminf$ by
$\limsup$ in the above expression. If $f^-=f^+=:f$ we say that such function is the
$\Gamma$-limit of $f_\varepsilon$
 and it will be denoted by
\[ f=\Gamma\text{-}\lim_{\varepsilon\to0}f_\varepsilon.
\]
\end{Definition}

\begin{Remark}\label{remark1} It was assumed in Definition~\ref{defGamma} that the topology of $\hil$ is the usual
norm topology, and in this case we speak of {\em strong $\Gamma$-con\-ver\-gence}. If the weak
topology is considered, the balls
$B(\zeta;\delta)$ must be replaced by the set of all open weak neighborhoods of
$\zeta$ \cite{DalMaso}, and in this case we speak of {\em weak
$\Gamma$-con\-ver\-gence}. Both concepts will be important here, and in general they are not
equivalent since the norm is not continuous in the weak topology (see Example~6.6 in
\cite{DalMaso}). When convenient, the symbols \[ f_\varepsilon\sgconv f,\quad f_\varepsilon\wgconv f
\] will be used to indicate that $f_\varepsilon$ $\Gamma$-converges to $f$ in the strong and weak
sense in
$\hil$, respectively. See also Proposition~\ref{GammaAltern} and Remark~\ref{remark3}.
\end{Remark}

\begin{Example} The sequence $f_\varepsilon:\R\to\R$, $f_\varepsilon(x)=\sin(x/\varepsilon)$,
$\Gamma$-converges to the constant function $f(x)=-1$ as $\varepsilon\to0$. This simple example nicely
illustrates the property of ``con\-ver\-gence of minima'' that motivated the introduction of the
$\Gamma$-con\-ver\-gence. In quantum mechanics one is used to the con\-ver\-gence of averages, so that
the natural guess (if any) for the limit in this example would be the null function. 
\end{Example}

\begin{Remark} The $\Gamma$-con\-ver\-gence is usually different from both the pointwise and weak
limit of functions and roughly it can be illustrated as follows. Assume one is studying a
heterogeneous material subject to strong tensions whose intensity in some regions is measured by the
parameter $1/\varepsilon$ and it will undergo a kind of phase transition as
$\varepsilon\to0$; for each $\varepsilon>0$ one computes the equilibrium configuration of this
material via a minimum of certain energy functional;  this transition would be computed as the limit
of the equilibria (i.e., minima of such functionals), and this limit (i.e., minimum of an ``effective
limit functional'') would be quite singular. In many instances
$\Gamma$-con\-ver\-gence is the correct concept to describe such situations \cite{Braides,DalMaso} and
its importance in the asymptotics of variational problems relies here. 
\end{Remark}

\begin{Remark}From some points of view the notion of $\Gamma$-con\-ver\-gence is quite subtle, as
exemplified by the following facts:
\begin{itemize}
\item[a)] Due to the prominent role played by minima, in general 
\[\Gamma\text{-}\lim_{\varepsilon\to0}f_\varepsilon\ne
-\left(\Gamma\text{-}\lim_{\varepsilon\to0}(-f_\varepsilon)\right).
\]
\item[b)] Assume that $f=\Gamma\text{-}\lim_{\varepsilon\to0}f_\varepsilon$ and
$g=\Gamma\text{-}\lim_{\varepsilon\to0}g_\varepsilon$; it may happen that
$(f_\varepsilon+g_\varepsilon)$ is not $\Gamma$-convergent.
\item[c)] Clearly it is not necessary to restrict the definition of $\Gamma$-con\-ver\-gence to l.sc.\
functions. If
$f_\varepsilon=f$, for all $\varepsilon$, and $f$ is not l.sc., then
$\Gamma\text{-}\lim_{\varepsilon\to0}f$ is the greatest lower semicontinuous function majorized by~$f$ (the so-called l.sc.\ envelope of $f$), and so different from~$f$! 
\end{itemize}
\end{Remark}

Since $\hil$ satisfies the first axiom of countability, Proposition~8.1 of \cite{DalMaso} implies the
following handy characterization of strong and weak $\Gamma$-con\-ver\-gence: 

\begin{Proposition}\label{GammaAltern} The sequence $f_\varepsilon:\hil\to \overline{\R}$ strongly
$\Gamma$-converges to $f$ (that is, $f_\varepsilon\sgconv f$) iff the following two conditions are
satisfied:
\begin{itemize}
\item[i)] For every $\zeta\in\hil$ and every $\zeta_\varepsilon\to\zeta$ in $\hil$ one has
\[ f(\zeta)\le \liminf_{\varepsilon\to0} f_\varepsilon(\zeta_\varepsilon).
\]
\item[ii)] For every $\zeta\in\hil$ there exists a sequence $\zeta_\varepsilon\to\zeta$ in
$\hil$ such that
\[ f(\zeta)=\lim_{\varepsilon\to0} f_\varepsilon(\zeta_\varepsilon).
\]
\end{itemize}
\end{Proposition}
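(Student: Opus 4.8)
The plan is to reduce the topological definition of $f^{\pm}$ to an equivalent sequential description and then read off the two conditions. Throughout write $m_\delta(\varepsilon):=\inf\{f_\varepsilon(\eta):\eta\in B(\zeta;\delta)\}$; since $\delta\mapsto m_\delta(\varepsilon)$ is nondecreasing as $\delta\downarrow0$, the limits in Definition~\ref{defGamma} are suprema, so that
\[
f^-(\zeta)=\sup_{\delta>0}\liminf_{\varepsilon\to0}m_\delta(\varepsilon),\qquad f^+(\zeta)=\sup_{\delta>0}\limsup_{\varepsilon\to0}m_\delta(\varepsilon).
\]
The central claim is that these coincide with the \emph{sequential} quantities
\[
S^-(\zeta):=\inf_{\zeta_\varepsilon\to\zeta}\liminf_{\varepsilon\to0}f_\varepsilon(\zeta_\varepsilon),\qquad S^+(\zeta):=\inf_{\zeta_\varepsilon\to\zeta}\limsup_{\varepsilon\to0}f_\varepsilon(\zeta_\varepsilon),
\]
the infima being taken over all sequences converging to $\zeta$ in the norm of $\hil$, and that in both cases the infimum is attained.

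First I would prove the easy inequalities $f^-\le S^-$ and $f^+\le S^+$. Given any $\zeta_\varepsilon\to\zeta$ and any $\delta>0$, one has $\zeta_\varepsilon\in B(\zeta;\delta)$ for all small $\varepsilon$, hence $m_\delta(\varepsilon)\le f_\varepsilon(\zeta_\varepsilon)$; taking $\liminf$ (resp.\ $\limsup$) in $\varepsilon$ and then $\sup$ over $\delta$ yields $f^-(\zeta)\le\liminf_\varepsilon f_\varepsilon(\zeta_\varepsilon)$ (resp.\ $f^+(\zeta)\le\limsup_\varepsilon f_\varepsilon(\zeta_\varepsilon)$), and the infimum over sequences gives the claim.

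The substantial part, and the main obstacle, is the reverse inequality together with attainment, which is exactly where the first axiom of countability is used: I would run a diagonal argument over the countable basis $\{B(\zeta;1/k)\}_{k\in\N}$. For $f^+$, assuming $f^+(\zeta)<\infty$, for each $k$ the bound $\limsup_\varepsilon m_{1/k}(\varepsilon)\le f^+(\zeta)$ provides $\varepsilon_k\downarrow0$ so that $m_{1/k}(\varepsilon)<f^+(\zeta)+1/k$ whenever $0<\varepsilon<\varepsilon_k$; choosing, for $\varepsilon\in[\varepsilon_{k+1},\varepsilon_k)$, a near-minimizer $\zeta_\varepsilon\in B(\zeta;1/k)$ with $f_\varepsilon(\zeta_\varepsilon)\le m_{1/k}(\varepsilon)+1/k$ produces a sequence with $\zeta_\varepsilon\to\zeta$ and $\limsup_\varepsilon f_\varepsilon(\zeta_\varepsilon)\le f^+(\zeta)$, so that $S^+(\zeta)\le f^+(\zeta)$ with the infimum attained. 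For $f^-$ the same idea applies, except that $\liminf$ controls $m_{1/k}$ only along a subsequence: I would select $\varepsilon_k\downarrow0$ with $m_{1/k}(\varepsilon_k)<f^-(\zeta)+1/k$, take $\zeta_{\varepsilon_k}$ to be a corresponding near-minimizer in $B(\zeta;1/k)$ and set $\zeta_\varepsilon:=\zeta$ otherwise, obtaining $\zeta_\varepsilon\to\zeta$ and $\liminf_\varepsilon f_\varepsilon(\zeta_\varepsilon)\le f^-(\zeta)$. The cases $f^\pm(\zeta)=\infty$ are trivial, since then every admissible sequence already forces the corresponding $\liminf$ or $\limsup$ to equal $\infty$.

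With the claim in hand the proposition is immediate. Condition (i) asserts precisely that $f(\zeta)\le\liminf_\varepsilon f_\varepsilon(\zeta_\varepsilon)$ for every $\zeta_\varepsilon\to\zeta$, i.e.\ $f\le S^-=f^-$. If in addition (ii) holds, the recovery sequence gives $\limsup_\varepsilon f_\varepsilon(\zeta_\varepsilon)=f(\zeta)$, whence $f^+(\zeta)=S^+(\zeta)\le f(\zeta)$; combined with the always-valid $f^-\le f^+$ this forces $f=f^-=f^+$, that is $f_\varepsilon\sgconv f$. Conversely, if $f^-=f^+=f$, then $f=f^-$ is exactly condition (i), while $f=f^+=S^+$ together with attainment supplies a sequence $\zeta_\varepsilon\to\zeta$ with $\limsup_\varepsilon f_\varepsilon(\zeta_\varepsilon)=f(\zeta)$; applying (i) to this very sequence squeezes $\liminf$ and $\limsup$ together, so $\lim_\varepsilon f_\varepsilon(\zeta_\varepsilon)=f(\zeta)$, which is condition (ii).
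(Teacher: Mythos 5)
Your argument is correct; note, however, that the paper does not actually prove this proposition — it is imported verbatim from Dal Maso's monograph (Proposition~8.1 there), the only remark being that $\hil$ is first countable so that the sequential characterization of $\Gamma$-limits applies. What you have written is a sound self-contained reconstruction of that standard argument: the identification of the topological quantities $f^{\pm}(\zeta)$ with the sequential ones $S^{\pm}(\zeta)$, the easy inequalities $f^{\pm}\le S^{\pm}$ (a convergent sequence eventually enters every ball), and the reverse inequalities with attainment via a diagonal choice of near-minimizers over the countable basis $\{B(\zeta;1/k)\}$ — correctly using a full tail of $\varepsilon$'s for $f^{+}$ but only a subsequence $\varepsilon_k$, padded by the constant sequence elsewhere, for $f^{-}$ — after which (i) and (ii) read off as $f\le f^{-}$ and $f^{+}\le f$, and the converse follows by applying (i) to the optimal sequence for $S^{+}$. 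The one cosmetic gap is the possibility that your $m_{1/k}(\varepsilon)=\inf_{B(\zeta;1/k)}f_\varepsilon$ equals $-\infty$ (the functions take values in $\R\cup\{\infty\}$, but their infima over balls need not be bounded below), in which case the prescription $f_\varepsilon(\zeta_\varepsilon)\le m_{1/k}(\varepsilon)+1/k$ does not parse; replacing it by $f_\varepsilon(\zeta_\varepsilon)\le\max\{m_{1/k}(\varepsilon),-k\}+1/k$ repairs this without affecting the bound $\limsup_{\varepsilon\to0}f_\varepsilon(\zeta_\varepsilon)\le f^{+}(\zeta)$, and the issue never arises in the paper's applications, where all forms are uniformly bounded below.
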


\begin{Remark}\label{remark3} If instead of strong con\-ver\-gence $\zeta_\varepsilon\to\zeta$ one
considers weak con\-ver\-gence $\zeta_\varepsilon\wseta\zeta$ in Proposition~\ref{GammaAltern}, then
we have a characterization of
$f_\varepsilon\wgconv f$. Here this characterization will be used in practice,  and so it justifies
the lack of details with respect to weak $\Gamma$-con\-ver\-gence in Remark~\ref{remark1}.
\end{Remark}

Recall that a function $f:\hil\to\overline\R$ is {\em coercive} if for every $x\in\R$ the set
$f^{-1}(-\infty,x]$ is precompact in~$\hil$. Since $\hil$ is reflexive, it turns out that a function
$f$ is coercive in the weak topology of $\hil$ iff
$\lim_{\|\zeta\|\to\infty}f(\zeta)=\infty$.
 A sequence of functions $f_\varepsilon:\hil\to\overline\R$ is {\em equicoercive} if there exists a
coercive
$\varphi:\hil\to\overline\R$ such that $f_\varepsilon\ge\varphi$, for all~$\varepsilon>0$. The
following results, stated as Theorems~\ref{GammaMinimizers} and~\ref{GammaPerturb2}, will be useful
later on. The first one  is about con\-ver\-gence of minimizers  (for the proof see~\cite{DalMaso},
Chapter~7), whereas the second one gives some conditions guaranteeing that the
$\Gamma$-con\-ver\-gence is stable under continuous perturbations (see Proposition~6.21 in the
book~\cite{DalMaso}.).

\begin{Theorem}\label{GammaMinimizers}
 Assume that $f_\varepsilon:\hil\to\overline\R$ $\Gamma$-converges to $f$ and let
$\zeta_\varepsilon$ be a minimizer of $f_\varepsilon$, for all $\varepsilon$. Then any cluster point of
$(\zeta_\varepsilon)$ is a minimizer of~$f$.  Further,  if $f_\varepsilon$ is equicoercive and $f$ has
a unique minimizer $x_0$, then $\zeta_\varepsilon$ converges to~$x_0$.
\end{Theorem}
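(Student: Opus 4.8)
The plan is to derive everything from the sequential characterization of $\Gamma$-convergence in Proposition~\ref{GammaAltern}, using it in both directions. First I would prove the cluster-point statement. Let $\zeta$ be a cluster point of $(\zeta_\varepsilon)$, so that $\zeta_{\varepsilon_k}\to\zeta$ along some subsequence. Fix an arbitrary $\eta\in\hil$ and invoke condition (ii) to produce a recovery sequence $\eta_\varepsilon\to\eta$ with $f_\varepsilon(\eta_\varepsilon)\to f(\eta)$. Since $\zeta_\varepsilon$ minimizes $f_\varepsilon$, we have $f_\varepsilon(\zeta_\varepsilon)\le f_\varepsilon(\eta_\varepsilon)$ for every $\varepsilon$; passing to the $\liminf$ along the subsequence $\varepsilon_k$ and combining with condition (i) applied to $\zeta_{\varepsilon_k}\to\zeta$ yields the chain
\[
f(\zeta)\;\le\;\liminf_{k}f_{\varepsilon_k}(\zeta_{\varepsilon_k})\;\le\;\liminf_{k}f_{\varepsilon_k}(\eta_{\varepsilon_k})\;\le\;\lim_{\varepsilon\to0}f_\varepsilon(\eta_\varepsilon)\;=\;f(\eta).
\]
As $\eta$ is arbitrary, $\zeta$ minimizes $f$.

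For the second assertion I would first establish precompactness of $(\zeta_\varepsilon)$. Choosing the recovery sequence $x^\varepsilon\to x_0$ for the unique minimizer gives $f_\varepsilon(x^\varepsilon)\to f(x_0)<\infty$, so by minimality $f_\varepsilon(\zeta_\varepsilon)\le f_\varepsilon(x^\varepsilon)$ stays bounded above by some constant $C$ for small $\varepsilon$. The equicoercivity hypothesis furnishes a coercive $\varphi\le f_\varepsilon$, whence $\varphi(\zeta_\varepsilon)\le C$ and $(\zeta_\varepsilon)$ lies in the precompact set $\varphi^{-1}(-\infty,C]$. Consequently every subsequence of $(\zeta_\varepsilon)$ has a further convergent subsequence, whose limit is a cluster point of $(\zeta_\varepsilon)$ and hence, by the first part, a minimizer of $f$; uniqueness forces this limit to be $x_0$. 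Since every subsequence admits a sub-subsequence converging to the same point $x_0$, the full family converges to $x_0$.

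The main point requiring care is the topological bookkeeping: the same argument must run both in the strong case (norm topology) and in the weak case. In the latter one must recall, as noted before the statement, that weak coercivity of $\varphi$ is equivalent to $\lim_{\|\zeta\|\to\infty}\varphi(\zeta)=\infty$, so that the sublevel set $\varphi^{-1}(-\infty,C]$ is bounded, and that the weak topology restricted to such a bounded set of the separable Hilbert space $\hil$ is metrizable; this legitimizes upgrading the ``every subsequence has a sub-subsequence converging to the common limit $x_0$'' criterion to convergence of the entire family. One should also note that $f(x_0)<\infty$ genuinely holds, since otherwise $f\equiv\infty$ and the statement is vacuous; this finiteness is what makes the bound $C$ effective in the coercivity argument.
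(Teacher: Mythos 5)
Your proof is correct. The paper does not actually prove Theorem~\ref{GammaMinimizers}; it only cites Chapter~7 of Dal Maso's monograph, and your argument is essentially the standard one given there: the liminf inequality plus a recovery sequence for an arbitrary competitor shows every cluster point is a minimizer, and equicoercivity plus uniqueness upgrades this to convergence via the sub-subsequence criterion. The only step worth making explicit is the application of condition~(i) of Proposition~\ref{GammaAltern} along a subsequence $\varepsilon_k$: the stated inequality is for full families $\zeta_\varepsilon\to\zeta$, so one should either extend $\zeta_{\varepsilon_k}$ to a full family by setting it equal to $\zeta$ off the subsequence (which can only decrease the liminf) or note that the $\Gamma$-liminf inequality is inherited by subfamilies; this is routine and does not affect correctness. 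Your remarks on metrizability of the weak topology on bounded sets and on the finiteness of $f(x_0)$ are exactly the right points of care.
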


\begin{Theorem}\label{GammaPerturb2} Let $b_\varepsilon,b\ge\beta>-\infty$ be closed and (uniformly)
lower bounded sesqui\-lin\-e\-ar forms in~$\hil$ and $f:\hil\to\R$ a continuous function. Then
$b_\varepsilon$
$\Gamma$-converges to $b$ iff $(b_\varepsilon+f)$
$\Gamma$-converges to $(b+f)$. In particular, in case of both weak and strong
$\Gamma$-con\-ver\-gences in $\hil$, it holds for the functional $f(\cdot)=\la\eta,\cdot\ra +
\la\cdot,\eta\ra$, defined for each fixed $ \eta\in\hil$.
\end{Theorem}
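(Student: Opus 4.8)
The plan is to reduce everything to the sequential characterization of $\Gamma$-convergence furnished by Proposition~\ref{GammaAltern} (for the strong case) and its weak analogue in Remark~\ref{remark3}. The whole point is that a continuous $f$ interferes neither with the liminf inequality i) nor with the existence of a recovery sequence ii), because along any norm-convergent sequence $\zeta_\varepsilon\to\zeta$ one has $f(\zeta_\varepsilon)\to f(\zeta)$, a \emph{finite} limit. First I would record the elementary fact that if $c_\varepsilon\to c\in\R$, then $\liminf_\varepsilon(a_\varepsilon+c_\varepsilon)=\big(\liminf_\varepsilon a_\varepsilon\big)+c$ for any sequence $a_\varepsilon\in\overline\R$; the hypothesis $b_\varepsilon\ge\beta>-\infty$ together with the finiteness of $f$ guarantees that no indeterminate $\infty-\infty$ arises.

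Assume first $b_\varepsilon\sgconv b$. To verify i) for $(b_\varepsilon+f)$, fix $\zeta$ and $\zeta_\varepsilon\to\zeta$; using the identity just recorded with $c_\varepsilon=f(\zeta_\varepsilon)$ and $a_\varepsilon=b_\varepsilon(\zeta_\varepsilon)$, and then i) for $b$, I get $\liminf_\varepsilon(b_\varepsilon+f)(\zeta_\varepsilon)=\liminf_\varepsilon b_\varepsilon(\zeta_\varepsilon)+f(\zeta)\ge b(\zeta)+f(\zeta)=(b+f)(\zeta)$. For ii) I take a recovery sequence $\zeta_\varepsilon\to\zeta$ for $b$, so that $b_\varepsilon(\zeta_\varepsilon)\to b(\zeta)$; the \emph{same} sequence works for $b_\varepsilon+f$, since $(b_\varepsilon+f)(\zeta_\varepsilon)=b_\varepsilon(\zeta_\varepsilon)+f(\zeta_\varepsilon)\to b(\zeta)+f(\zeta)=(b+f)(\zeta)$. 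By Proposition~\ref{GammaAltern} this proves $(b_\varepsilon+f)\sgconv(b+f)$. The converse is obtained by running the identical argument with the continuous function $-f$ in place of $f$, applied to the forms $(b_\varepsilon+f)$, since $(b_\varepsilon+f)+(-f)=b_\varepsilon$ and $(b+f)+(-f)=b$.

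For the weak statement I would repeat the two steps via Remark~\ref{remark3}, replacing $\zeta_\varepsilon\to\zeta$ by $\zeta_\varepsilon\wseta\zeta$ throughout. The only step that is not automatic is the convergence $f(\zeta_\varepsilon)\to f(\zeta)$, which now requires $f$ to be \emph{weakly} sequentially continuous; a generic norm-continuous $f$ need not have this property (the norm itself fails it). This is exactly where the special functional $f(\cdot)=\la\eta,\cdot\ra+\la\cdot,\eta\ra=2\,\Ree\la\eta,\cdot\ra$ enters: by the very definition of weak convergence, $\zeta_\varepsilon\wseta\zeta$ forces $\la\eta,\zeta_\varepsilon\ra\to\la\eta,\zeta\ra$, hence $f(\zeta_\varepsilon)\to f(\zeta)$, and the two-step argument goes through verbatim, yielding the equivalence for $\wgconv$ as well. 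I expect this weak sequential continuity to be the main (indeed essentially the only) real obstacle: the strong case is a soft consequence of continuity, whereas in the weak topology one must restrict to functionals like the above that remain continuous, which is precisely the content of the \emph{in particular} clause.
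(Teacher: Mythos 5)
Your proof is correct. The paper does not actually prove this statement --- it simply cites Proposition~6.21 of Dal Maso's book --- and your argument is precisely the standard one that reference uses: reduce to the sequential characterization (Proposition~\ref{GammaAltern} and Remark~\ref{remark3}), observe that a continuous perturbation shifts both the liminf inequality and the recovery sequence by the finite limit $f(\zeta_\varepsilon)\to f(\zeta)$, and obtain the converse by perturbing with $-f$. Your isolation of the one genuinely delicate point --- that the weak case requires \emph{weak} sequential continuity of $f$, which fails for general norm-continuous functions but holds for $f(\cdot)=\la\eta,\cdot\ra+\la\cdot,\eta\ra$ by the very definition of $\zeta_\varepsilon\wseta\zeta$ --- is exactly the content of the ``in particular'' clause and is handled correctly.
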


\subsection{$\Gamma$ and Resolvent Convergences} Now we recall the main results with respect to the
relation between the strong resolvent con\-ver\-gence of self-adjoint operators and
$\Gamma$-con\-ver\-gence of the associated  sesqui\-lin\-e\-ar forms \cite{DalMaso,deOcomplex}.  
\begin{Theorem}\label{mainTheorGamma} Let $b_\varepsilon,b$ be positive (or uniformly lower bounded)
closed sesqui\-lin\-e\-ar forms in the Hilbert space $\hil$, and
$T_\varepsilon,T$ the corresponding associated positive self-adjoint operators. Then the following
statements are equivalent:
\begin{itemize}
\item[i)] $b_\varepsilon\sgconv b$ and, for each $\zeta\in\hil$, $b(\zeta)\le
\liminf_{\varepsilon\to0} b_\varepsilon(\zeta_\varepsilon)$, $\forall \zeta_\varepsilon\wseta
\zeta$ in $\hil$.
\item[ii)]$b_\varepsilon\sgconv b$ and $b_\varepsilon\wgconv b$.
\item[iii)] $b_\varepsilon+\lambda \sgconv b+\lambda$ and $b_\varepsilon+\lambda \wgconv b+\lambda$,
for some
$\lambda>0$ (and so for all $\lambda\ge0$).
\item[iv)] For all $\eta\in\hil$ and $\lambda>0$, the sequence
\[
\min_{\zeta\in\hil}\left[b_\varepsilon(\zeta)+\lambda\|\zeta\|^2+\frac12 (\la\eta,\zeta\ra
+\la\zeta,\eta\ra)\right]
\] converges to
\[
\min_{\zeta\in\hil}\left[b(\zeta)+\lambda\|\zeta\|^2+\frac12 (\la\eta,\zeta\ra
+\la\zeta,\eta\ra)\right].
\]
\item[v)] $T_\varepsilon$ converges to $T$ in the strong resolvent sense in
$\hil_0=\overline{\dom T}\subset\hil$, that is, 
\[
\lim_{\varepsilon\to0} R_{-\lambda}(T_\varepsilon)\zeta = R_{-\lambda}(T)P_0\zeta,\quad
\forall\zeta\in\hil,\forall \lambda>0,
\]where $P_0$ is the orthogonal projection onto~$\hil_0$.
\end{itemize}
\end{Theorem}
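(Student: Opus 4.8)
The goal is to prove the equivalence of five statements relating strong/weak $\Gamma$-convergence of the forms $b_\varepsilon$ to strong resolvent convergence of the associated operators $T_\varepsilon$. I would prove this as a cycle of implications, anchoring everything to the minimization characterization in~(iv), which is the natural bridge between the variational ($\Gamma$) side and the operator ($R_{-\lambda}$) side. The key algebraic fact I would establish first, by a routine completion-of-the-square computation, is that for a positive self-adjoint operator $T$ with form $b$, the functional
\[
F_{\eta,\lambda}(\zeta):=b(\zeta)+\lambda\|\zeta\|^2+\tfrac12\bigl(\la\eta,\zeta\ra+\la\zeta,\eta\ra\bigr)
\]
has a unique minimizer $\zeta_*=-R_{-\lambda}(T)P_0\eta$, and that the minimum value equals $-\tfrac14\la\eta,R_{-\lambda}(T)P_0\eta\ra$ plus a term independent of $T$. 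This identity is what converts ``convergence of minima'' into ``strong resolvent convergence,'' since the minimizers \emph{are} the resolvents applied to $\eta$ and the minimum values are quadratic expressions in those resolvents.

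**Order of the argument.**
First I would show (ii)$\Leftrightarrow$(iii) directly from Theorem~\ref{GammaPerturb2}: the perturbation $f(\zeta)=\lambda\|\zeta\|^2$ is continuous, so adding it preserves both the strong and the weak $\Gamma$-convergence, giving the equivalence for any fixed $\lambda>0$; the lower bound $b_\varepsilon,b\ge\beta$ lets one pass to $\lambda\ge0$ by the same token. Next, (i)$\Leftrightarrow$(ii): here I would unfold Proposition~\ref{GammaAltern} and its weak analogue (Remark~\ref{remark3}). The strong $\Gamma$-convergence $b_\varepsilon\sgconv b$ already supplies, via the recovery-sequence condition~(ii) of Proposition~\ref{GammaAltern}, a sequence achieving the limit; what distinguishes the weak $\Gamma$-convergence is precisely the \emph{liminf} inequality against all weakly convergent sequences, which is the extra hypothesis spelled out in~(i). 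The recovery direction for weak $\Gamma$-convergence is then free, because any strong recovery sequence is in particular weakly convergent, so strong $\Gamma$-convergence hands over the ``$\limsup$/equality'' half of the weak statement. Thus (i) and (ii) are two phrasings of the same pair of conditions.

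**The variational-to-resolvent passage.**
The substantive step is (iii)$\Leftrightarrow$(iv)$\Leftrightarrow$(v). For (iii)$\Rightarrow$(iv) I would invoke Theorem~\ref{GammaMinimizers}: adding the continuous functional $f(\cdot)=\tfrac12(\la\eta,\cdot\ra+\la\cdot,\eta\ra)$ to $b_\varepsilon+\lambda$ keeps the combined $\Gamma$-convergence by Theorem~\ref{GammaPerturb2}; the forms $b_\varepsilon+\lambda\|\cdot\|^2$ are equicoercive for $\lambda>0$ because $\lambda\|\zeta\|^2\to\infty$ and the linear term is dominated, so $F^\varepsilon_{\eta,\lambda}$ is equicoercive with a unique minimizer of the limit functional; convergence of minimizers and of minimum values follows, which is exactly~(iv). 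To go from~(iv) to~(v) I would use the completion-of-the-square identity: convergence of the minimum values for every $\eta$ forces $\la\eta,R_{-\lambda}(T_\varepsilon)\eta\ra\to\la\eta,R_{-\lambda}(T)P_0\eta\ra$, i.e.\ weak convergence of the (uniformly bounded, self-adjoint) resolvents, and a polarization plus the convergence of minimizers themselves upgrades this to the strong resolvent convergence on all of $\hil$, with the projection $P_0$ appearing automatically because the minimizer lives in $\hil_0=\overline{\dom T}$. The reverse implication (v)$\Rightarrow$(iii) runs backward through the same identity together with the characterization of $\Gamma$-convergence by convergence of perturbed minima.

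**Expected obstacle.**
The delicate point is the bookkeeping of the subspace $\hil_0$ and the projection $P_0$: since $\dom T$ need not be dense, the limit form $b$ takes the value $+\infty$ off $\hil_0$, and one must check that the minimizer of the limit functional lies in $\hil_0$ and that the resolvent in~(v) is the resolvent of $T$ acting in $\hil_0$ composed with $P_0$. Keeping the coercivity/equicoercivity uniform in $\varepsilon$ down to the correct lower bound $\beta$, and ensuring the recovery sequences respect this possibly non-dense structure, is where the argument needs genuine care rather than routine computation; I would treat the non-dense case by working throughout with the extended-real-valued l.sc.\ functionals so that $\hil_0$ is selected intrinsically by finiteness of $b$.
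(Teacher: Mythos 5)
First, a point of comparison: the paper does not prove Theorem~\ref{mainTheorGamma} at all --- it is explicitly \emph{recalled} from the literature (\cite{DalMaso}, Chapters~12--13, and \cite{deOcomplex}), so there is no in-paper argument to measure your proposal against. Judged on its own, your architecture (anchor everything at item~(iv), identify minimizers of the perturbed functionals with resolvents via completion of the square, and shuttle between (i)--(iii) using the sequential characterizations) is indeed the standard route, and it is consistent with the auxiliary results the paper does state for exactly this purpose (Propositions~\ref{propMudancaComplex} and~\ref{propCharacQuadra}, Theorems~\ref{GammaMinimizers} and~\ref{GammaPerturb2}). A small slip: the minimizer of $b(\zeta)+\lambda\|\zeta\|^2+\frac12(\la\eta,\zeta\ra+\la\zeta,\eta\ra)$ is $-\frac12 R_{-\lambda}(T)P_0\eta$, not $-R_{-\lambda}(T)P_0\eta$; harmless, but it propagates into your constant $-\frac14\la\eta,R_{-\lambda}(T)P_0\eta\ra$, which then comes out right only after the correction.

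There are, however, three places where the argument as written would fail or is circular. (1) You derive (ii)$\Leftrightarrow$(iii) by feeding $f(\zeta)=\lambda\|\zeta\|^2$ into Theorem~\ref{GammaPerturb2} and claiming it preserves \emph{both} strong and weak $\Gamma$-convergence. It does not preserve the weak one: $\|\cdot\|^2$ is norm-continuous but not weakly continuous, which is precisely why Theorem~\ref{GammaPerturb2} restricts its ``both weak and strong'' clause to the weakly continuous functional $\la\eta,\cdot\ra+\la\cdot,\eta\ra$. The implication (ii)$\Rightarrow$(iii) must instead be assembled by hand (weak lower semicontinuity of the norm for the liminf half; the \emph{strong} recovery sequence, along which $\|\zeta_\varepsilon\|\to\|\zeta\|$, for the other half), and the converse for a \emph{single} $\lambda$ genuinely uses the quadratic structure of the forms --- a naive subtraction of $\lambda\|\zeta_\varepsilon\|^2$ loses the liminf inequality because $\lim\|\zeta_\varepsilon\|^2$ may exceed $\|\zeta\|^2$. (2) In (iv)$\Rightarrow$(v), equicoercivity here is equicoercivity in the \emph{weak} topology, so Theorem~\ref{GammaMinimizers} only delivers \emph{weak} convergence of the minimizers, i.e.\ weak resolvent convergence; ``polarization plus convergence of the minimizers'' does not upgrade this to norm convergence of $R_{-\lambda}(T_\varepsilon)\eta$. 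The missing mechanism is the squeeze that the paper itself carries out inside the proof of Proposition~\ref{GammaNorm}: compare the converging minima for two parameters $\mu>\lambda$ with the weak lower semicontinuity of $b+\lambda\|\cdot\|^2$ and of the norm to force $\|\zeta_\varepsilon\|\to\|\zeta\|$, which together with weak convergence gives strong convergence. (3) The direction (v)$\Rightarrow$(iii) is dismissed as ``running backward through the same identity together with the characterization of $\Gamma$-convergence by convergence of perturbed minima'' --- but that characterization \emph{is} essentially the content of (iv)$\Leftrightarrow$(ii), so as phrased this is circular; one has to extract the liminf inequality against weakly convergent sequences from resolvent convergence, e.g.\ via the representation of $b^T$ as a supremum of weakly continuous functionals in Proposition~\ref{propCharacQuadra}. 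None of these gaps is fatal to the strategy, but each is exactly where the substance of the theorem lives.
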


 The next two results are included mainly because they help to elucidate the connection
between forms, operator actions  and domains on the one hand, and minimalization of suitable
functionals on the other hand; this sheds some light on  the role played by
$\Gamma$-con\-ver\-gence in the con\-ver\-gence of self-adjoint operators. 

\begin{Proposition}\label{propMudancaComplex} Let $b\ge0$ be a closed sesqui\-lin\-e\-ar form on the
 Hilbert space~$\hil$, $T\ge0$ the self-adjoint operator associated with $b$ and
$P_0$ be the orthogonal projection onto $\hil_0=\overline{\dom T}\subset\hil$. Then
$\zeta\in\dom T$ and $T\zeta=P_0\eta$ iff $\zeta$ is a minimum point (also called minimizer) of the
functional
\[ g:\hil\to\overline\R,\qquad g(\zeta)= b(\zeta) - \la\eta,\zeta\ra-\la\zeta,\eta\ra.
\]
\end{Proposition}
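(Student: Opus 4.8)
The plan is to characterize the minimizer of $g$ by its variational (Euler-Lagrange) condition and then identify this condition with the operator equation $T\zeta = P_0\eta$. Since $b\ge0$ is closed, the functional $g(\zeta)=b(\zeta)-\la\eta,\zeta\ra-\la\zeta,\eta\ra$ is bounded below, convex (indeed $b$ is a positive quadratic form and the remaining terms are affine in $\zeta$), and lower semicontinuous, so standard convex-analytic reasoning applies. First I would write the first-order optimality condition: $\zeta_0$ minimizes $g$ iff for every admissible direction the directional derivative vanishes. Concretely, for $\zeta_0\in\dom b$ and any $\xi\in\dom b$, consider $t\mapsto g(\zeta_0+t\xi)$ with $t$ real; expanding gives
\[
g(\zeta_0+t\xi)=g(\zeta_0)+2t\,\Ree\!\left[b(\zeta_0,\xi)-\la\eta,\xi\ra\right]+t^2 b(\xi).
\]
Because $b(\xi)\ge0$, $\zeta_0$ is a minimum iff the linear term vanishes for all $\xi\in\dom b$; replacing $\xi$ by $i\xi$ kills the real-part restriction, so the condition becomes
\[
b(\zeta_0,\xi)=\la\eta,\xi\ra,\qquad \forall\,\xi\in\dom b.
\]

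Next I would translate this into the operator statement. By Theorem~9.3.11 of~\cite{ISTQD} quoted in the excerpt, $b(\zeta_0,\xi)=\la T^{1/2}\zeta_0,T^{1/2}\xi\ra$ on $\dom b=\dom T^{1/2}$. The identity $b(\zeta_0,\xi)=\la\eta,\xi\ra$ for all $\xi\in\dom b$ is precisely the weak form of the representation theorem characterizing when $\zeta_0\in\dom T$: it says that the antilinear functional $\xi\mapsto b(\zeta_0,\xi)$ extends to the bounded functional $\xi\mapsto\la\eta,\xi\ra$, which by the first representation theorem for closed forms forces $\zeta_0\in\dom T$ and $T\zeta_0=P_0\eta$. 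The projection $P_0$ enters because $\dom b\subset\hil_0$, so $\xi$ ranges only over $\hil_0$ and the functional sees only the component $P_0\eta$; equivalently $\la\eta,\xi\ra=\la P_0\eta,\xi\ra$ for every $\xi\in\dom b$, while $T$ as a self-adjoint operator in $\hil_0$ has its range inside $\hil_0$, forcing $T\zeta_0=P_0\eta$ rather than $\eta$ itself. Conversely, if $\zeta_0\in\dom T$ with $T\zeta_0=P_0\eta$, the same chain of equalities recovers $b(\zeta_0,\xi)=\la\eta,\xi\ra$ for all $\xi\in\dom b$, hence $\zeta_0$ is a minimizer.

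The step I expect to be the main obstacle is handling the non-density of $\dom T$ and the resulting projection $P_0$ cleanly, together with the subtlety of a minimizer lying in $\dom b$ but a priori not in $\dom T$. One must verify that any minimizer automatically lies in $\dom b$ (else $g=\infty$, so this is immediate from the convention $b(\zeta)=\infty$ off $\dom b$) and, more delicately, that the variational identity genuinely upgrades membership from $\dom T^{1/2}$ to $\dom T$; this is where the closedness of $b$ is essential, since it is exactly the hypothesis that makes the first representation theorem applicable. The complex versus real bookkeeping — using both $\xi$ and $i\xi$ to extract the full sesquilinear identity from the real directional derivative — is routine but must be done carefully so that the antilinear-in-first-entry convention is respected throughout.
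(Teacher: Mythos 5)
Your proof is correct and follows the standard route that the paper itself leaves implicit (the proposition is quoted from the literature without proof): expand $g(\zeta_0+t\xi)$, use positivity of $b(\xi)$ to reduce minimality to the vanishing of the linear term, polarize via $\xi\mapsto i\xi$, and invoke the first representation theorem for closed forms to upgrade $b(\zeta_0,\xi)=\la\eta,\xi\ra$ for all $\xi\in\dom b$ to $\zeta_0\in\dom T$ with $T\zeta_0=P_0\eta$; the treatment of the projection $P_0$ via $\la\eta,\xi\ra=\la P_0\eta,\xi\ra$ on $\dom b\subset\hil_0$ is exactly right. One small inaccuracy: your opening claim that $g$ is bounded below is false in general (take $b\equiv0$ on a dense domain and $P_0\eta\ne0$, where $g$ is an unbounded-below linear functional and, consistently, no minimizer and no solution of $T\zeta=P_0\eta$ exist); fortunately nothing in your argument uses this claim, since the proposition is an equivalence about a given $\zeta$ rather than an existence statement.
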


\begin{Proposition}\label{propCharacQuadra} Let  $T:\dom T\to\hil$ be a positive self-adjoint operator,
$\overline{\dom T}= \hil_0$, and $b^T:\hil\to\overline\R$ the quadratic form generated by~$T$. Then
\begin{eqnarray*} b^T(\zeta)&=&\sup_{\eta\in\dom T} \left[ \la T\eta,\zeta\ra +\la\zeta,T\eta\ra - \la
T\eta,\eta\ra
\right] 
\\ &=& \sup_{\eta\in\dom T} \left[ b^T(\eta)+ \la T\eta,\zeta\ra +\la\zeta,T\eta\ra - 2\la
T\eta,\eta\ra
\right],
\end{eqnarray*}for all $\zeta\in \hil_0$ and $b^T(\zeta)=\infty$ if $\zeta\in
\hil\setminus\hil_0$.
\end{Proposition}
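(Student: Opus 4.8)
The plan is to prove the single identity
\[
b^T(\zeta) = \sup_{\eta\in\dom T}\left[\la T\eta,\zeta\ra + \la\zeta,T\eta\ra - \la T\eta,\eta\ra\right], \qquad \zeta\in\hil_0,
\]
and to note that the second displayed expression coincides with the first term by term. Indeed, for $\eta\in\dom T$ one has $\la T\eta,\eta\ra = \la T^{1/2}\eta, T^{1/2}\eta\ra = b^T(\eta)$, so $b^T(\eta) + \la T\eta,\zeta\ra + \la\zeta,T\eta\ra - 2\la T\eta,\eta\ra = \la T\eta,\zeta\ra + \la\zeta,T\eta\ra - \la T\eta,\eta\ra$, and nothing further is needed for the two suprema to agree. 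Write $\Phi_\eta(\zeta) := \la T\eta,\zeta\ra + \la\zeta,T\eta\ra - \la T\eta,\eta\ra$ for the common integrand; each $\Phi_\eta$ is affine in $\zeta$, so the claim is the familiar fact that the convex l.sc.\ functional $b^T$ is recovered as the supremum of its affine minorants, a Fenchel--Moreau flavour consistent with Proposition~\ref{propMudancaComplex}.

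For the inequality $b^T(\zeta)\ge\Phi_\eta(\zeta)$ I would complete the square. If $\zeta\in\dom T^{1/2} = \dom b^T$ and $\eta\in\dom T$, then $T^{1/2}\eta\in\dom T^{1/2}$, so $\la T^{1/2}\zeta, T^{1/2}\eta\ra = \la\zeta,T\eta\ra$ and $\la T^{1/2}\eta,T^{1/2}\zeta\ra = \la T\eta,\zeta\ra$; expanding the nonnegative quantity $\|T^{1/2}(\zeta-\eta)\|^2\ge0$ then yields exactly $b^T(\zeta) - \Phi_\eta(\zeta)\ge0$. Taking the supremum over $\eta\in\dom T$ gives $b^T(\zeta)\ge\sup_\eta\Phi_\eta(\zeta)$; if instead $\zeta\in\hil_0\setminus\dom T^{1/2}$ the inequality is trivial since $b^T(\zeta)=\infty$.

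For the reverse inequality I would exhibit an approximating sequence using the spectral resolution $E$ of the positive self-adjoint operator $T$ in $\hil_0$. Put $\eta_n := E([0,n])\zeta\in\dom T$, the bounded spectral support guaranteeing $\eta_n\in\dom T$ for every $\zeta\in\hil_0$. A direct spectral computation gives $\la T\eta_n,\zeta\ra = \la\zeta,T\eta_n\ra = \la T\eta_n,\eta_n\ra = I_n$, where $I_n := \int_{[0,n]}\lambda\, d\|E(\lambda)\zeta\|^2$, whence $\Phi_{\eta_n}(\zeta) = 2I_n - I_n = I_n$. If $\zeta\in\dom T^{1/2}$ then $I_n\uparrow \int_{[0,\infty)}\lambda\,d\|E(\lambda)\zeta\|^2 = \|T^{1/2}\zeta\|^2 = b^T(\zeta)$, while if $\zeta\in\hil_0\setminus\dom T^{1/2}$ then $I_n\uparrow\infty = b^T(\zeta)$; in both cases $\sup_\eta\Phi_\eta(\zeta)\ge\lim_n\Phi_{\eta_n}(\zeta) = b^T(\zeta)$, closing the argument. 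Equivalently one could invoke that $\dom T$ is a form core for $b^T$ and choose $\eta_n\to\zeta$ in the form norm; the spectral truncation merely makes attainment transparent and handles the value $+\infty$ uniformly.

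The steps are routine once the domain bookkeeping is respected. The one point that warrants care, and the likely source of error if handled carelessly, is the identity $\la T^{1/2}\zeta, T^{1/2}\eta\ra = \la\zeta, T\eta\ra$ used in completing the square: it requires $\eta\in\dom T$ (not merely $\eta\in\dom T^{1/2}$) together with $\zeta\in\dom T^{1/2}$, and it is precisely here that restricting the supremum to $\eta\in\dom T$ is exploited. For completeness I would also record why the formula is asserted only for $\zeta\in\hil_0$: since $T\eta\in\img T\subset\hil_0$, each $\Phi_\eta$ depends on $\zeta$ only through $P_0\zeta$, so for $\zeta\notin\hil_0$ the supremum returns $b^T(P_0\zeta)$ rather than the convention $b^T(\zeta)=\infty$.
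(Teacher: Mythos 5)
Your proof is correct. The paper in fact states Proposition~\ref{propCharacQuadra} without proof (it is recalled from the $\Gamma$-convergence literature, cf.\ \cite{DalMaso}), and your argument --- the term-by-term identification of the two suprema via $b^T(\eta)=\la T\eta,\eta\ra$ for $\eta\in\dom T$, the inequality $b^T(\zeta)\ge\Phi_\eta(\zeta)$ by expanding $\|T^{1/2}(\zeta-\eta)\|^2\ge0$ (correctly using $\eta\in\dom T$ to justify $\la T^{1/2}\zeta,T^{1/2}\eta\ra=\la\zeta,T\eta\ra$), and the reverse inequality via the spectral truncations $\eta_n=E([0,n])\zeta$, which handles $\zeta\in\hil_0\setminus\dom T^{1/2}$ and the value $+\infty$ in one stroke --- is the standard route and is complete. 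Your closing remark explaining why the formula is restricted to $\zeta\in\hil_0$ (each $\Phi_\eta$ sees only $P_0\zeta$, so the supremum would return $b^T(P_0\zeta)$ rather than $\infty$) is also accurate and worth keeping.
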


Finally we recall Theorem~13.5 in
\cite{DalMaso}: 

\begin{Theorem}\label{GammaWeakRW} Let $b_\varepsilon,b\ge \beta>0$ be sesqui\-lin\-e\-ar forms on the Hilbert space $\hil$ and $T_\varepsilon,T\ge\beta\Id$ the corresponding associated self-adjoint operators,
and let
$\overline{\dom T}=\hil_0\subset\hil$. Then the  following  statements are equivalent: 
\begin{itemize}
\item[i)] $b_\varepsilon \wgconv b$.
\item[ii)] $R_0(T_\varepsilon)$ converges weakly to $R_0(T)P_0$, where $P_0$ is the orthogonal
projection onto~$\hil_0$.
\end{itemize} 
\end{Theorem}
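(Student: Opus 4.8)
The plan is to read both statements through the same object: for each fixed $\eta\in\hil$ the resolvent vector $R_0(T_\varepsilon)\eta = T_\varepsilon^{-1}\eta$ (with the usual convention that $R_0(T_\varepsilon)$ is extended by $0$ on $(\overline{\dom T_\varepsilon})^\perp$) is the \emph{minimizer} of an explicit functional, so that the desired equivalence becomes a statement about weak convergence of minimizers under weak $\Gamma$-convergence. Indeed, applying Proposition~\ref{propMudancaComplex} to $b_\varepsilon$, the unique minimizer of
\[ g^\varepsilon_\eta(\zeta) = b_\varepsilon(\zeta) - \la\eta,\zeta\ra - \la\zeta,\eta\ra \]
is exactly $\zeta_\varepsilon = R_0(T_\varepsilon)\eta$; existence and uniqueness are guaranteed by $b_\varepsilon\ge\beta>0$, which makes $g^\varepsilon_\eta$ strictly convex and coercive. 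The same applies to $b$, whose functional $g_\eta$ has unique minimizer $\zeta_\ast = R_0(T)P_0\,\eta$.

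For (i)$\Rightarrow$(ii) I would proceed as follows. Since $\zeta\mapsto-\la\eta,\zeta\ra-\la\zeta,\eta\ra$ is continuous, Theorem~\ref{GammaPerturb2} upgrades $b_\varepsilon\wgconv b$ to $g^\varepsilon_\eta\wgconv g_\eta$. The estimate $g^\varepsilon_\eta(\zeta)\ge \beta\|\zeta\|^2 - 2\|\eta\|\,\|\zeta\|$ is uniform in $\varepsilon$ and tends to $\infty$ as $\|\zeta\|\to\infty$, so by the reflexivity characterization of coercivity recalled just before Theorem~\ref{GammaMinimizers} the family $(g^\varepsilon_\eta)$ is equicoercive in the weak topology. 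Theorem~\ref{GammaMinimizers}, read in the weak topology via Remark~\ref{remark3}, then forces the minimizers to converge weakly to the unique minimizer of $g_\eta$, that is $R_0(T_\varepsilon)\eta\wseta R_0(T)P_0\,\eta$. As $\eta\in\hil$ is arbitrary, this is precisely (ii).

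For the converse (ii)$\Rightarrow$(i) I would avoid checking the two conditions of Proposition~\ref{GammaAltern} directly and instead run a compactness-and-uniqueness argument. Because $\hil$ is separable, the weak topology is metrizable on bounded sets and weak $\Gamma$-convergence inherits the sequential compactness of $\Gamma$-limits; moreover a weak $\Gamma$-limit of the uniformly lower bounded closed quadratic forms $b_\varepsilon$ is again such a form, hence equals $b^{\tilde T}$ for some self-adjoint $\tilde T\ge\beta\Id$. Thus any subsequence of $(b_\varepsilon)$ has a further subsequence with $b_\varepsilon\wgconv \tilde b = b^{\tilde T}$. Applying the already-proved implication (i)$\Rightarrow$(ii) along this sub-subsequence gives $R_0(T_\varepsilon)\to R_0(\tilde T)\tilde P_0$ weakly, whereas (ii) gives $R_0(T_\varepsilon)\to R_0(T)P_0$ weakly; uniqueness of weak limits yields $R_0(\tilde T)\tilde P_0 = R_0(T)P_0$. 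The correspondence sending such a form to $R_0(\tilde T)\tilde P_0$ is injective: from this bounded positive self-adjoint operator one recovers $\hil_0$ as the closure of its range, then $\tilde T$ as the inverse of its restriction to $\hil_0$, and finally the form itself, so $\tilde b = b$. Since every subsequence admits a sub-subsequence weakly $\Gamma$-converging to the same $b$, the whole family satisfies $b_\varepsilon\wgconv b$.

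I expect the main obstacle to lie in the converse direction, and to be concentrated in two points. First, one must justify, in the complex setting, that a weak $\Gamma$-limit of these forms is again a quadratic form associated with a self-adjoint operator (so that $\tilde T$ exists at all); this is exactly the real-versus-complex subtlety the paper defers to \cite{deOcomplex}. Second, one must invoke the compactness of weak $\Gamma$-convergence, keeping in mind that the weak topology is metrizable only on bounded sets, so it is precisely the uniform coercivity bound above that confines the whole analysis to balls where the abstract compactness results apply. By comparison, the forward direction is routine once the resolvent has been identified with the minimizer.
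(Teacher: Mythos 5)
Your proposal is essentially correct, but it cannot be compared line-by-line with the paper's proof for the simple reason that the paper gives none: Theorem~\ref{GammaWeakRW} is quoted verbatim as Theorem~13.5 of \cite{DalMaso}. Measured against how that result is actually established, your forward direction is the standard and complete argument given the paper's toolbox: identifying $R_0(T_\varepsilon)\eta$ with the minimizer of $g^\varepsilon_\eta$ via Proposition~\ref{propMudancaComplex}, passing the weakly continuous linear perturbation through Theorem~\ref{GammaPerturb2}, and using the uniform bound $b_\varepsilon\ge\beta\|\cdot\|^2$ to get weak equicoercivity so that Theorem~\ref{GammaMinimizers} delivers $R_0(T_\varepsilon)\eta\wseta R_0(T)P_0\eta$. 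The converse via compactness-and-uniqueness is a legitimate and standard scheme (it is close in spirit to how \cite{DalMaso} organizes Chapters~12--13), and your injectivity step recovering $\tilde b$ from $R_0(\tilde T)\tilde P_0$ is fine. What it buys is brevity; what it costs is that the two load-bearing ingredients are exactly the deep theorems of \cite{DalMaso} that you would be citing anyway: sequential compactness of weak $\Gamma$-convergence (the weak topology on $\hil$ is not first countable, so the compactness theorem for second-countable spaces does not apply directly; one must use the equicoercivity bound $b_\varepsilon\ge\beta\|\cdot\|^2$ to localize to bounded sets, where the weak topology is metrizable and separable --- you flag this correctly), and the closedness of the class of closed quadratic forms bounded below by $\beta\|\cdot\|^2$ under weak $\Gamma$-limits (Theorem~11.10 of \cite{DalMaso} and its weak-topology analogue), which is what guarantees $\tilde T$ exists. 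Neither is proved in your sketch, so the argument is not self-contained; but since the paper itself outsources the entire theorem to \cite{DalMaso}, this is a difference of degree, not a gap in the sense of a step that would fail. A direct proof of the converse avoiding compactness is also possible, using Proposition~\ref{propCharacQuadra} to write $b$ as a supremum of weakly continuous affine functionals indexed by $\dom T$ and verifying the two conditions of Proposition~\ref{GammaAltern} (in its weak form, Remark~\ref{remark3}) from the weak operator convergence of the resolvents; that route stays entirely inside the paper's stated results.
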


\subsection{Norm Resolvent Convergence} Before turning to an application of
Theorem~\ref{mainTheorGamma} in the next section, we introduce additional conditions in order to get
norm resolvent con\-ver\-gence of operators from
$\Gamma$-con\-ver\-gence. This condition will be used to recover a spectral con\-ver\-gence proved
in~\cite{BMT} and, from the technical point of view, can be considered our first contribution.
  
\begin{Proposition}\label{GammaNorm} Let $b_\varepsilon,b\ge \beta>-\infty$ be closed
sesqui\-lin\-e\-ar forms and
$T_\varepsilon,T\ge\beta\Id$ the corresponding associated self-adjoint operators, and let
$\overline{\dom T}=\hil_0\subset\hil$. Assume that the following three conditions hold:
\begin{itemize}
\item[a)] $b_\varepsilon\sgconv b$ and $b_\varepsilon\wgconv b$. 
\item[b)] The resolvent operator $R_{-\lambda}(T)$ is compact in $\hil_0$ for some real number
$\lambda>|\beta|$.
\item[c)] There exists a Hilbert space $\mathcal K$,  compactly embedded in~$\hil$, so that if the sequence
$(\psi_\varepsilon)$ is bounded in $\hil$ and  $(b_\varepsilon(\psi_\varepsilon))$ is also
bounded, then
$(\psi_\varepsilon)$ is a bounded subset of~$\mathcal K$.
\end{itemize} Then, $T_\varepsilon$ converges in norm resolvent sense to $T$ in $\hil_0$ as
$\varepsilon\to0$.
\end{Proposition}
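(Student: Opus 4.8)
\emph{Proof plan.} Fix the $\lambda>|\beta|$ from hypothesis (b); it suffices to prove $\|R_{-\lambda}(T_\varepsilon)-R_{-\lambda}(T)P_0\|\to0$, the convergence for all other values following by the standard resolvent identity. Write $R_\varepsilon:=R_{-\lambda}(T_\varepsilon)$ and $R:=R_{-\lambda}(T)P_0$; note $\|R_\varepsilon\|\le(\beta+\lambda)^{-1}$ uniformly, that each $R_\varepsilon$ and $R$ is self-adjoint on $\hil$, and that $R$ is \emph{compact} on $\hil$ since $R_{-\lambda}(T)$ is compact on $\hil_0$ by (b) and $P_0$ is bounded. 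By hypothesis (a) and the equivalence (ii)$\Leftrightarrow$(v) in Theorem~\ref{mainTheorGamma}, we already have strong resolvent convergence $R_\varepsilon\zeta\to R\zeta$ for every $\zeta\in\hil$, i.e.\ $R_\varepsilon\to R$ in the strong operator topology. The plan is to upgrade this to norm convergence by a contradiction argument fed by the compactness in (b)--(c). Suppose $\|R_\varepsilon-R\|\not\to0$; then there are $\delta>0$, a sequence $\varepsilon_k\to0$, and unit vectors $\phi_k\in\hil$ with $\|(R_{\varepsilon_k}-R)\phi_k\|\ge\delta$ for all $k$.

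Set $\psi_k:=R_{\varepsilon_k}\phi_k$. These are uniformly bounded in $\hil$ by the resolvent bound. To apply hypothesis (c) I must control $b_{\varepsilon_k}(\psi_k)$: since $\psi_k\in\dom T_{\varepsilon_k}$ one has $b_{\varepsilon_k}(\psi_k)=\la T_{\varepsilon_k}\psi_k,\psi_k\ra$, and using $T_{\varepsilon_k}R_{\varepsilon_k}=\Id-\lambda R_{\varepsilon_k}$ this equals $\la\phi_k,R_{\varepsilon_k}\phi_k\ra-\lambda\|R_{\varepsilon_k}\phi_k\|^2$, which is bounded. Thus $(\psi_k)$ is bounded in $\hil$ with $(b_{\varepsilon_k}(\psi_k))$ bounded, so by (c) the sequence $(\psi_k)$ is bounded in $\mathcal K$; the compact embedding $\mathcal K\hookrightarrow\hil$ then yields a subsequence (not relabelled) with $\psi_k\to\psi$ strongly in $\hil$. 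Passing to a further subsequence, extract a weak limit $\phi_k\wseta\phi$ of the unit vectors.

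It remains to identify $\psi$ and derive the contradiction. For any $\chi\in\hil$, self-adjointness of the resolvents gives $\la\chi,\psi_k\ra=\la R_{\varepsilon_k}\chi,\phi_k\ra$; here $R_{\varepsilon_k}\chi\to R\chi$ strongly by the established strong resolvent convergence while $\phi_k\wseta\phi$, so $\la\chi,\psi_k\ra\to\la R\chi,\phi\ra=\la\chi,R\phi\ra$. Hence $\psi_k\wseta R\phi$, and comparing with the strong limit forces $\psi=R\phi$, i.e.\ $R_{\varepsilon_k}\phi_k\to R\phi$ strongly. On the other hand, since $R$ is compact and $\phi_k\wseta\phi$, we have $R\phi_k\to R\phi$ strongly as well. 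Therefore $(R_{\varepsilon_k}-R)\phi_k=\psi_k-R\phi_k\to R\phi-R\phi=0$ in norm, contradicting $\|(R_{\varepsilon_k}-R)\phi_k\|\ge\delta$. This contradiction proves the claimed norm resolvent convergence.

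I expect the main obstacle to be the identification step $\psi=R\phi$: the images $\psi_k=R_{\varepsilon_k}\phi_k$ involve \emph{both} a varying operator and inputs $\phi_k$ that converge only weakly, so the naive splitting $R_{\varepsilon_k}(\phi_k-\phi)+(R_{\varepsilon_k}-R)\phi$ does not close. The resolution is to exploit self-adjointness, which converts strong-operator convergence acting on weakly convergent inputs into weak convergence of the outputs; the compactness from (b) (handling $R\phi_k$) and from (c) (forcing $\psi_k$ to converge strongly) then combine to pin down the limit and collapse the two compactness inputs against the weakly convergent data.
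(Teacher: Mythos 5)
Your proof is correct, and the identification step is where it genuinely departs from the paper. Both arguments run the same contradiction scheme: unit vectors $\phi_k$ witnessing the failure of norm convergence, the uniform bound on $b_{\varepsilon_k}(R_{\varepsilon_k}\phi_k)$ feeding hypothesis (c) to extract a strongly convergent subsequence of $\psi_k=R_{\varepsilon_k}\phi_k$, and compactness of $R_{-\lambda}(T)$ to handle $R\phi_k$. The difference is how the strong limit of $\psi_k$ is identified with $R\phi$. The paper goes back into the variational machinery: it views $\psi_k$ as the minimizer of the functional $g_{\varepsilon_k}(\cdot)=b_{\varepsilon_k}(\cdot)+\lambda\|\cdot\|^2-\la\eta_{\varepsilon_k},\cdot\ra-\la\cdot,\eta_{\varepsilon_k}\ra$, invokes equicoercivity and Theorem~\ref{GammaMinimizers} to get weak convergence to the unique minimizer of the limit functional, and then combines convergence of minima (Theorem~\ref{mainTheorGamma}\,iv)) with weak lower semicontinuity of the norm to upgrade to strong convergence. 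You instead use only the output of Theorem~\ref{mainTheorGamma}\,(ii)$\Rightarrow$(v) — strong resolvent convergence — together with self-adjointness: $\la\chi,\psi_k\ra=\la R_{\varepsilon_k}\chi,\phi_k\ra\to\la R\chi,\phi\ra$ pins down the weak limit of $\psi_k$ as $R\phi$, and the compact embedding from (c) does the upgrading to strong convergence. This is the classical ``strong resolvent convergence plus collective compactness implies norm resolvent convergence'' argument; it is more elementary, cleanly separates the $\Gamma$-convergence input (used once, to get statement (v)) from the operator-theoretic upgrade, and avoids the somewhat delicate step in the paper where Theorem~\ref{mainTheorGamma}\,iv) is applied with the varying data $\eta_\varepsilon$ and the parameter $\mu\ge\lambda$. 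What the paper's route buys in exchange is a self-contained illustration of the minimizer formalism (Proposition~\ref{propMudancaComplex}, Theorem~\ref{GammaMinimizers}) that the rest of the article leans on. Your accounting of where each hypothesis enters — (b) for $R\phi_k\to R\phi$, (c) for $\psi_k\to\psi$ strongly, (a) for the identification — is accurate.
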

\begin{proof} We must show that $R_{-\lambda}(T_\varepsilon)$ converges in operator norm to
$R_{-\lambda}(T)P_0$, where $P_0$ is the orthogonal projection onto~$\hil_0$; to simplify the notation
the projection
$P_0$ will be ignored.   

If $R_{-\lambda}(T_\varepsilon)$ does not converge in norm to
$R_{-\lambda}(T)$, there exist $\delta_0>0$ and vectors
$\eta_\varepsilon$, $\|\eta_\varepsilon\|=1$, for a subsequence (we tacitly keep the same notation
after taking subsequences) of indices $\varepsilon\to0$ so that
\[
\left\| R_{-\lambda}(T_\varepsilon)\eta_\varepsilon - R_{-\lambda}(T)\eta_\varepsilon 
\right\|\ge\delta_0, \quad \forall \varepsilon>0.
\] We will argue to  get a contradiction with this inequality, so proving the proposition. Denote
$\zeta_\varepsilon:=R_{-\lambda}(T_\varepsilon)\eta_\varepsilon$.  By the reflexivity of $\hil$  one
can suppose that
$\eta_\varepsilon\wseta \eta$, for some $\eta\in\hil$, and since $R_{-\lambda}(T)$ is compact we have
$R_{-\lambda}(T)\eta_\varepsilon\to R_{-\lambda}(T)\eta$ in $\hil$.   The general inequalities
\[
\|R_{-\lambda}(T_\varepsilon)\eta_\varepsilon\|\le \frac1{|\beta-\lambda|} \mathrm{\quad and\quad}
\|T_\varepsilon R_{-\lambda}(T_\varepsilon)\eta_\varepsilon\|\le\|\eta_\varepsilon\|=1,\quad \forall
\varepsilon,
\] imply that 
\[ |b_\varepsilon(\zeta_\varepsilon)| = |\la T_\varepsilon\zeta_\varepsilon,\zeta_\varepsilon\ra| \le 
\| T_\varepsilon\zeta_\varepsilon\|\; \|\zeta_\varepsilon\|\;\le  \frac1{|\beta-\lambda|},\quad
\forall \varepsilon,
\] and so it follows by \rm{c)} that 
$(R_{-\lambda}(T_\varepsilon)\eta_\varepsilon)$ is a  bounded sequence in $\mathcal K$, and since this
space is compactly embedded in $\hil$ there exists a (strongly) convergent subsequence so that 
\[ R_{-\lambda}(T_\varepsilon)\eta_\varepsilon\to \zeta
\] for some $\zeta\in\hil$. Next we will  employ the $\Gamma$-con\-ver\-gence to show that
$\zeta=R_{-\lambda}(T)\eta$.   

By Proposition~\ref{propMudancaComplex}, for each $\varepsilon$ fixed,
$\zeta_\varepsilon$ is the minimizer in $\hil$ of the functional
\[ g_\varepsilon(\phi) = b_\varepsilon(\phi) + \lambda \|\phi\|^2 -
\la\eta_\varepsilon,\phi\ra -\la\phi,\eta_\varepsilon\ra,
\]whereas $\zeta=R_{-\lambda}(T)\eta$ is the unique minimizer of
\[ g(\phi) = b(\phi) + \lambda \|\phi\|^2 - \la\eta,\phi\ra -\la\phi,\eta\ra.
\]

Since $\lambda>|\beta|$ it follows that $g_\varepsilon$ is weakly equicoercive (see the discussion
after Remark~\ref{remark3}) and, by Theorem~\ref{GammaMinimizers},
$\zeta_\varepsilon\wseta \zeta$. By this weak con\-ver\-gence and
Theorem~\ref{mainTheorGamma}~\rm{iv)}, for all
$\mu\ge \lambda$, one has
\[ b(\zeta) + \mu \|\zeta\|^2 - \la\eta,\zeta\ra -\la\zeta,\eta\ra = \lim_{\varepsilon\to0}
[b_\varepsilon(\zeta_\varepsilon) + \mu \|\zeta_\varepsilon\|^2 - \la\eta,\zeta_\varepsilon\ra
-\la\zeta_\varepsilon,\eta\ra].
\]  Theorem~\ref{mainTheorGamma} and again  $\zeta_\varepsilon\wseta\zeta$ imply 
\[ b(\zeta)+\lambda \|\zeta\|^2 \le \liminf_{\varepsilon\to0} [b_\varepsilon(\zeta_\varepsilon) +
\lambda\|\zeta_\varepsilon\|^2],
\]whereas the lower semicontinuity of the norm with respect to weak con\-ver\-gence gives 
\[ (\mu-\lambda)\|\zeta\|^2 \le  \liminf_{\varepsilon\to0} (\mu-\lambda)\|\zeta_\varepsilon\|^2,\quad
\mu>\lambda.
\]The last three relations imply $\|\zeta\|=\lim_{\varepsilon\to0}\|\zeta_\varepsilon\|$, and together
with
$\zeta_\varepsilon\wseta\zeta$ one obtains the strong con\-ver\-gence
$\zeta_\varepsilon\to\zeta$, that is, 
\[ R_{-\lambda}(T_\varepsilon)\eta_\varepsilon\to R_{-\lambda}(T)\eta,
\] which contradicts the existence of $\delta_0>0$ above. The proof of the proposition is complete.
\end{proof}

\section{Singular Limit of Dirichlet Tubes}\label{sectionLimitCurve} There are many occasions in
which  particles or waves are restricted  to propagate in thin domains along one-dimensional
structures, as a graph. Optical fibers, carbon nanotubes and the motion of valence electrons in
aromatic molecules are good examples. One natural theoretical consideration is to neglect the small
transversal sections and  model these  systems by the true one-dimensional versions by means of
effective parameters and potentials. Besides the necessity of finding effective models, one has also
to somehow decouple the  transversal and the longitudinal variables. Such  confinements can be
realized by strong potentials (see
\cite{FH,WT} and references therein) or  boundary conditions, and the graph can be imbedded in spaces
of different dimensions.     

In this section we apply the $\Gamma$-con\-ver\-gence in  Hilbert spaces to study the limit 
operator of the Dirichlet  Laplacian in a sequence of tubes in $\R^3$ that is squeezed to a curve. 
This kind of problem (mainly in
$\R^2$) has been considered in some papers (for instance,
\cite{ACF,BMT,DellAntTen,DE, FreitasKrej}), and here we show how part of the construction in
\cite{BMT} can be carried out to  curves of infinite length. As already mentioned, since our main interest is in quantum mechanics, in principle one should use complex Hilbert spaces and adapt the results of $\Gamma$-convergence to this more general setting \cite{deOcomplex} before presenting applications, but since the Laplacian is a real operator, and its eigenfunctions can always be supposed to be real valued, one can reduce the arguments to the real setting.

In a second step we propose a procedure to deal with more general vectors than that generated by the
first eigenvector of the restriction of the Laplacian to the tube cross section. In simpler
situations, like tubes  in
$\R^3$ with vanishing ``twisting'' (see Definition~\ref{defTwist}) or in
$\R^2$, it is possible to get a suitable separation of variables, it is not necessary to employ that
procedure and the limit operator does not depend on the subspace considered. However, in the most
general case the limit operator will depend on the subspace in the cross section due to an
``additional memory of extra dimensions'' that can be present in $\R^3$. This will appear explicitly
in the expressions of effective potentials ahead.

\subsection{Tube and Hamiltonian}\label{subsecTH} Given a connected open set
$\Omega\subset\R^3$, denote by $-\Delta_\Omega$ the usual negative Laplacian operator with Dirichlet
boundary conditions, that is, the Friedrichs extensions of $-\Delta$ with domain
$\dom(-\Delta) = C_0^\infty(\Omega)$. More precisely, $-\Delta_\Omega$ is the self-adjoint operator
acting in
$\LL^2(\Omega)$ associated with the positive sesqui\-lin\-e\-ar form
\[ b_\Omega(\psi,\varphi)=\la \nabla_x\psi,\nabla_x \varphi\ra,\quad \dom b_\Omega =
\hil_0^1(\Omega);
\]the inner product is in the space $\LL^2(\Omega)$ and $\nabla_x$ is the usual gradient in
cartesian coordinates $(x,y,z)$. The operator $-\Delta_\Omega$ describes the energy of a quantum free
particle in $\Omega$.  We are interested in $\Omega$ representing the following kind of tubes, which
will be described in some details. Let
$\gamma:\R\to\R^3$ be a
$C^3$ curve parametrized by its (signed) arc length $s$, and introduce
\[ T(s)=\dot\gamma(s),\quad N(s)=\frac1{\kappa(s)}\dot T(s),\quad B(s)=T(s)\times N(s),
\]with $\kappa(s)=\|\ddot \gamma(s)\|$ being the curvature of~$\gamma$; the dot over a function always
indicates derivative with respect to~$s$. These quantities are the well-known tangent, normal and
binormal (orthonormal) vectors of~$\gamma$, respectively, which constitute a distinguished Frenet
frame for the curve
\cite{Kling}. If $\kappa$ vanishes in suitable intervals one considers a constant Frenet frame and in
many cases it is possible to join distinguished and constant Frenet frames, for instance if $\kappa>0$
on a bounded interval $I$ and vanishing in $\R\setminus I$ \cite{EKK}; this will be implicitly used
when we deal with the broken-line limit in Section~\ref{sectBrokenLine}. Here it is assumed that such
a global Frenet frame exists, and so it changes along the curve $\gamma$ according to the
Serret-Frenet equations
\[
\left( {\begin{array}{*{20}c}
   {\dot T}  \\
   {\dot N}  \\
   {\dot B}  \\
\end{array}} \right) = \left( {\begin{array}{*{20}c}
   0 & \kappa  & 0  \\
   { - \kappa } & 0 & \tau   \\
   0 & { - \tau } & 0  \\
\end{array}} \right)\left( {\begin{array}{*{20}c}
   T  \\
   N  \\
   B  \\
\end{array}} \right),
\]where $\tau(s)$ is the torsion of the curve~$\gamma(s)$. Although in principle closed curves can be
allowed we will exclude this case since our main interest is in curves of infinite length.
Nevertheless, the case of closed curves  would lead to a bounded tube and discrete spectrum of the
associated Laplacian $-\Delta^\varepsilon_\alpha$ (see below) and would fit in a small variation of
the discussion in Subsection~\ref{subsecBTubes}.  

Next an open, bounded and connected subset
$\emptyset\ne S\subset\R^2$ will be transversally linked to the reference curve
$\gamma$, so that $S$ will be the cross section of the tube. However, $S$ will also be rotated with
respect to the Frenet frame as one moves along the reference curve~$\gamma$, and with rotation angle
given by a $C^1$ function
$\alpha(s)$. Given $\varepsilon>0$, the tube so obtained (i.e., by moving $S$ along the curve $\gamma$
together with the rotation $\alpha(s)$) is given by
\[
\Omega_\alpha^\varepsilon := \{(x,y,z)\in\R^3: (x,y,z)=f^\varepsilon_\alpha(s,y_1,y_2), s\in\R,
(y_1,y_2)\in S  \},
\]with $f^\varepsilon_\alpha(s,y_1,y_2)=\gamma(s)+\varepsilon y_1 N_\alpha(s) + \varepsilon y_2
B_\alpha(s)$, and
\begin{eqnarray*} N_\alpha(s) &=& \cos\alpha(s) N(s) -\sin \alpha(s) B(s)
\\ B_\alpha(s) &=& \sin\alpha(s) N(s) +\cos\alpha(s) B(s).
\end{eqnarray*} The tube is then defined by the map $f_\alpha^\varepsilon:\R\times S\to
\Omega_\alpha^\varepsilon$, and we will be interested in the singular limit case
$\varepsilon\to0$, that is, when the tube is squeezed to the curve $\gamma$ and what happens to the
Dirichlet Laplacian $-\Delta_{\Omega_\alpha^\varepsilon}$ in this process (see \cite{FreitasKrej} for
the corresponding construction in $\R^n$). This will result in the one-dimensional quantum energy
operator that arises after the confinement onto
$\gamma$; on basis of Proposition~8.1 in \cite{FH}, it is expected that this will be the relevant
operator also in the case of holonomic constraints (at least from the dynamical point of view for
finite times); see also Theorem~3 in~\cite{DellAntTen}.  

Note that the tube is completely determined by the curvature
$\kappa(s)$ and torsion $\tau(s)$ of the curve
$\gamma(s)$, together with the cross-section $S$ and the rotation function~$\alpha(s)$.  Below some
conditions will be imposed on $f_\alpha^\varepsilon$ so that it becomes a
$C^1$-diffeomorphism. It will be assumed that $\gamma$ has no self-intersection and that its curvature
is a bounded function of~$s$, that is, $\|\kappa\|_\infty<\infty$, and, for simplicity, unless
explicitly specified that
$\|\tau\|_\infty,\|\dot\alpha\|_\infty<\infty$.    

As usual in this context we rescale and change variables in order to work with the fixed domain
$\R\times S$; the price we have to pay is a nontrivial Riemannian metric $G=G^\varepsilon_\alpha $,
which is induced by the embedding
$f_\alpha^\varepsilon$, that is, $G=(G_{ij})$, $G_{ij}=e_i\cdot e_j=G_{ji}$, $1\le i,j\le3$, with 
\[ e_1 = \frac{\partial f_\alpha^\varepsilon}{\partial s},\quad e_2 = \frac{\partial
f_\alpha^\varepsilon}{\partial y_1},\quad e_3 = \frac{\partial f_\alpha^\varepsilon}{\partial y_2}.
\]Direct computations give
\[ G_\alpha^\varepsilon =  \left( {\begin{array}{*{20}c}
   \beta_\varepsilon ^2 +\frac1{\varepsilon^2}(\rho_\varepsilon ^2+\sigma_\varepsilon ^2) &
\rho_\varepsilon   & \sigma_\varepsilon   \\
   { \rho_\varepsilon  } & \varepsilon^2 & 0   \\
   \sigma_\varepsilon  & { 0 } & \varepsilon^2  \\
\end{array}} \right),
\]with $\beta_\varepsilon(s,y_1,y_2)  = 1-\varepsilon \kappa(s)(y_1\cos\alpha(s) + y_2
\sin\alpha(s))$, $\rho_\varepsilon(s,y_1,y_2)  =-\varepsilon^2 y_2(\tau(s)-\dot\alpha(s))$ and
$\sigma_\varepsilon(s,y_1,y_2)  =\varepsilon^2 y_1(\tau(s)-\dot\alpha(s))$. Its determinant is 
\[ |\det G_\alpha^\varepsilon|=\varepsilon^4 \beta_\varepsilon ^2,
\]so that $f_\alpha^\varepsilon$ is a local diffeomorphism provided $\beta_\varepsilon $ does not
vanish on
$\R\times S$, which will occur if $\kappa$ is bounded and $\varepsilon$ small enough (recall that $S$
is a bounded set), so that $\beta_\varepsilon >0$. By requiring that
$f_\alpha^\varepsilon$ is injective (that is, the tube is not self-intersecting) one gets a global
diffeomorphism.     

Coming back to the beginning of this subsection, we pass the sesqui\-lin\-e\-ar
form in usual coordinates $(x,y,z)$,
\[ b_{\Omega^\varepsilon_\alpha}(\psi,\varphi)=\la \nabla_x \psi,\nabla_x \varphi\ra,\quad
\dom b_{\Omega^\varepsilon_\alpha} = \hil_0^1(\Omega^{\varepsilon}_\alpha),
\] to coordinates $(s,y_1,y_2)$ of $\R\times S$ and express the inner product and gradients
appropriately. For
$\psi\in\hil^1_0(\Omega_\alpha^\varepsilon)$ set $\psi(s,y_1,y_2) :=
\psi(f_\alpha^\varepsilon(s,y_1,y_2))$. If  $\nabla$ denotes the gradient in the 
$(s,y_1,y_2)$ coordinates, then by the chain rule $\nabla_x\psi = J^{-1}\nabla\psi$ where $J$  is the
$3\times3$ matrix, expressed in the Frenet frame $(T,N,B)$,
\begin{eqnarray*} J&=&\left( {\begin{array}{*{20}c}
   e_1  \\   { e_2 }    \\   e_3   \\
\end{array}} \right) \\ &=& 
\left({\begin{array}{*{20}c}
   \beta_\varepsilon  & \varepsilon (\tau-\dot\alpha) (y_1\sin\alpha-y_2\cos\alpha) &
\varepsilon (\tau-\dot\alpha) (y_2\sin\alpha+y_1\cos\alpha)  \\
  0 & -\varepsilon\cos\alpha & \varepsilon\sin\alpha  \\
   0 & \varepsilon\sin\alpha &\varepsilon\cos\alpha  \\
\end{array}} \right).
\end{eqnarray*} Noting that $JJ^t=G$, $\det J =|\det G|^{1/2}=\varepsilon^2 \beta_\varepsilon
$, and introducing the notation 
\[
\la \psi,\varphi\ra_G = \int_{\R\times S}\overline{\psi(s,y_1,y_2)} \varphi(s,y_1,y_2)\;
\varepsilon^2 \beta_\varepsilon (s)\;dsdy_1dy_2,
\]it follows that
\[
\la \nabla_x \psi,\nabla_x \varphi\ra= \la J^{-1}\nabla \psi, J^{-1}\nabla \varphi\ra_G=\la
\nabla \psi, G^{-1}\nabla \varphi\ra_G
\] and the operator $-\Delta^\varepsilon_\alpha$ can be described as the operator associated with the
positive sesqui\-lin\-e\-ar form 
\[
\dom \tilde b^\varepsilon = \hil_0^1(\R\times S,G),\quad \tilde b^\varepsilon(\psi,\phi) :=
\la \nabla \psi, G^{-1}\nabla \varphi\ra_G,
\]and the $\varepsilon$ dependence is now in the Riemannian metric $G$. More precisely, the above
change of variables is implemented by the unitary transformation 
\[ U:\LL^2(\Omega_\alpha^\varepsilon)\to \LL^2(\R\times S,G),\quad U\psi = \psi\circ
f_\alpha^\varepsilon. 
\] Note, however, that usually we will continue denoting $U\psi$ simply by~$\psi$. Explicitly the
quadratic form is given by (with $dy=dy_1dy_2$ and $\nabla_\perp \psi =
(\partial_{y_1}\psi,\partial_{y_2}\psi)$, so that $\nabla = (\partial_s,\nabla_\perp)$)
\begin{eqnarray*}
\tilde b^\varepsilon(\psi) &=& \left\| J^{-1}\nabla \psi  \right\|^2_G\\
&=&\varepsilon^2\int_{\R\times S} dsdy 
\left[\frac1{\beta_\varepsilon }\left|\nabla \psi
\cdot(1,y_2(\tau-\dot\alpha),y_1(\tau-\dot\alpha)) \right|^2 + \frac{\beta_\varepsilon
}{\varepsilon^2}|\nabla_\perp
\psi|^2\right] \\ &=&\varepsilon^2\int_{\R\times S} dsdy 
\left[\frac1{\beta_\varepsilon }\left|\nabla \psi \cdot(1,Ry(\tau-\dot\alpha)) \right|^2 +
\frac{\beta_\varepsilon }{\varepsilon^2}|\nabla_\perp \psi|^2\right] ,
\end{eqnarray*} where $R=\left( {\begin{array}{*{20}c}
   0 & 1  \\   1 & 0  \\
\end{array}} \right)$.
 On functions $\psi\in C_0^\infty(\R\times S)$ a  calculation shows that the corresponding operator
has the following action
\[ U (-\Delta_\alpha^\varepsilon)U^* \psi = -\varepsilon^2 \frac1{\beta_\varepsilon }
\mathrm{div} \beta_\varepsilon  G^{-1}\nabla\psi, 
\]and mixed derivatives will not be present iff $\tau(s)-\dot\alpha(s)=0$, since this is the condition
for
$G^{-1}$ be a diagonal matrix:
\[ G^{-1} = \left({\begin{array}{*{20}c}
   \beta_\varepsilon ^{-2} & 0 & 0 \\
  0 & \varepsilon^{-2} & 0  \\
   0 & 0 &\varepsilon^{-2}  \\
\end{array}} \right).
\]This hypothesis simplifies the operator expression and it is the main reason this case attracted
more attention
\cite{DE,ClaBra}; more recently the general case has also been considered and the spectral-geometric
effects have been reviewed in \cite{Krej}. With respect to the limit $\varepsilon\to0$ for the more
general tubes we consider here, the expression of the operator actions seem prohibitive to be
manipulated in a useful way; so the main advantage  in using sesqui\-lin\-e\-ar forms and
$\Gamma$-con\-ver\-gence, as studied in case of bounded tubes in~\cite{BMT}. 
 
Since we are interested in the limit of the tube approaching the reference curve~$\gamma$, i.e.,
$\varepsilon\to0$, it is still necessary to perform two kinds of ``regularizations'' in $\tilde
b^\varepsilon$ in order to extract a meaningful limit; these are common approaches to balance singular
problems, particularly due to the presence of regions that scale in different manners, and so to put
them in a tractable form \cite{Braides}. The first one is physically related to the uncertainty
principle in quantum mechanics  and is in fact a renormalization. Let $u_n\in \hil_0^1(S)$ and
$\lambda_n\in\R$, $n\ge0$, be the normalized eigenfunctions and corresponding eigenvalues of the
(negative) Laplacian restricted to the cross section~$S$ (since $S$ is bounded the Dirichlet Laplacian
on~$S$ has compact resolvent), and we suppose that
$\lambda_0<\lambda_1<\lambda_2\cdots$ and that all eigenvalues
$\lambda_n$ are simple; later on we will underline where this assumption is used (see, in particular, Subsection~\ref{subsecFS}). 

When the tube is squeezed there are divergent energies due to terms of the form
$\lambda_n/\varepsilon^2$, and one needs a rule to get rid of these energies related to transverse
oscillations in the tube. Since in quantum mechanics these quadratic forms~$\tilde b^\varepsilon$
correspond to expectation values of total energy, one subtracts such diverging terms from~$\tilde
b^\varepsilon$. However, in principle it is not clear which expression one should use in the
subtraction process and we will consider the following possibilities
\[
\tilde b^\varepsilon(\psi)- \frac{\lambda_n}{\varepsilon^2}\|\psi\|^2_G =\tilde b^\varepsilon(\psi)-
{\lambda_n}\int_{\R\times S}dsdy\; \beta_\varepsilon(s,y) |\psi(s,y)|^2.
\]  

The second regularization is simply a division by the global factor~$\varepsilon^2$ so defining the
family of quadratic forms we will work with:
\begin{eqnarray*} b^\varepsilon_n(\psi) &:=& \varepsilon^{-2}\left( \tilde b^\varepsilon(\psi)-
\frac{\lambda_n}{\varepsilon^2}\|\psi\|^2_G  \right) 
\\ &=& \int_{\R\times S} dsdy  \left[\frac1{\beta_\varepsilon }\left|\nabla \psi
\cdot(1,Ry(\tau-\dot\alpha)) \right|^2 + \frac{\beta_\varepsilon }{\varepsilon^2}\left(|\nabla_\perp
\psi|^2-\lambda_n|\psi|^2\right)\right],
\end{eqnarray*}
$\dom b^\varepsilon_n = \hil^1_0(\R\times S)$. In \cite{BMT} only $b_0^\varepsilon$ was considered,
and ahead we propose a procedure to deal with~$b^\varepsilon_n$, for all $ n\ge0$. After such
regularizations we finally obtain the operators $H^\varepsilon_n$, associated with these forms, for
which  we will investigate the limit
$\varepsilon\to0$.  Let
$\hil^\varepsilon$ denote the Hilbert space $\LL^2(\R\times S,\beta_\varepsilon)$, that is, the inner
product is given by
\[
\la\psi,\varphi\ra_{\varepsilon} := \int_{\R\times S}
\overline{\psi(s,y)}\varphi(x,y)\,\beta_\varepsilon(s,y)\,dsdy.
\]

\begin{Definition}\label{defHneps}
 The operator $H_n^\varepsilon$ is the self-adjoint operator associated with the sesqui\-lin\-e\-ar
form
$b_n^\varepsilon$ (see \cite{ISTQD}, page~101), whose domain $\dom H_n^\varepsilon$ is dense in $\dom
b_n^\varepsilon$ and
\[ b_n^\varepsilon(\psi,\varphi) = \la \psi,H^\varepsilon_n\varphi\ra_\varepsilon,\quad
\forall \psi\in\dom b^\varepsilon_n,\forall \varphi\in \dom H^\varepsilon_n. 
\] In case the functions $\kappa,\tau,\dot\alpha$ are bounded and~$S$ has a smooth boundary then, by elliptic regularity \cite{agmon}, $\dom H_n^\varepsilon =\hil^2(\R\times S)\cap
\hil^1_0(\R\times S)$.
\end{Definition}

\begin{Remark} In the above construction of $\hil^\varepsilon$ and $H^\varepsilon_n$ it was very
important the uniform  bound of $\beta_\varepsilon$, which implies that for all $\varepsilon>0$ small
enough there exist
$0<a_\varepsilon\le a^\varepsilon<\infty$ with
\[ a_\varepsilon \|\cdot\|\le \|\cdot\|_\varepsilon \le a^\varepsilon\|\cdot\|,
\] (similar inequalities also hold for the associated quadratic forms $b_n^\varepsilon$) so that all
Hilbert spaces
$\hil^\varepsilon$ coincide algebraically with
$\LL^2(\R\times S)$ and also have equivalent norms. Furthermore, it is possible to assume that both
$a_\varepsilon\to1$ and
$a^\varepsilon\to1$ hold as
$\varepsilon\to0$, since the sequence of functions $\beta_\varepsilon\to1$ uniformly. These properties
imply the equality of the quadratic form domains for all $\varepsilon>0$ and permit us to speak of
$\Gamma$-con\-ver\-gence of
$b^\varepsilon_n$ and resolvent con\-ver\-gence of
$H_n^\varepsilon$ in $\LL^2(\R\times S)$ even though they act in, strictly speaking, different Hilbert
spaces. Such facts will be freely used ahead.
\end{Remark}

\begin{Definition}\label{defTwist} Given $\Omega_\alpha^\varepsilon$, let $C_n(S) :=\int_S dy
\left|\nabla_\perp u_n(y)\cdot R y\right|^2$, where, as before, $u_n$ is the $(n+1)$th normalized
eigenfunction of the negative Laplacian on~$S$. The tube
$\Omega_\alpha^\varepsilon$ is said to be {\em quantum twisted} if for some
$n$ the function
\[
\mathcal A_n(s) := (\tau(s)-\dot\alpha(s))^2 C_n(S)\, \ne\,0.
\]
\end{Definition}

Note that the quantities $C_n(S)$ are  parameters that depend only on the cross section~$S$, and that 
$\mathcal A_n$ is zero if either the cross section rotation compensates the torsion of the curve (i.e.,
$\tau-\dot\alpha=0$) or if the eigenfunction $u_n$ is radial (i.e.,
$C_n(S)=0$); $\mathcal A_n$ has a geometrical nature and $\mathcal A_0$ was first considered
\cite{BMT}.  In \cite{Krej} there is an interesting list of equivalent formulations of the condition
$\tau-\dot\alpha=0$. 

It should be mentioned that the case of twisting has been addressed also in the case of infinite
curve~$\gamma$  in 
\cite{EKK}, and that there are some studies realized in $\R^n$ for $n\ge3$ \cite{FreitasKrej}.

\subsection{Confinement: Statements} In this subsection we discuss results about the limit
$\varepsilon\to0$ of  $H^\varepsilon_n$. We begin with $H^\varepsilon_0$ and will follow closely
\cite{BMT}, where
$\Gamma$-con\-ver\-gence was used to study the limit operator upon confinement and con\-ver\-gence of
eigenvalues in case of  bounded curves~$\gamma$ (so bounded tubes). We will
point out the necessary modifications in their approach in order to prove Theorem~\ref{teorHzero}
below, which is in the setting of 
$\LL^2(\R\times S)$ and no restriction on the curve length, although our results on spectral
con\-ver\-gence are limited to the standard consequences of strong operator resolvent con\-ver\-gence
(not stated here; see, for instance, Corollary~10.2.2 in~\cite{ISTQD}); in any event, ahead we shall
recover their spectral con\-ver\-gence by means of our Proposition~\ref{GammaNorm}.  

For each
$n\ge0$, denote by $\LL_n$ (resp.\ $h_n$) the Hilbert subspace of $\LL^2(\R\times S)$ (resp.\
$\hil^1_0(\R\times S))$ of vectors of the form $\psi(s,y)=w(s)u_n(y)$, $w\in \LL^2(\R)$ (resp.\ $w\in
\hil^1(\R)=\hil_0^1(\R)$), so that 
\begin{eqnarray*}
\LL^2(\R\times S) &=& \LL_0\oplus \LL_1\oplus \LL_2\oplus\cdots, \\ \hil^1_0(\R\times S) &=& h_0\oplus
h_1\oplus h_2\oplus\cdots,
\end{eqnarray*} and each $h_n$ is a dense subspace of~$\LL_n$. Note that $h_n$ is related to
$b^\varepsilon_n$. In Theorem~\ref{teorHzero} we deal with the limit of $b^\varepsilon_0$ and
generalize some results of \cite{BMT} for elements of $h_0$; then we propose a procedure to deal with
the relation between the limit of
$b_n^\varepsilon$ and~$h_n$, for all $ n$.   

Now define the real potentials
\begin{eqnarray*} V_n^{\mathrm{eff}}(s) &:=& \mathcal A_n(s) - \frac14 \kappa(s)^2 \\ &=&
(\tau(s)-\dot\alpha(s))^2\,C_n - \frac14 \kappa(s)^2,\quad n\ge0,
\end{eqnarray*} and the corresponding Schr\"odinger operators
\[ (H_n^0\psi)(s) = -\frac{d^2}{ds^2}\psi(s) + V_n^{\mathrm{eff}}(s)\psi(s),
\]whose domains are $\dom H^0_n = \hil^2(\R),$ for all~$n\ge0,$ in case of bounded functions
$\kappa,\tau,\dot\alpha$. Here the curvature $\kappa$ is always assumed to be bounded, but if either
the torsion
$\tau$ or the $\dot\alpha$ is not bounded, the domain must be discussed on an almost case-by-case
basis. The subspace  $\LL_n$ can be identified with $\overline{\dom H^0_n}=\LL^2(\R)$ via
$wu_n\mapsto w$. Let $b_n^0$ be the sesqui\-lin\-e\-ar form generated by
$H_n^0$, that is, 
\[ b^0_n(\psi) = \int_\R ds\;\left( |\dot \psi(s)|^2 + V_n^{\mathrm{eff}}(s)|\psi(s)|^2\right)
\] whose $\dom b^0_n=\hil^1(\R)$ (for bounded $\kappa,\tau,\dot\alpha$) can be identified with
$h_n$; hence $b_n^0(\psi)=\infty$ if $\psi\in \LL_n\setminus h_n$. 

\begin{Theorem}\label{teorHzero} The sequence of self-adjoint operators $H^\varepsilon_0$ converges in
the strong resolvent sense  to~$H_0^0$ in $L_0$ as $\varepsilon\to0$.
\end{Theorem}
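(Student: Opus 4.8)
The plan is to derive the strong resolvent convergence from Theorem~\ref{mainTheorGamma} by establishing the two convergences $b^\varepsilon_0\sgconv b^0_0$ and $b^\varepsilon_0\wgconv b^0_0$ in $\LL^2(\R\times S)$, where $\hil_0=\overline{\dom H^0_0}=\LL_0$. Since the effective potential $V_0^{\mathrm{eff}}=(\tau-\dot\alpha)^2C_0-\tfrac14\kappa^2$ may be negative, the forms are only uniformly lower bounded; I would first check, using $\|\kappa\|_\infty,\|\tau\|_\infty,\|\dot\alpha\|_\infty<\infty$ together with the completed-square estimate below, that $b^\varepsilon_0\ge\beta$ for a common $\beta>-\infty$, and then pass to $b^\varepsilon_0+\lambda$ with $\lambda>|\beta|$ so that Theorem~\ref{mainTheorGamma} applies; by Theorem~\ref{GammaPerturb2} the added constant does not affect the $\Gamma$-limits. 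By Proposition~\ref{GammaAltern} and Remark~\ref{remark3}, the conjunction of the two $\Gamma$-convergences amounts to proving (I) a \emph{weak} $\liminf$ inequality, $b^0_0(\psi)\le\liminf_\varepsilon b^\varepsilon_0(\psi_\varepsilon)$ whenever $\psi_\varepsilon\wseta\psi$; and (II) a \emph{strong} recovery sequence, namely for each $\psi$ some $\psi_\varepsilon\to\psi$ with $b^\varepsilon_0(\psi_\varepsilon)\to b^0_0(\psi)$. A strongly convergent recovery sequence is also weakly convergent, so (I) furnishes the $\liminf$ half of both $\Gamma$-limits and (II) furnishes the recovery half of both.

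For (II), fix $\psi=wu_0\in h_0$ with $w\in\hil^1(\R)$ and set $m(s,y)=y_1\cos\alpha(s)+y_2\sin\alpha(s)$. I would take the corrected sequence $\psi_\varepsilon:=wu_0\bigl(1+\tfrac\varepsilon2\kappa m\bigr)$ (agreeing with $\beta_\varepsilon^{-1/2}wu_0$ to leading order), which converges strongly to $wu_0$ in $\LL^2$. Inserting it into $b^\varepsilon_0$ and expanding in $\varepsilon$ via $\beta_\varepsilon=1-\varepsilon\kappa m$, the longitudinal term tends to $\int_\R|\dot w|^2+(\tau-\dot\alpha)^2C_0|w|^2$ — here the twist cross term $\int_S u_0\,\nabla_\perp u_0\cdot Ry$ vanishes after integration by parts, since $\mathrm{div}(Ry)=0$ and $u_0$ vanishes on $\partial S$. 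The attractive potential arises entirely from the transverse term: writing its integrand with the corrector $f=\tfrac12\kappa w\,mu_0$, the orders $\varepsilon^{-2}$ and $\varepsilon^{-1}$ both vanish, the latter because of the integration-by-parts (Rellich-type) identity $\int_S m(|\nabla_\perp u_0|^2-\lambda_0u_0^2)=0$, while the $\varepsilon^0$ order equals $\int_S(|\nabla_\perp f|^2-\lambda_0 f^2)-2\kappa w\int_S m(\nabla_\perp u_0\cdot\nabla_\perp f-\lambda_0 u_0 f)$. Using the elementary identities $(-\Delta_\perp-\lambda_0)(mu_0)=-2(\cos\alpha\,\partial_{y_1}+\sin\alpha\,\partial_{y_2})u_0$ and $(\cos\alpha\,\partial_{y_1}+\sin\alpha\,\partial_{y_2})m=1$ (with $\int_Su_0^2=1$), a short computation gives this $\varepsilon^0$ contribution the value $-\tfrac14\kappa^2|w|^2$; integrating in $s$ reproduces $b^0_0(wu_0)$. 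For $\psi\in\LL_0\setminus h_0$ one has $b^0_0(\psi)=\infty$ and a diverging-form recovery sequence is produced by the usual density/diagonal argument, while for $\psi\notin\LL_0$ the bound in (I) already forces the matching value $\infty$.

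For (I), let $\psi_\varepsilon\wseta\psi$ with $\liminf b^\varepsilon_0(\psi_\varepsilon)=:L<\infty$, and pass to a subsequence along which $b^\varepsilon_0(\psi_\varepsilon)\to L$ is bounded. Writing $\psi_\varepsilon=\sum_{n\ge0}w_n^\varepsilon u_n$, the spectral gap $\lambda_1-\lambda_0>0$ and the uniform lower bound force the transverse excess to obey $\sum_{n\ge1}\|w_n^\varepsilon\|^2=O(\varepsilon^2)$, so $\psi\in\LL_0$, $\psi=wu_0$; controlling the longitudinal term then yields $w_0^\varepsilon\rightharpoonup w$ in $\hil^1(\R)$, whence $w\in\hil^1(\R)$. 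The decisive point is the sharp lower bound carrying $-\tfrac14\kappa^2$: for \emph{every} transverse deviation $f$ the quadratic expression $\int_S(|\nabla_\perp f|^2-\lambda_0f^2)-2\kappa w\int_S m(\nabla_\perp u_0\cdot\nabla_\perp f-\lambda_0 u_0 f)$ is bounded below by its minimum $-\tfrac14\kappa^2|w|^2$ (the computation of (II) now read as an inequality valid for all $f$). Applying this bound pointwise in $s$, discarding the nonnegative remainder, and invoking the weak lower semicontinuity of the principal longitudinal form $\int_\R|\dot w|^2+(\tau-\dot\alpha)^2C_0|w|^2$ under $w_0^\varepsilon\rightharpoonup w$, gives $L\ge b^0_0(wu_0)$, which is (I).

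The main obstacle is precisely part (I), the sharp $\liminf$ inequality carrying the \emph{attractive} curvature term $-\tfrac14\kappa^2$: unlike the nonnegative longitudinal and twisting contributions, this term cannot be recovered from weak lower semicontinuity alone and must be extracted through the completed-square/variational estimate on the transverse corrector, with all remainder terms controlled \emph{uniformly} in the arc length $s$ by means of $\|\kappa\|_\infty,\|\tau\|_\infty,\|\dot\alpha\|_\infty<\infty$. This uniform control, replacing the compactness available for bounded tubes in \cite{BMT}, is exactly what makes the argument valid for curves of infinite length. A secondary but essential subtlety is that one must secure the \emph{weak} $\Gamma$-convergence and not merely the strong one, since on the non-compact configuration space only the combination in Theorem~\ref{mainTheorGamma} upgrades $\Gamma$-convergence of the forms to strong resolvent convergence of the operators $H^\varepsilon_0$.
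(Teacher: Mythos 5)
Your overall architecture is the same as the paper's: establish both $b^\varepsilon_0\sgconv b^0_0$ and the weak $\liminf$ inequality, then invoke Theorem~\ref{mainTheorGamma} to get strong resolvent convergence in $\LL_0$. Your recovery sequence $\psi_\varepsilon=wu_0(1+\tfrac{\varepsilon}{2}\kappa m)$ is a correct, explicit version of what the paper imports from~\cite{BMT}: your corrector $f=\tfrac12\kappa w\,mu_0$ is exactly the solution $u_{\xi,0}$ of $(-\Delta_\perp-\lambda_0)u_{\xi,0}=-\xi\cdot\nabla_\perp u_0$ appearing in the paper's First Step, and your value $-\tfrac14\kappa^2|w|^2$ is the $\inf=-\tfrac14\xi^2$ of the paper's version of Lemma~4.3 of~\cite{BMT}. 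So for part (II) you have in effect reproved the needed input rather than cited it; that part checks out (modulo the standard density/diagonal step for general $w\in\hil^1(\R)$, which you acknowledge).

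Part (I) as written has two genuine gaps. First, your ``completed-square'' lower bound is stated for the $\varepsilon^0$ truncation of the transverse energy: for a general competitor $\psi_\varepsilon=w^\varepsilon_0u_0+\varepsilon f_\varepsilon$ the weight $\beta_\varepsilon=1-\varepsilon\kappa m$ also multiplies $|\nabla_\perp f_\varepsilon|^2-\lambda_0 f_\varepsilon^2$, leaving a remainder of order $\varepsilon\|\nabla_\perp f_\varepsilon\|^2_{\LL^2(S)}$ that you never control; you only assert the $\LL^2$ bound $\sum_{n\ge1}\|w_n^\varepsilon\|^2=O(\varepsilon^2)$, whereas the transverse $\hil^1$ analogue is what is needed. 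The paper avoids this entirely by completing the square \emph{inside} the weighted Rayleigh quotient $\lambda_0(\xi)=\inf\int_S(1-\xi\cdot y)|\nabla_\perp v|^2/\int_S(1-\xi\cdot y)|v|^2$, which yields the exact inequality $\int_S\tfrac{\beta_\varepsilon}{\varepsilon^2}(|\nabla_\perp\psi|^2-\lambda_0|\psi|^2)\ge\gamma_{\varepsilon,0}(s)\int_S\beta_\varepsilon|\psi|^2$ with no error term, the analytic content being $\gamma_{\varepsilon,0}\to-\tfrac14\kappa^2$ uniformly. Second, and more seriously: after your pointwise bound the transverse contribution is $-\tfrac14\int\kappa^2|w_0^\varepsilon|^2$, a \emph{negative} quadratic term, and you need $\limsup_\varepsilon\int\kappa^2|w_0^\varepsilon|^2\le\int\kappa^2|w|^2$ under mere weak convergence $w_0^\varepsilon\wseta w$. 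This upper-semicontinuity is false in general (weak convergence gives the inequality the wrong way, and on the unbounded line there is no compact embedding to upgrade to strong local convergence). This is precisely why the paper works with $b^\varepsilon_{0,c}=b^\varepsilon_0+c\|\cdot\|^2_\varepsilon$, $c\ge\|\kappa\|^2_\infty$: the coefficient in the resulting lower bound becomes $c-\tfrac14\kappa^2\ge0$, so every piece of the limiting functional $\mathcal G$ is a nonnegative closed quadratic form and weak lower semicontinuity applies. You set up such a shift at the start but do not use it where it is indispensable. Relatedly, the paper keeps the full square $|\dot\psi+\nabla_\perp\psi\cdot Ry(\tau-\dot\alpha)|^2$ intact and applies weak l.s.c.\ of its $\LL^2(\R\times S)$ norm after extracting weak $\hil^1_0$ convergence of $\psi_\varepsilon$; your mode-by-mode reduction of the longitudinal term to $\int|\dot w|^2+(\tau-\dot\alpha)^2C_0|w|^2$ silently discards inter-mode cross terms of $\nabla_\perp\psi_\varepsilon\cdot Ry$ that are not sign-definite, so you should argue as the paper does here as well.
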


\begin{Remark} We have then obtained an explicit form of the effective operator that describes the
confinement of a free quantum particle (in the subspace $\LL_0$) from
$\Omega_\alpha^\varepsilon$ to the curve $\gamma$. The action of the operator $H_0^0$ is the same as
that in~\cite{BMT} and, as already anticipated, the arguments below for its proof are based on
$\Gamma$-con\-ver\-gence of quadratic forms and mainly consist of indications of the necessary
modifications to take into account unbounded curves, that is,  the results of
Section~\ref{sectionGammaConv} and the important additional verification of weak
$\Gamma$-con\-ver\-gence of the involved quadratic forms.
\end{Remark}

It is possible to get some intuition about what should be expected for the con\-ver\-gence of
$H^\varepsilon_n$, $n\ge1$, by considering the form $b^\varepsilon_n$ evaluated at vectors of~$h_n$,
i.e., 
$wu_n$, $w\in \hil^1(\R)$, and some formal arguments.  A direct substitution gives an integral with
four terms
\begin{eqnarray*} b_n^\varepsilon(wu_n) &=& \int_{\R\times S} dsdy\, \Big[
\frac1{\beta_\varepsilon} |\dot w|^2  |u_n|^2 + |w|^2 \Big(
\frac1{\beta_\varepsilon}|\nabla_\perp u_n\cdot Ry(\tau-\dot\alpha)|^2 \\ &+&
\frac{\beta_\varepsilon}{\varepsilon^2}\left( |\nabla_\perp u|^2 - \lambda_0|u|^2 \right)
\Big)+  \frac1{\beta_\varepsilon}  2\Ree (\overline{\dot w u_n}\, w \nabla_\perp u_n\cdot Ry)
\Big],
\end{eqnarray*} and if we approximate $\beta_\varepsilon\approx 1$, except in the third term, we find
that the last term vanishes due to the Dirichlet boundary condition and thus (recall that $u_n$ is
normalized in
$\LL^2(S)$)
\begin{eqnarray*} b^\varepsilon_n(wu_n) &\approx& \int_\R ds \left[|\dot w(s)|^2 +  C_n(S)
(\tau(s)-\dot\alpha(s))^2 |w(s)|^2\right] \\ &+& \int_{\R} ds\, |w(s)|^2 \int_S
dy\,\frac{\beta_\varepsilon}{\varepsilon^2}\left( |\nabla_\perp u_n(y)|^2 -
\lambda_n|u_n(y)|^2 \right) \big),
\end{eqnarray*}   so that the first term $\mathcal A_n(s)$ in the effective potential
$V_n^{\mathrm{eff}}(s)$ is clearly visible. However, the remaining term
\[ K_n^\varepsilon(u,s):=\int_S dy\,\frac{\beta_\varepsilon(s,y)}{\varepsilon^2}\left( |\nabla_\perp
u(y)|^2 -
\lambda_n|u(y)|^2 \right) 
\] is  related to the curvature and requires much more work; this was a major contribution of
\cite{BMT} in case $n=0$ through the study of minima and strong $\Gamma$-con\-ver\-gence.   It is
shown in
\cite{BMT} that $K_0^\varepsilon$ attains a minimum $-\kappa^2/4>-\infty$ since
$\lambda_0$ is the bottom of the spectrum of the Laplacian restricted to the cross section~$S$;
however, a minimum is not expected to occur in case of $K_n^\varepsilon$, $n\ge1$, since, if we take
again the approximation
$\beta_\varepsilon\approx 1$, there will be vectors
$u\in h_{j}$, $j<n$, such that 
\[ K_n^\varepsilon(u,s)\approx \frac1{\varepsilon^2} (\lambda_{j}-\lambda_n)\to -\infty,\quad
\varepsilon\to0,
\]  and this unboundedness from below pushes~$u$ away of the natural range of applicability of
quadratic forms and
$\Gamma$-con\-ver\-gence. We then impose that if  
$b^\varepsilon_n(\psi)\to\pm \infty$, then the vector~$\psi$ is simply excluded from the domain of the
limit sesqui\-lin\-e\-ar form, and so also from the domain of the  limit operator.  Since
Theorem~\ref{teorHzero} tells us how to deal with vectors in $\LL_0$, by taking the above discussion
into account we propose to study the con\-ver\-gence of
$b^\varepsilon_1$ restricted to the orthogonal complement of $\LL_0$ in $\LL^2(\R\times S)$ and, more
generally, to study the $\Gamma$-con\-ver\-gence of $b^\varepsilon_n$ restricted to 
\[
\mathcal E_{n}:=\left(\LL_0\oplus \LL_1\oplus\cdots \oplus\LL_{n-1}\right)^\perp,\quad n\ge1.
\] In what follows these domain restrictions will also apply to the resolvent con\-ver\-gence of the
restriction to~$\mathcal E_n$ of the associated self-adjoint operators and will be used without
additional warnings. Under such procedure we shall obtain the following result (recall that  it is
assumed that all eigenvalues $\lambda_n$ of the Laplacian restricted to the cross section~$S$ are
simple).
 
\begin{Theorem}\label{teorHn}  The sequence of self-adjoint operators obtained by the restriction of
$H^\varepsilon_n$ to $\dom H_n^\varepsilon\cap\mathcal E_n$ converges, in the strong resolvent sense in
$L_n$, to~$H_n^0$ as $\varepsilon\to0$.
\end{Theorem}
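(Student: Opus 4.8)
The plan is to prove Theorem~\ref{teorHn} by reducing it to the machinery already established for the $n=0$ case in Theorem~\ref{teorHzero}, and then invoking the equivalence between $\Gamma$-convergence of forms and strong resolvent convergence from Theorem~\ref{mainTheorGamma}. Concretely, by Theorem~\ref{mainTheorGamma} (equivalence of (ii) and (v)), it suffices to show that the restrictions of the forms $b_n^\varepsilon$ to the subspace $\mathcal E_n$ satisfy \emph{both} the strong and the weak $\Gamma$-convergence to the limit form $b_n^0$ in $\LL_n$. So the whole theorem is recast as two $\Gamma$-convergence statements, and I would prove each using the sequential characterization of Proposition~\ref{GammaAltern} and Remark~\ref{remark3}.

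For the strong $\Gamma$-convergence I would verify the two conditions of Proposition~\ref{GammaAltern}. For the $\liminf$-inequality (condition (i)), given $\zeta\in\LL_n$ and any $\zeta_\varepsilon\to\zeta$ with $b_n^\varepsilon(\zeta_\varepsilon)$ bounded, I would decompose each $\zeta_\varepsilon$ along the orthogonal sum $\LL^2(\R\times S)=\bigoplus_j\LL_j$. The restriction to $\mathcal E_n$ kills the components in $\LL_0,\dots,\LL_{n-1}$, so the dangerous terms $K_n^\varepsilon(u,s)\approx \varepsilon^{-2}(\lambda_j-\lambda_n)\to-\infty$ for $j<n$ are excluded by construction. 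For the surviving components in $\LL_j$ with $j\ge n$, the transverse part $\beta_\varepsilon\varepsilon^{-2}(|\nabla_\perp u|^2-\lambda_n|u|^2)$ is \emph{nonnegative} because $\lambda_n\le\lambda_j$, so one can discard it in the lower bound and handle the remaining longitudinal terms exactly as in the proof of Theorem~\ref{teorHzero}, extracting the effective potential $\mathcal A_n(s)$ and the curvature term $-\kappa^2/4$. For the recovery sequence (condition (ii)), given $\zeta=wu_n\in h_n$, I would use the same recovery sequence that works in the $n=0$ case, namely a correction of $wu_n$ by a higher-order transverse profile chosen so that $K_n^\varepsilon$ realizes its minimum $-\kappa^2/4$; since $\lambda_n$ is simple (this is where simplicity enters, as flagged in the paper), the minimizing transverse profile is unique and the construction goes through verbatim with $u_0,\lambda_0$ replaced by $u_n,\lambda_n$.

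The harder of the two is the weak $\Gamma$-convergence, i.e.\ the $\liminf$-inequality along \emph{weakly} convergent sequences $\zeta_\varepsilon\wseta\zeta$ with both $\|\zeta_\varepsilon\|$ and $b_n^\varepsilon(\zeta_\varepsilon)$ bounded. The recovery part is free since a strongly convergent recovery sequence is a fortiori weakly convergent, so the real content is the weak $\liminf$. Here the boundedness of $b_n^\varepsilon(\zeta_\varepsilon)$ together with the uniform bound on $\beta_\varepsilon$ forces the transverse ``defect'' $\int_S(|\nabla_\perp\zeta_\varepsilon|^2-\lambda_n|\zeta_\varepsilon|^2)$ to be $O(\varepsilon^2)$; combined with the restriction to $\mathcal E_n$ and the spectral gap $\lambda_{n+1}-\lambda_n>0$ (again using simplicity and strict ordering of the $\lambda_j$), this pins the transverse profile of $\zeta_\varepsilon$ asymptotically onto the one-dimensional eigenspace spanned by $u_n$, so that $\zeta_\varepsilon$ is close to $h_n$. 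I would then pass to the weak limit term by term, using weak lower semicontinuity of the (nonnegative) longitudinal Dirichlet form and the fact that the bounded, continuous potential coefficients $(\tau-\dot\alpha)^2C_n$ and $-\kappa^2/4$ are handled by Theorem~\ref{GammaPerturb2} as continuous perturbations.

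The main obstacle I anticipate is precisely controlling the transverse components in the weak-convergence argument: because the test vectors live in the infinite-dimensional orthogonal complement $\mathcal E_n=\bigoplus_{j\ge n}\LL_j$ rather than a single sector, I must rule out ``leakage'' of energy into the higher sectors $\LL_j$, $j>n$, surviving in the weak limit. This requires quantifying how the spectral gap above $\lambda_n$ suppresses those components: for $j>n$ the transverse term contributes $\approx\varepsilon^{-2}(\lambda_j-\lambda_n)\|(\zeta_\varepsilon)_j\|^2\ge 0$, and boundedness of $b_n^\varepsilon$ forces $\|(\zeta_\varepsilon)_j\|^2=O(\varepsilon^2)$, so these components vanish strongly and cannot spoil the weak $\liminf$. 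Making this gap estimate uniform over all $j>n$ simultaneously — rather than one sector at a time — is the delicate point, and it is exactly what the hypothesis of simple, strictly increasing eigenvalues $\lambda_0<\lambda_1<\cdots$ is there to guarantee. Once both $\Gamma$-convergences are in hand, Theorem~\ref{mainTheorGamma} delivers the strong resolvent convergence of the restricted operators to $H_n^0$ in $\LL_n$, completing the proof.
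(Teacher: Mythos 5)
Your overall architecture coincides with the paper's: reduce the theorem to the pair of statements $b_n^\varepsilon\sgconv b_n^0$ and $b_n^\varepsilon\wgconv b_n^0$ in $\mathcal E_n$ and then invoke Theorem~\ref{mainTheorGamma}; and your treatment of the weak $\liminf$ is essentially the paper's argument run in the contrapositive. The point you flag as delicate --- uniformity of the gap estimate over all $j>n$ --- is not actually delicate: the paper controls all higher sectors at once with the single projection $P_{n+1}$ onto $\mathcal E_{n+1}$ and the min--max bound $\int_S|\nabla_\perp\phi|^2\ge\lambda_{n+1}\|\phi\|^2_{\LL^2(S)}$ for $\phi\in[u_0,\dots,u_n]^\perp$, so only the one gap $\lambda_{n+1}-\lambda_n>0$ enters. (The paper also first shifts to $b^\varepsilon_{n,c}=b^\varepsilon_n+c\|\cdot\|^2_\varepsilon$ to have positive forms before using weak lower semicontinuity of the $\LL^2$-norm; you should do the same.)

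The genuine gap is in your strong $\Gamma$-$\liminf$ step, where you assert that the transverse part $\int_S \beta_\varepsilon\varepsilon^{-2}\bigl(|\nabla_\perp\psi|^2-\lambda_n|\psi|^2\bigr)\,dy$ is nonnegative on $\mathcal E_n$ because $\lambda_n\le\lambda_j$, and can therefore be discarded. This is false: the min--max inequality $\int_S|\nabla_\perp v|^2\ge\lambda_n\int_S|v|^2$ holds for the \emph{unweighted} quadratic form on $[u_0,\dots,u_{n-1}]^\perp$, but the integrand here carries the weight $\beta_\varepsilon=1-\xi\cdot y$, and the infimum of the weighted Rayleigh quotient is $\lambda_n(\xi)=\lambda_n+\varepsilon^2\gamma_{\varepsilon,n}(s)$ with $\gamma_{\varepsilon,n}(s)\to-\tfrac14\kappa^2(s)$; so after division by $\varepsilon^2$ the transverse term is bounded below only by $\gamma_{\varepsilon,n}(s)\|\psi(s,\cdot)\|^2$, which is strictly negative wherever $\kappa\ne0$. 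This is not a cosmetic issue: the entire curvature contribution $-\kappa^2/4$ to $V_n^{\mathrm{eff}}$ lives in precisely this term, so ``discarding'' it while also ``extracting $-\kappa^2/4$ from the remaining longitudinal terms'' is internally inconsistent --- the longitudinal terms only produce $|\dot w|^2$ and $\mathcal A_n$. The missing ingredient is the adaptation of Proposition~4.1 and Lemma~4.3 of~\cite{BMT}: one must prove the \emph{uniform} convergence $\gamma_{\varepsilon,n}(s)\to-\tfrac14\kappa^2(s)$ for the constrained, weighted eigenvalue problem on $[u_0,\dots,u_{n-1}]^\perp$, which rests on the Fredholm-alternative construction of the corrector $u_{\xi,n}$ solving $-\Delta_\perp u_{\xi,n}-\lambda_n u_{\xi,n}=-\xi\cdot\nabla_\perp u_n$ in $[u_0,\dots,u_n]^\perp$; this is exactly where the simplicity of $\lambda_n$ is used (for solvability and uniqueness), not merely for the strict ordering of the eigenvalues as your proposal suggests. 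Your recovery-sequence paragraph gestures at this corrector, but the $\liminf$ side needs the same analysis as a uniform lower bound, and that is the technical core of the First Step that your argument assumes away.
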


\begin{Remark} Since quadratic forms can conveniently take the value $+\infty$, here we have rather
different nomenclatures for singular con\-ver\-gences, that is, the sequence of forms
$b^\varepsilon_n$ converges in~$\mathcal E_n$ whereas the matching sequence of operators
$H_n^\varepsilon$ converges in~$\LL_n$.
\end{Remark}

\begin{Remark} Note that this procedure leads to a kind of decoupling among the subspaces~$h_n$, which
is supported by some cases of plane waveguides treated in~\cite{ACF} where the decoupling is found
through an explicitly separation of variables; see the presence of Kronecker deltas in the expressions
for resolvents in Theorem~1 and Lemma~3 in~\cite{ACF}. In 3D such separation occurs if there is no
twisting~\cite{DE}.
\end{Remark}

\begin{Remark}\label{remarkResonances}
 In the first approximation, in the vicinity of the overall spectral threshold (i.e., by considering $\lambda_0$), the effective Hamiltonian describes bound states. However, at higher thresholds (i.e., $\lambda_n,n\ge1$), the effective Hamiltonians are usually related to resonances in thin tubes (see \cite{DEM,DES}), and so this physical content is another motivation for the consideration of  $b_n, n\ge1$. 
\end{Remark}

\begin{Remark} A novelty with respect to similar situations in $\R^2$ \cite{ACF,DellAntTen} is that
the effective operator~$H_n^0$ describing the particle confined to the curve $\gamma(s)$ depends
on~$n$, that is, we have different quantum dynamics for different subspaces~$L_n$. As mentioned in the
Introduction, this is a counterpart of  some situations found in
\cite{Takens,FH}, since there the effective dynamics may depend on the initial condition.
\end{Remark}

Since the above procedure was strongly based on the mentioned estimate
\[
K_n^\varepsilon(u,s)\approx \frac1{\varepsilon^2} (\lambda_{j}-\lambda_n)\to -\infty,\quad j<n,
\] as well as the rather natural expectation that it is ``spontaneously'' possible an energy
transfer only from higher to lower energy states, we present a rigorous version of such estimate. 
\begin{Proposition}\label{propLeakHL} For normalized $u(s,y)=w(s)u_j(y)$, $w\in\hil^1(\R)$, one has 
\[
\lim_{\varepsilon\to0} \varepsilon^2 b^\varepsilon_n(wu_j) = (\lambda_j-\lambda_n).
\]In particular, for small $\varepsilon$ and a.e.~$s$, $K_n^\varepsilon(u,s)\approx
\frac1{\varepsilon^2} (\lambda_{j}-\lambda_n)\to -\infty$ if $j<n$ as $\varepsilon\to0$.
\end{Proposition}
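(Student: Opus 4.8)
The plan is to substitute $\psi=wu_j$ directly into the explicit formula for $b_n^\varepsilon$, multiply by $\varepsilon^2$, and separate the result into the two groups of terms that scale differently. Using $\nabla(wu_j)=(\dot w\,u_j,\,w\,\nabla_\perp u_j)$, I would record
\[
\varepsilon^2 b_n^\varepsilon(wu_j) = \int_{\R\times S} \frac{\varepsilon^2}{\beta_\varepsilon}\left|\dot w\, u_j + w\,\nabla_\perp u_j\cdot Ry(\tau-\dot\alpha)\right|^2 dsdy + \int_{\R\times S} \beta_\varepsilon\, |w|^2\left(|\nabla_\perp u_j|^2 - \lambda_n|u_j|^2\right) dsdy.
\]
The first integral carries the explicit prefactor $\varepsilon^2$, while the second is a $\beta_\varepsilon$-weighted transverse Rayleigh quotient; I expect the first term to vanish and the second to produce the entire limit.

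First I would show the first integral is $O(\varepsilon^2)$. For $\varepsilon$ small, boundedness of $\kappa$ and of $S$ give $\beta_\varepsilon\ge c>0$ uniformly (this is exactly the control used to construct $\hil^\varepsilon$ and $H_n^\varepsilon$), so $1/\beta_\varepsilon$ is uniformly bounded. The remaining integrand is $\varepsilon$-independent, and it is integrable: $\dot w\,u_j\in\LL^2(\R\times S)$ since $\dot w\in\LL^2(\R)$ and $\|u_j\|_{\LL^2(S)}=1$, while $w\,\nabla_\perp u_j\cdot Ry(\tau-\dot\alpha)\in\LL^2(\R\times S)$ because $w\in\LL^2(\R)$, $\tau-\dot\alpha$ is bounded, $S$ is bounded, and $\int_S|\nabla_\perp u_j\cdot Ry|^2\,dy=C_j(S)<\infty$. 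Hence the squared modulus is a fixed $\LL^1$ function with integral $M<\infty$, and the first term is bounded by $(\varepsilon^2/c)M\to0$.

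Next I would pass to the limit in the transverse term. Since $\beta_\varepsilon\to1$ uniformly on $\R\times S$ and $g:=|w|^2(|\nabla_\perp u_j|^2-\lambda_n|u_j|^2)$ lies in $\LL^1(\R\times S)$, the uniform estimate
\[
\left|\int_{\R\times S}(\beta_\varepsilon-1)\,g\,dsdy\right| \le \|\beta_\varepsilon-1\|_\infty \int_{\R\times S}|g|\,dsdy \longrightarrow 0
\]
gives convergence of the second integral to $\int_\R|w|^2\,ds\int_S(|\nabla_\perp u_j|^2-\lambda_n|u_j|^2)\,dy$. Here I would invoke the eigenvalue identity $\int_S|\nabla_\perp u_j|^2\,dy=\lambda_j$, legitimate by integration by parts since $u_j\in\hil^1_0(S)$ satisfies the Dirichlet condition, together with $\|u_j\|_{\LL^2(S)}=1$ and $\|w\|_{\LL^2(\R)}=1$ from the normalization $\|u\|_{\LL^2(\R\times S)}=1$. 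The transverse integral therefore equals $\lambda_j-\lambda_n$, and combined with the vanishing first term this yields $\lim_{\varepsilon\to0}\varepsilon^2 b_n^\varepsilon(wu_j)=\lambda_j-\lambda_n$.

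Finally, the \emph{in particular} claim is the same computation applied to $K_n^\varepsilon(u_j,s)$: for fixed $s$, $\beta_\varepsilon(s,\cdot)\to1$ uniformly on $S$ gives $\varepsilon^2 K_n^\varepsilon(u_j,s)\to\int_S(|\nabla_\perp u_j|^2-\lambda_n|u_j|^2)\,dy=\lambda_j-\lambda_n$ for a.e.\ $s$, so $K_n^\varepsilon(u_j,s)\approx\varepsilon^{-2}(\lambda_j-\lambda_n)$; when $j<n$ this constant is strictly negative by the assumed strict ordering $\lambda_0<\lambda_1<\cdots$, hence $K_n^\varepsilon\to-\infty$. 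The only delicate point is the uniform lower bound $\beta_\varepsilon\ge c>0$ underlying the first step; once that is secured for small $\varepsilon$, the rest is a clean passage to the limit.
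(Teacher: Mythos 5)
Your proposal is correct and reaches the same conclusion by the same overall strategy as the paper (direct substitution of $wu_j$ and isolation of the transverse term), but the key limit is handled by a genuinely more elementary route. The paper computes $K_n^\varepsilon(u_j,s)$ \emph{exactly}: writing $\beta_\varepsilon=1-\xi\cdot y$, it integrates by parts against $\beta_\varepsilon$ using the eigenvalue equation $-\Delta_\perp u_j=\lambda_j u_j$ (hence $u_j\in\hil^2(S)$), and then symmetrizes in $u_j,\overline{u_j}$ so that the $O(1/\varepsilon)$ term collapses to $\frac{1}{2\varepsilon}\kappa z_\alpha\cdot\int_S\nabla_\perp|u_j|^2\,dy=0$, leaving the identity $K_n^\varepsilon(u_j,s)=\varepsilon^{-2}(\lambda_j-\lambda_n)\int_S\beta_\varepsilon|u_j|^2\,dy$. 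You instead multiply by $\varepsilon^2$ first and only then pass to the limit, using $\beta_\varepsilon\to1$ uniformly together with the form-level identity $\int_S|\nabla_\perp u_j|^2\,dy=\lambda_j$; this needs less regularity and no symmetrization trick, and it also cleanly disposes of the longitudinal and cross terms in one stroke (they sit inside a single $\LL^2$ integral carrying an explicit $\varepsilon^2$ prefactor, so you never need the paper's remark that the cross term vanishes by the Dirichlet condition). What the paper's computation buys in exchange is a sharper error estimate, $K_n^\varepsilon(u_j,s)=\varepsilon^{-2}(\lambda_j-\lambda_n)\bigl(1+O(\varepsilon)\bigr)$ rather than your $\varepsilon^{-2}(\lambda_j-\lambda_n)+o(\varepsilon^{-2})$; both suffice for the stated divergence $K_n^\varepsilon\to-\infty$ when $j<n$. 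Your flagged point about the uniform lower bound $\beta_\varepsilon\ge c>0$ is indeed the only analytic hypothesis needed, and it is already secured in the paper's construction of $\hil^\varepsilon$.
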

\begin{proof} Again a direct substitution and taking into account the Dirichlet boundary condition
yield
\begin{eqnarray*} b_n^\varepsilon(wu_j) &=& \int_{\R\times S} dsdy\, \Big[
\frac1{\beta_\varepsilon} |\dot w|^2  |u_j|^2 \\ &+& |w|^2 \Big(
\frac1{\beta_\varepsilon}|\nabla_\perp u_j\cdot Ry(\tau-\dot\alpha)|^2  +
\frac{\beta_\varepsilon}{\varepsilon^2}\left( |\nabla_\perp u_j|^2 - \lambda_0|u_j|^2 \right)
\Big)\Big],
\end{eqnarray*} and the only term that may not vanish as $\varepsilon\to0$ in $\varepsilon^2
b_n^\varepsilon(wu_j)$ is the last one, so it is enough to analyze 
\[ K_n^\varepsilon(u_j,s):=\int_S dy\,\frac{\beta_\varepsilon(s,y)}{\varepsilon^2}\left( |\nabla_\perp
u_j(y)|^2 -
\lambda_n|u_j(y)|^2 \right).
\]

Write $\beta_\varepsilon = 1- \xi\cdot y$, with $\xi = \varepsilon \kappa(s) z_\alpha$ and
$z_\alpha=(\cos\alpha, \sin\alpha)$. Since $u_j\in\hil^2(S)$ and $u_j$ satisfies the Dirichlet
boundary condition, upon integrating by parts one gets, for a.e.~$s$,
\begin{eqnarray*} K_n^\varepsilon(u_j,s) &=& -\frac{1}{\varepsilon^2} \int_Sdy\,
\overline{u_j}\left( \nabla_\perp\cdot(\beta_\varepsilon \nabla_\perp u_j) 
+\lambda_n\beta_\varepsilon u_j)\right)
\\ &=&
 -\frac{1}{\varepsilon^2} \int_Sdy\, \overline{u_j}\left( \nabla_\perp\beta_\varepsilon\cdot
\nabla_\perp u_j + \beta_\varepsilon(\Delta_\perp u_j +\lambda_n u_j\right)) \\ &=&
  -\frac{1}{\varepsilon^2} \int_Sdy\, \overline{u_j}\left(-\xi\cdot \nabla_\perp u_j +
\beta_\varepsilon(-\lambda_j+\lambda_n) u_j\right)  \\ &=&
  \frac{1}{\varepsilon} \kappa(s)z_\alpha\cdot \int_Sdy\, \overline{u_j}\nabla_\perp u_j +
\frac{(\lambda_j-\lambda_n)}{\varepsilon^2} \int_S dy\, \beta_\varepsilon |u_j|^2.
\end{eqnarray*} Exchanging the roles of $u_j$ and $\overline{u_j}$ yields
\[ K_n^\varepsilon(u_j,s) = \frac{1}{\varepsilon} \kappa(s)z_\alpha\cdot
\int_Sdy\,{u_j}\nabla_\perp \overline{u_j} +\frac{(\lambda_j-\lambda_n)}{\varepsilon^2} \int_S dy\,
\beta_\varepsilon |u_j|^2,
\]and by adding such expressions 
\[ K_n^\varepsilon(u_j,s) = \frac{1}{2\varepsilon} \kappa(s)z_\alpha\cdot
\int_Sdy\,\nabla_\perp|u_j|^2 + \frac{(\lambda_j-\lambda_n)}{\varepsilon^2} \int_S dy\,
\beta_\varepsilon |u_j|^2.
\]
 The Dirichlet boundary condition implies $\int_Sdy\,\nabla_\perp|u_j|^2=0$. Since by dominated
con\-ver\-gence
$\int_S dy\, \beta_\varepsilon |u_j|^2\to \|u_j\|^2=1$ as
$\varepsilon\to0$, it follows that $\varepsilon^2 K_{n,j}(s)\to (\lambda_j-\lambda_n)$ for
a.e.~$s\in\R$. Again by dominated con\-ver\-gence
\[
\lim_{\varepsilon\to0} \varepsilon^2b^\varepsilon_n(wu_j) = (\lambda_j-\lambda_n).
\] The proof is complete.
\end{proof}

The reader may protest at this point that the above procedure of considering restriction to $\mathcal
E_n$ when dealing with $b_n$, $n>0$, should in fact be deduced instead of assumed. However, we think
that such deduction is certainly  beyond the range of
$\Gamma$-con\-ver\-gence, since by  beginning with the form $b^\varepsilon_n$ the vectors $\psi\in
h_j$, $j>n$, will not belong to the domain of the limit form $b^0_n$ since
$\lim_{\varepsilon\to0}b_n^\varepsilon(\psi)=+\infty$, while for vectors $\psi\in h_j$, $j<n$, the 
$\lim_{\varepsilon\to0}b_n^\varepsilon(\psi)=-\infty$ indicates that they would be outside the usual
scope of limit of forms and $\Gamma$-con\-ver\-gence. Further, we have got reasonable expectations that
the domain of the limit form
$b_n^0$ should consists of only vectors in $h_n\subset \LL_n$. By taking into account the proposed
procedure of suitable domain restrictions,  in the next subsection we will support such expectations
by proving the above theorems; we stress that there is no mathematical issue  in the formulation of
the above procedure (e.g., if
$\psi\in\dom H_n^\varepsilon\cap {\mathcal E}_n$, then $H_n^\varepsilon\psi\in {\mathcal E}_n$) and
that it rests only on physical interpretations.

\subsection{Confinement: Convergence} In this subsection the proofs of Theorems~\ref{teorHzero}
and~\ref{teorHn} are presented. There are two main steps; the first one is to check the strong
$\Gamma$-con\-ver\-gence (recall the restriction to the subspace
$\mathcal E_n$)
\[ b^\varepsilon_n\sgconv b^0_n,
\]and the second one is a verification that if $\psi_\varepsilon\wseta \psi$ in $\mathcal E_n\subset
\LL^2(\R\times S)$, then
\[ b^0_n(\psi)\le \liminf_{\varepsilon\to0} b_n^\varepsilon(\psi_\varepsilon). 
\] The theorems will then follow by Theorem~\ref{mainTheorGamma}, including the additional information
$b^\varepsilon_n\wgconv b^0_n$ in~$\mathcal E_n$. The proof of the first step is just a verification
that the proof of the corresponding result in~\cite{BMT} can  be explicitly carried out for unbounded tubes and for $b_n^\varepsilon$ with $n\ge0$. Then we shall address the proof of the second step and at the end of this subsection we
briefly discuss how the spectral con\-ver\-gences in~\cite{BMT} can be recovered from our
Proposition~\ref{GammaNorm}. 
 
\subsubsection{First Step}\label{subsecFS} I will be very objective and just indicate the adaptations to the
proofs in~\cite{BMT}; I will also use a notation similar to the one used in that work. The first point
to be considered is a variation of Proposition~4.1 in~\cite{BMT}, that is, the study of the quantities 
\[
\lambda_n(\xi) := \inf_{{}^{\quad0\ne v\in\hil_0^1(S)}_{v\in [u_0,\cdots,u_{n-1}]^\perp}}
\frac{\int_S (1-\xi\cdot y) |\nabla_\perp v(y)|^2}{\int_S (1-\xi\cdot y) |v(y)|^2},\quad n>0,
\]and 
\[
\gamma_{\varepsilon,n}(s) := \frac1{\varepsilon^2} \left( \lambda_n(\xi)-\lambda_n 
\right),\quad \xi=\varepsilon \kappa(s) z_\alpha(s);
\] recall that $z_\alpha=(\cos\alpha,\sin\alpha)$. In case $n=0$ the above
infimum is taken over $0\ne v\in\hil_0^1(S)$; $[u_0,\cdots,u_n]$ denotes the subspace in
$\hil_0^1(S)$ spanned by the first eigenvectors $u_0,\cdots,u_n$ of the (negative) Laplacian
restricted to the cross section~$S$ and recall that here its eigenvalues
$\lambda_0<\lambda_1<\lambda_2\cdots$ are supposed to be simple.

The prime goal is to show that 
\[
\gamma_{\varepsilon,n}(s) \to -\frac14 \kappa^2(s)
\]uniformly on~$\R$ (Proposition~4.1 in~\cite{BMT}; don't confuse $\gamma_{\varepsilon,n}(s)$ with the
reference curve~$\gamma(s)$). In order to prove this we consider the unique solution
$u_{\xi,n}\in\hil_0^1(S)$ of the problem
\[ -\Delta_\perp u_{\xi,n}-\lambda_nu_{\xi,n} = -\xi\cdot \nabla_\perp u_{n}, \quad u_{\xi,n}\in
[u_0,\cdots,u_n]^\perp,
\]which exists by Fredholm alternative since $[u_0,\cdots,u_n]^\perp$ is invariant under the operator
$(-\Delta_\perp)$, the real number $\lambda_n$ belongs to the resolvent set of the restriction
$(-\Delta_\perp)|_{[u_0,\cdots,u_n]^\perp}$ (again because the eigenvalues are simple) and $\xi\cdot
\nabla_\perp u_n$ is orthogonal to $u_n$. Note that the simplicity of eigenvalues was again invoked to guarantee the unicity of solution
and that
$u_{\xi,n}$ is real  since its complex conjugate is also a solution of
the same problem.   

Since we can take real-valued eigenfunctions $u_n$,  the 
Lemma~4.3 in~\cite{BMT}  will read
\[
\inf_{{}^{\quad v\in\hil_0^1(S)}_{v\in[u_0,\cdots,u_{n-1}]^\perp}} \int_S dy\; \left[ |\nabla_\perp
v|^2 -
\lambda_n|v|^2 + (\xi\cdot \nabla_\perp{u_n})(v+\bar v) \right]= -\frac14
\xi^2,
\]with the infimum reached precisely for $u_{\xi,n}$, also a real-valued function. With such remarks
the proof of Lemma~4.3 in~\cite{BMT} also works for all~$n\ge0$. The same
words can be repeated for the proof of the above mentioned Proposition~4.1, the only necessary
additional remark is that the function $\varphi$ in their equation~(4.12) can also be taken real,
since in our context it always appears in the form
$(\varphi+\bar\varphi)$. With such modifications we obtain suitable versions of that proposition, that
is, 
$\gamma_{\varepsilon,n}(s) \to -\frac14 \kappa^2(s)$ uniformly on~$\R$ as $\varepsilon\to0$ for
all~$n\ge0$.   With such tools at hand the proof of
$b^\varepsilon_n\sgconv b^0_n$ in~$\mathcal E_{n}$, for all $n\ge0$, is exactly that done in
Section~4.3 of~\cite{BMT}. This completes our first main step. 
\subsubsection{Second Step}\label{subsubsectSS} Now we are going to complete the proofs of
Theorems~\ref{teorHzero} and~\ref{teorHn}. However we consider a simple modification of the forms
$b^\varepsilon_n$, that is, we shall consider
\[ b^\varepsilon_{n,c}(\psi) :=b^\varepsilon_{n}(\psi) + c\|\psi\|_\varepsilon^2,\quad \dom
b^\varepsilon_{n,c}=\dom b^\varepsilon_{n}\subset \mathcal E_n, 
\]for some $c\ge\|\kappa\|^2_\infty$ so that we work with the more natural set of positive forms. This
implies that if
$H^\varepsilon_n$ is the self-adjoint operator associated with $b^\varepsilon_n$, then
$H^\varepsilon_{n,c}:=H^\varepsilon_n+c\Id$ is the operator associated with
$b^\varepsilon_{n,c}$ and vice versa; so strong resolvent con\-ver\-gence $H^\varepsilon_n\to H^0_n$
in $\LL_n$ is equivalent to strong resolvent con\-ver\-gence $H^\varepsilon_{n,c}\to
H^0_{n,c}:=H^0_n+c\Id$ in~$\LL_n$ as~$\varepsilon\to0$.   

For each $\varepsilon>0$ and
$\psi\in \dom b^\varepsilon_{n,c}$ we have the important lower bound (by the definitions of the forms
and
$\gamma_{\varepsilon,n}$; see also equation~4.19 in~\cite{BMT})
\[ b^\varepsilon_{n,c}(\psi) \ge \int_{\R\times S}dsdy \left( \frac1{\beta_\varepsilon} \left|
\dot\psi + \nabla_\perp \psi\cdot Ry(\tau-\dot\alpha) \right|^2 + \beta_\varepsilon
(c-\gamma_{\varepsilon,n})|\psi|^2\right).
\] Our interest in considering the modified forms $b^\varepsilon_{n,c}$ is also that, for $\varepsilon$
small enough, 
\[ b^\varepsilon_{n,c}(\psi) \ge \left(c-\frac12 \|\kappa\|^2_\infty\right)\|\psi\|^2\ge
\frac12 \|\kappa\|^2_\infty\,\|\psi\|^2
\] and so both $b^\varepsilon_{n,c}$ and $H_{n,c}^\varepsilon$ are positive; this will be important
when dealing with weak con\-ver\-gence ahead. Furthermore, the arguments in First Step above and
Theorem~\ref{GammaPerturb2}, with $f$ playing the role of the norm, which is strongly continuous, also
show that
\[
 b^\varepsilon_{n,c}\sgconv b^0_{n,c}\quad{\mathrm{in}}\quad \mathcal E_n,
\]where 
\[
 b^0_{n,c}(\psi) := \int_\R ds\;\left( |\dot \psi(s)|^2 + \left[(\tau(s)-\dot\alpha(s))^2\,C_n +c-
\frac14
\kappa(s)^2\right]|\psi(s)|^2\right)
\]is the form generated by $H^0_{n,c}$, with $\dom  b^0_{n,c}=\hil_0^1(\R)$ identified with~$h_n$. 

Now we have the task of checking that if $\psi_\varepsilon\wseta\psi$ in $\mathcal E_n\subset
\LL^2(\R\times S)$, then 
\[
\liminf_{\varepsilon\to0} b^\varepsilon_{n,c}(\psi_\varepsilon)\ge  b^0_{n,c}(\psi),
\]in other words, the second main step mentioned above. By Theorem~\ref{mainTheorGamma}~{\rm i)}  we
will then conclude the strong resolvent con\-ver\-gence $H^\varepsilon_{n,c}\to H^0_{n,c}$ in~$\LL_n$,
which will complete the desired proofs. So assume the weak con\-ver\-gence
$\psi_\varepsilon\wseta\psi$.  If $\psi_\varepsilon$ does not belong to $\hil_0^1(\R\times
S)\cap\mathcal E_n$, then
$b^\varepsilon_{n,c}(\psi_\varepsilon)=\infty$, for all $
\varepsilon>0$; so we can assume that $(\psi_\varepsilon)\subset \hil_0^1(\R\times S)\cap\mathcal E_n$
and, up to subsequences, that
\[
\liminf_{\varepsilon\to0} b^\varepsilon_{n,c}(\psi_\varepsilon)=\lim_{\varepsilon\to0}
b^\varepsilon_{n,c}(\psi_\varepsilon).
\] 

Since $(\psi_\varepsilon)$ is a weakly convergent sequence, it is bounded in $\LL^2(\R\times S)$ and
we can also suppose that $\sup_\varepsilon b^\varepsilon_{n,c}(\psi_\varepsilon)<\infty$; hence, as on
page~804 of~\cite{BMT}, it follows that $(\psi_{\varepsilon})$ is a bounded sequence in
$\hil_0^1(\R\times S)$. Since  Hilbert spaces are reflexive, $(\psi_\varepsilon)$ has a subsequence,
again denoted by
$(\psi_{\varepsilon})$, so that $\psi_\varepsilon\wseta \phi$ in $\hil_0^1(\R\times S)$. Since also
$\psi_\varepsilon\wseta\psi$ in $\LL^2(\R\times S)$ it follows that $\psi=\phi$. Therefore
\[
\dot\psi_\varepsilon + \nabla_\perp \psi_\varepsilon\cdot Ry(\tau-\dot\alpha)\wseta \dot\psi +
\nabla_\perp \psi\cdot Ry(\tau-\dot\alpha)
\]and the weak lower semicontinuity of the $\LL^2$-norm (again the importance of introducing the
parameter~$c$ above), the above lower bound  $b^\varepsilon_{n,c}(\psi)\ge 1/2 \|\kappa\|_\infty^2
\|\psi\|^2$, together with the uniform con\-ver\-gences
\[
\beta_\varepsilon\to1,\quad \gamma_{\varepsilon,n}\to-\frac14 \kappa^2,
\]imply that 
\begin{eqnarray*}
\lim_{\varepsilon\to0}b^ \varepsilon_{n,c}(\psi_\varepsilon) &\ge& \mathcal G(\psi)\\
&:=&\int_{\R\times S}dsdy\left(
\left |\dot \psi+\nabla_\perp\psi\cdot R y(\tau-\dot\alpha)\right|^2 +\left(c- \frac14
\kappa^2\right)|\psi|^2\right). 
\end{eqnarray*}

If $\psi\in\dom b^0_{n,c}$, that is, $\psi=w(s)u_n(y)$, $w\in\hil^1(\R)$, then a direct substitution
infers that
$\mathcal G(wu_n)=b^0_{n,c}(wu_n)$ and so
\[
\lim_{\varepsilon\to0} b^\varepsilon_{n,c}(\psi_\varepsilon)\ge b^0_{n,c}(\psi)
\] in this case. Now we will show that, for  $\psi$ with a nonzero component in the complement of 
$\dom b^0_{n,c}$, necessarily $\lim_{\varepsilon\to0} b^\varepsilon_{n,c}(\psi_\varepsilon)=\infty$.  
In fact, if
$\psi$ does not belong to  $\dom b^0_{n,c}$ then $\|P_{n+1}\psi\|>0$ where $P_{n+1}$ is the orthogonal
projection onto 
$\mathcal E_{n+1}$. Because $\psi_\varepsilon\wseta\psi$  in $\mathcal E_n\cap\hil_0^1(\R\times S)$ it
follows that
$P_{n+1}\psi_\varepsilon\wseta P_{n+1}\psi$, and since the $\LL^2$-norm is weakly l.sc.\ we find
\[
\liminf_{\varepsilon\to0}\|P_{n+1}\psi_\varepsilon\|\ge\|P_{n+1}\psi\|>0.
\] Hence for $\varepsilon$ small enough the function $\psi_\varepsilon$ has a nonzero component
$P_{n+1}\psi_\varepsilon$ in $\mathcal E_{n+1}$ and the $\LL^2(\R\times S)$-norm of such components
are uniformly bounded from zero by $\|P_{n+1}\psi\|$.   Now,
$\sup_\varepsilon\|\psi_\varepsilon\|_\varepsilon<\infty$ and recalling that \[
\beta_\varepsilon(s,y) = 1- \xi\cdot y,\quad \xi=\varepsilon\kappa(s) z_\alpha,
\] one has
\begin{eqnarray*} b^\varepsilon_{n,c}(\psi_\varepsilon ) &=& \int_{\R\times S} dsdy 
\Big[\frac1{\beta_\varepsilon }\left|\nabla \psi_\varepsilon  \cdot(1,Ry(\tau-\dot\alpha))
\right|^2 \\ &+& \frac{\beta_\varepsilon }{\varepsilon^2}\left(|\nabla_\perp \psi_\varepsilon
|^2-\lambda_n|\psi_\varepsilon |^2\right)\Big] + c\|\psi_\varepsilon\|_\varepsilon^2 \\ &\ge& 
\int_{\R\times S} dsdy\;  \frac{\beta_\varepsilon }{\varepsilon^2}\left(|\nabla_\perp
\psi_\varepsilon |^2-\lambda_n|\psi_\varepsilon |^2\right) \\ &=& \frac1{\varepsilon^2}
\int_{\R\times S} dsdy \;\left( |\nabla_\perp \psi_\varepsilon |^2-\lambda_n|\psi_\varepsilon |^2
\right) \\ &-& 
\frac1{\varepsilon^2} \int_{\R\times S} dsdy\; (\xi\cdot y)\left( |\nabla_\perp \psi_\varepsilon
|^2-\lambda_n|\psi_\varepsilon |^2 \right).
\end{eqnarray*} Let us estimate the remanning two integrals above; for 
\[
\phi\in \hil^1_0(S)\cap [u_0,\cdots,u_{n-1}]^\perp
\] denote by $\phi^{(n)}$ the component of $\phi$ in $[u_n]$ and by $Q_{n+1}$ the orthogonal
projection onto
$[u_0,\cdots,u_n]^\perp$ in~$\hil^1_0(S)$. The first integral is positive and divergent as
$1/\varepsilon^2$ to
$+\infty$  since for $\varepsilon$ small enough
\begin{eqnarray*}
 \int_{\R\times S} &dsdy& \frac1{\varepsilon^2}\left( |\nabla_\perp \psi_\varepsilon
|^2-\lambda_n|\psi_\varepsilon |^2 \right)  \\ &=& \frac1{\varepsilon^2} \int_{\R} ds
\left(\|\nabla_\perp\psi_\varepsilon(s)\|^2_{\LL^2(S)}-\lambda_n\|\psi_\varepsilon(s)\|^2_{\LL^2(S)}
\right) 
\\ &=& \frac1{\varepsilon^2} \int_{\R} ds
\left(\|\psi_\varepsilon(s)\|^2_{\hil^1_0(S)}-(\lambda_n+1)\|\psi_\varepsilon(s)\|^2_{\LL^2(S)}
\right) 
\\ &=& \frac1{\varepsilon^2} \int_{\R} ds
\Big(\|Q_{n+1}\psi_\varepsilon(s)\|^2_{\hil^1_0(S)}+\|\psi^{(n)}_\varepsilon(s)\|^2_{\hil^1_0(S)}
\\ &-& (\lambda_n+1)\|\psi_\varepsilon(s)\|^2_{\LL^2(S)} \Big)
\\ &=&   \frac1{\varepsilon^2} \int_{\R} ds \Big(\|\nabla_\perp
Q_{n+1}\psi_\varepsilon(s)\|^2_{\LL^2(S)}+\|Q_{n+1}\psi_\varepsilon(s)\|^2_{\LL^2(S)} \\
&+&\|\nabla_\perp\psi^{(n)}_\varepsilon(s)\|^2_{\LL^2(S)} +\|\psi^{(n)}_\varepsilon(s)\|^2_{\LL^2(S)}
\\ &-& (\lambda_n+1)\|\psi_\varepsilon(s)\|^2_{\LL^2(S)} \Big)
\\ &\ge& \frac1{\varepsilon^2} \int_{\R} ds \Big(\lambda_{n+1}\|
Q_{n+1}\psi_\varepsilon(s)\|^2_{\LL^2(S)} +\lambda_n\|\psi^{(n)}_\varepsilon(s)\|^2_{\LL^2(S)}
\\ &-&\lambda_n\|\psi_\varepsilon(s)\|^2_{\LL^2(S)} \Big)
\\  &=& \frac1{\varepsilon^2} \int_{\R\times S} ds \;\left(
\lambda_{n+1}-\lambda_n\right)\|Q_{n+1}\psi_\varepsilon \|^2_{\LL^2(S)}   \\ &=& \frac{\left(
\lambda_{n+1}-\lambda_n\right)}{\varepsilon^2} \|P_{n+1}\psi_\varepsilon\|^2\ge \frac{\left(
\lambda_{n+1}-\lambda_n\right)}{\varepsilon^2} \|P_{n+1}\psi\|,
\end{eqnarray*} and, by hypothesis, $\lambda_{n+1}>\lambda_n$. The absolute value of the second
integral diverges  at most as $1/\varepsilon$; indeed, since $(\psi_\varepsilon)$ is a bounded
sequence in
$\hil_0^1(\R\times S)$ and we have
\begin{eqnarray*}
 & & \frac1{\varepsilon^2}\left|\int_{\R\times S} dsdy\; (\xi\cdot y)\left( |\nabla_\perp
\psi_\varepsilon |^2-\lambda_n|\psi_\varepsilon |^2 \right)\right|  \\ &\le& 
\frac{\|\kappa\|_\infty}{\varepsilon} \int_{\R\times S} dsdy\; |z_\alpha\cdot y|\left( |\nabla_\perp
\psi_\varepsilon |^2+\lambda_n|\psi_\varepsilon |^2 \right) 
  \\ &\le&  \frac{\|\kappa\|_\infty}{\varepsilon} \sup|z_\alpha\cdot y| \times
\max\{\lambda_n,1/\lambda_n\} \|\psi_\varepsilon\|^2_{\hil_0^1}.
\end{eqnarray*} Therefore, if $\psi$ does not belong to $\dom b_{n,c}^0$, then
\[
\liminf_{\varepsilon\to0}b^ \varepsilon_{n,c}(\psi_\varepsilon)=+\infty.
\]We have then verified the statement i) of Theorem~\ref{mainTheorGamma}, and so  Theorem~\ref{teorHn}
follows by Theorem~\ref{mainTheorGamma}v); Theorem~\ref{teorHzero} is just a particular case
with~$n=0$.

\subsection{Spectral Possibilities} There is a competition between the curvature and the twisting
terms in the effective potential \[ V_n^{\mathrm{eff}}(s) = (\tau(s)-\dot\alpha(s))^2\,C_n(S) - \frac14
\kappa(s)^2,\quad n\ge0,
\] since the curvature gives an attractive term and the twisting a repulsive one. 

This effective potential is the net result of a memory of higher dimensions that takes into account
the geometry of the  confining region. In planar  (2D) cases only the curvature term is present and
(if its not zero) a bound state does always exist.  In the spatial (3D) case, by  tuning up  the
tubes, and so the functions that define effective potentials, one finds a huge amount of spectral
possibilities for the effective Schr\"odinger operator in the reference curve
$\gamma(s)$.  Below some of them are selected; it will be assumed that $C_n(S)\ne0$ and that
$\tau,\dot\alpha$ are not necessarily bounded; this is only related to the domain of the forms and
involved operators, differently from the essential technical condition of bounded curvature. It is
worth mentioning that for lower bounded potentials $V$ (in particular for
$V_n^{\mathrm{eff}}$ above) that belong to $\LL^2_{\mathrm{loc}}(\R^k)$ the operators
$-\Delta+V$ are essentially self-adjoint when defined on~$C_0^\infty(\R^k)$; see, for instance,
Section~6.3 in~\cite{ISTQD}. 
\begin{enumerate}
\item{\bf No twisting.} In this case $ \mathcal A_n(s)=0$ and the 3D situation is quite similar to the
2D one; the effective potential $V_n^{\mathrm{eff}}(s) = - 1/4\, \kappa(s)^2$ is purely attractive and
the spectrum of $H^0_n$ has at least one negative eigenvalue. See Subsection~11.4.4 in~\cite{ISTQD}.
\item{\bf Periodic.} By choosing bounded $\kappa,\tau,\dot\alpha$ so that
$V_n^{\mathrm{eff}}(s)$ becomes periodic the resulting effective operators $H^0_n$ have purely
absolutely continuous spectra and with a band-gap structure~\cite{RS4}. Such periodicity may come from
different combinations; for instance, the tube curvature and torsion could be periodic (with the same
period) and
$\alpha(s)$ a constant function, or the tube could be straight so that $\kappa(s)=0=\tau(s)$ but the
cross section
$S$ rotates at a periodic speed
$\dot\alpha(s)$. See also \cite{ExKov}. 
\item{\bf Purely discrete.} The operators $H_n^0$ will have this kind of spectrum if 
$\lim_{|s|\to\infty}V_n^{\mathrm{eff}}(s)=\infty$ (see Section~11.5 in~\cite{ISTQD}); this happens iff
the torsion
$\tau$ or $\dot\alpha$, as well as their difference, diverge at both
$\pm\infty$. In particular $H_n^0$ will have discrete spectrum in case this limit operator is obtained
from a straight tube with growing rotation speed of the cross section such that
$\lim_{|s|\to\infty}\dot\alpha(s)=\infty$.
\item{\bf Quasiperiodic.} For one of the simplest situations select $\dot\alpha$ and $\kappa$ periodic
functions with (minimum) periods $t_\alpha>0$ and $t_\kappa>0$, respectively; if
$t_\alpha/t_\kappa$ is an irrational number we are in the case of quasiperiodic potentials.  In this
case there are many spectral possibilities that usually are very sensitive to details of the
potential. Of course one may also take
$V_n^{\mathrm{eff}}$ in the more general class of almost periodic functions; see, for instance,
\cite{CFKS}.
\item{\bf Singular continuous.} An appealing possibility is the choice of decaying  potentials
$V_n^{\mathrm{eff}}$  in the class studied by Pearson \cite{Pearson}, which leads to singular
continuous spectrum for~$H^0_n$. See also explicit examples in~\cite{KLS}.
\end{enumerate}

The reader can play with his/her imagination in order to consider tubes that give rise to previously
selected spectral types.

\subsection{Bounded Tubes}\label{subsecBTubes} Now we say something about the particular case of
bounded tubes; the goal is to recover the spectral results of~\cite{BMT}. Since the cross section~$S$
is a bounded set, the boundedness of the tube 
\[
\Omega_{\alpha,L}^\varepsilon := \left\{(x,y,z)\in\R^3: (x,y,z)=f^\varepsilon_\alpha(s,y_1,y_2), s\in[0,L],
(y_1,y_2)\in S \right \},
\] is a consequence of a bounded generating curve $\gamma(s)$ defined, say, on a compact set
$s\in[0,L]$ instead of on the whole line~$\R$ as before. In this case the negative Laplacian operator
$-\Delta_{\Omega^\varepsilon_{\alpha,L}}$ has compact resolvent and its spectrum is composed only of
eigenvalues
$\lambda^\varepsilon_{j}$, $j\in\N$; denote by 
$\psi^\varepsilon_j$ the normalized eigenfunction associated with $\lambda_j^\varepsilon$. Let
$b^\varepsilon_{L}$ and $H^\varepsilon_0(L)$ be the the corresponding sesquilinear form and
self-adjoint operator, after the ``regularizations'' and acting  in subspaces of
$\LL^2([0,L]\times S)$, suitably adapted from Subsection~\ref{subsecTH} (the same quantities as
in~\cite{BMT}).   
\begin{Theorem}\label{teorBoundTubeSpec} For each $j\in\N$ one has
\[
\lim_{\varepsilon\to0}\left(\lambda_j^\varepsilon - \frac{\lambda_0}{\varepsilon^2} -
\mu_j\right) =0,
\]where $\mu_j$ are the eigenvalues of the of the Schr\"odinger operator
\begin{eqnarray*}
\dom H^0_0(L) &=& \hil^2(0,L)\cap \hil^1_0(0,L),
\\ (H^0_0(L)\psi)(s) &=& -\ddot\psi(s) + \left((\tau(s)-\dot\alpha(s))^2C_0(S)-\frac14
\alpha^2(s)\right)\psi(s).\end{eqnarray*}  Furthermore, there are subsequences of
$f^\varepsilon_\alpha(\psi^\varepsilon_j)$ that converge to $w_j(s)u_0(y)$ in
$\LL^2([0,L]\times S)$ as $\varepsilon\to0$, where $w_j$ are the normalized eigenfunctions
corresponding to~$\mu_j$.
\end{Theorem}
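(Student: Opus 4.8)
The plan is to deduce the statement from the norm resolvent convergence furnished by Proposition~\ref{GammaNorm}, together with the standard spectral consequences of norm resolvent convergence. The first point is a bookkeeping one. Writing $\tilde H^\varepsilon = U(-\Delta_{\Omega^\varepsilon_{\alpha,L}})U^*$ and recalling that the two ``regularizations'' defining $b^\varepsilon_L$ amount to subtracting $\tfrac{\lambda_0}{\varepsilon^2}\|\cdot\|_G^2$ and then dividing by $\varepsilon^2$, a direct computation with the identity $\la\cdot,\cdot\ra_G=\varepsilon^2\la\cdot,\cdot\ra_\varepsilon$ shows that if $\tilde H^\varepsilon\psi=\lambda_j^\varepsilon\psi$ then
\[
H^\varepsilon_0(L)\psi=\left(\lambda_j^\varepsilon-\frac{\lambda_0}{\varepsilon^2}\right)\psi.
\]
Thus the eigenvalues of $H^\varepsilon_0(L)$ are exactly the numbers $\lambda_j^\varepsilon-\lambda_0/\varepsilon^2$ (with the same eigenvectors), so the whole statement will follow once we prove that $H^\varepsilon_0(L)$ converges in the norm resolvent sense to $H^0_0(L)$ in $\LL_0\cong\LL^2(0,L)$ and then read off the consequences for eigenvalues and eigenprojections.

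To obtain that norm resolvent convergence I would verify the three hypotheses of Proposition~\ref{GammaNorm}. Condition~(a), namely $b^\varepsilon_0\sgconv b^0_0$ and $b^\varepsilon_0\wgconv b^0_0$, is precisely the content of Theorem~\ref{teorHzero} and of the First and Second Steps above, whose arguments apply verbatim on the bounded base interval $[0,L]$ (the compactness available here only simplifies matters). Condition~(b) holds because $H^0_0(L)$ is a one-dimensional Schr\"odinger operator on the bounded interval $[0,L]$ with Dirichlet boundary conditions and locally bounded potential, hence has compact resolvent in $\LL^2(0,L)$. For condition~(c) the natural choice is $\mathcal K=\hil^1_0([0,L]\times S)$, which is compactly embedded in $\LL^2([0,L]\times S)$ by the Rellich--Kondrachov theorem \emph{precisely because the tube is now bounded}; that a sequence with $\sup_\varepsilon\|\psi_\varepsilon\|<\infty$ and $\sup_\varepsilon b^\varepsilon_0(\psi_\varepsilon)<\infty$ is bounded in $\hil^1_0$ is the same coercivity estimate already used to get the lower bound in the Second Step (``as on page~804 of~\cite{BMT}''). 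Proposition~\ref{GammaNorm} then yields $R_{-\lambda}(H^\varepsilon_0(L))\to R_{-\lambda}(H^0_0(L))P_0$ in operator norm, with $P_0$ the orthogonal projection onto $\LL_0$.

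It remains to convert this into spectral information. The limit $R_{-\lambda}(H^0_0(L))P_0$ is compact on $\LL^2([0,L]\times S)$, with nonzero eigenvalues $(\mu_j+\lambda)^{-1}$ and eigenvectors $w_ju_0$; the value $0$ absorbs both the accumulation of the $\mu_j$ and the entire subspace $\LL_0^\perp=\bigoplus_{n\ge1}\LL_n$. Since norm convergence of operators to a compact limit forces convergence of the nonzero discrete spectrum (with multiplicities) and of the associated finite-rank Riesz projections, the eigenvalues $(\lambda_j^\varepsilon-\lambda_0/\varepsilon^2+\lambda)^{-1}$ of $R_{-\lambda}(H^\varepsilon_0(L))$ converge to $(\mu_j+\lambda)^{-1}$, which is the first assertion. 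For the eigenfunctions, each $\mu_j$ is simple for the interval problem, so the spectral projection of $H^\varepsilon_0(L)$ onto the eigenvalue near $\mu_j$ converges in norm to the rank-one projection onto $[w_ju_0]$; norm convergence of rank-one projections gives, after fixing the phases, the convergence $U\psi^\varepsilon_j\to w_ju_0$ in $\LL^2([0,L]\times S)$ along a subsequence.

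The main obstacle I anticipate is not a single estimate but the careful matching in the last step: one must ensure that, for each fixed $j$, the $j$-th low-lying eigenvalue of $H^\varepsilon_0(L)$ is the one that stays bounded and converges to $\mu_j$, while the eigenvalues originating from the higher cross-section modes are pushed to $+\infty$ like $(\lambda_n-\lambda_0)/\varepsilon^2$ (the quantitative form of which is the leakage estimate of Proposition~\ref{propLeakHL}) and therefore cannot contaminate the bounded part of the spectrum for small~$\varepsilon$. The appearance of $P_0$ in the resolvent limit is exactly what encodes this splitting; the only genuine delicacy beyond it is the phase indeterminacy of the eigenvectors, which is why convergence of $U\psi^\varepsilon_j$ is asserted only along subsequences.
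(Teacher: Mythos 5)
Your proposal is correct and follows essentially the same route as the paper: both verify hypotheses (a)--(c) of Proposition~\ref{GammaNorm} with $\mathcal K=\hil^1_0([0,L]\times S)$ and Rellich--Kondrachov, deduce norm resolvent convergence of $H^\varepsilon_0(L)$ to $H^0_0(L)$ in $\LL_0$, and then read off the eigenvalue/eigenfunction convergence after accounting for the subtraction of $\lambda_0/\varepsilon^2$. Your extra care with the Riesz projections and the identification of the eigenvalues of $H^\varepsilon_0(L)$ as $\lambda_j^\varepsilon-\lambda_0/\varepsilon^2$ only makes explicit what the paper leaves as ``well known.''
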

\begin{proof} The proof will be an application of Proposition~\ref{GammaNorm}, with
$T=H^0_0(L)$, $T_\varepsilon=H^\varepsilon_0(L)$, $\hil_0=\{w(s)u_0(y):w\in\hil_0^1(0,L)\}$. Let 
$b^0_L$ be the form generated by $H_0^0(L)$. Previously discussed results in the case of unbounded
tubes apply also here and they show that 
\[ b^\varepsilon_L\sgconv b^0_L,\qquad b^\varepsilon_L\wgconv b^0_L,
\] that is, item~a) of Proposition~\ref{GammaNorm} holds in this setting. Since $H^0_0(L)$ has compact
resolvent, item~b) in that proposition follows at once.

 Finally, for each $\varepsilon>0$ the two hypotheses, $(\psi_\varepsilon)$ is bounded in
$\LL^2(\R\times S)$ and $b^\varepsilon_L(\psi_\varepsilon)$ is bounded, imply that (see page~804
of~\cite{BMT}) 
$(\psi_\varepsilon)$ is a bounded sequence in $\mathcal K=\hil_0^1([0,L]\times S)$. By
Rellich-Kondrachov Theorem the space $\mathcal K$ is compactly embedded in $\LL^2([0,L]\times S)$ (due
to the boundedness of
$[0,L]\times S$), and so item~c) of Proposition~\ref{GammaNorm} holds. By that proposition
$H^\varepsilon_0(L)$ converges in  the norm resolvent sense to
$H^0_0(L)$ in $\hil_0$, and it is well known that the spectral assertions in
Theorem~\ref{teorBoundTubeSpec} follow by this kind of con\-ver\-gence, that is, the con\-ver\-gence
of eigenvalues of $H^\varepsilon_0(L)$ to eigenvalues of
$H^0_0(L)$ as well as the assertion about con\-ver\-gence of eigenfunctions.  Taking into account that
in the construction of
$H^\varepsilon_0(L)$ there was  the ``regularization'' subtraction of
$(\lambda_0/\varepsilon^2)\|\psi\|^2_\varepsilon$ from the original form of the Laplacian
$-\Delta_{\Omega^\varepsilon_{\alpha,L}}$, the conclusions of Theorem~\ref{teorBoundTubeSpec} follow.
\end{proof}

\begin{Remark} Theorem~\ref{teorBoundTubeSpec} makes clear the mechanism behind the spectral
approximations in case of bounded tubes, that is, the powerful {\em norm resolvent con\-ver\-gence} is
in action!
\end{Remark}

\begin{Remark}
 Although we expect that for Theorem~\ref{teorHn} the norm resolvent con\-ver\-gence takes place, we
were not able to prove it; at the moment, to get norm convergence we need a combination of $\Gamma$-con\-ver\-gence and
compactness of the tube (as in Theorem~\ref{teorBoundTubeSpec}). A very simple example indicates how
subtle those properties can be combined and that our expectations might  be wrong.  
 
 Consider the sequence of multiplication operators $T_n\psi(x)=x\psi(x)/n$ and $T=0$. In the space
${\LL}^2(\R)$, dominated con\-ver\-gence implies that $T_n$ converges to $T$ in the strong resolvent
sense. Now,
$\sigma(T_n)=\R$, for all~$n$, while $\sigma(T)=\{0\}$; thus, $T_n$ does not converge in the norm
resolvent sense to
$T$. However, for the same operator actions  in $\LL^2[0,1]$ one gets that $T_n$ converges to $T$ in
the norm resolvent sense (due to the compactness of $[0,1]$).
\end{Remark}

\begin{Remark} It is also possible to consider semi-infinite tubes, that is, $s\in[0,\infty]$. In this
case all previous constructions apply and the limit operator has the expected action but with
Dirichlet boundary condition at~zero. The details are similar to the arguments previously discussed
here and in~\cite{BMT} and will be omitted.
\end{Remark}

\section{Broken-Line Limit}\label{sectBrokenLine} In this section we discuss the operators
$H^0_n$, defined on a spacial curve $\gamma(s)$ with compactly supported curvature, that approximate
another singular limit, now given by two infinite straight edges with one vertex at the origin. The
angle between the straight edges is $\theta$ and is kept fixed during the approximation process. In
case of planar  curves this problem has been considered in
\cite{DellAntTen,ACF} and a variation of it in~\cite{CacciaExner}, and those authors had at hand
explicitly expressions for the resolvents of the Hamiltonians as integral kernels.   

This geometrical broken line is a simple instance of a {\em quantum graph} and the main question is
about the boundary conditions  that is selected at the vertex in the con\-ver\-gence process; that
is expected to be the physical boundary conditions. We refer to the above cited references for more physical
and mathematical details.   Note that in this work we have restricted ourselves to first confine the
quantum system from the tube to the curve, and then take the broken-line limit. In the planar curve cases
both limits are taken together (as in 
\cite{DellAntTen,ACF,CacciaExner}) and with no reference to $\Gamma$-con\-ver\-gence; but since it may
involve sequences of operators that are not uniformly bounded from below
\cite{DellAntTen}, it is not clear that in our 3D setting  we could address both limits together by
using
$\Gamma$-con\-ver\-gence; this seems to be an interesting open problem.  

Of course a novelty here is the possibility of quantum twisting in the effective potentials
\[ V_n^{\mathrm{eff}}(s) = (\tau(s)-\dot\alpha(s))^2\,C_n - \frac14 \kappa(s)^2,\quad n\ge0,
\]since in the plane cases \cite{DellAntTen,ACF} only the curvature term $-\kappa(s)^2/4$ is present.
Another interesting point is the dependence of the effective potential on the
$(n+1)$th sector spanned by the eigenvector $u_n$ of the Laplacian restricted to the cross
section~$S$. Thus the limit operator depends on~$n$ and since this additional term $\mathcal A_n(s)$
is positive we have a wide range of possibilities in the 3D case, that is, not just an attractive
potential as in~2D.    From the technical point of view we will follow closely the proof of Lemma~1
in~\cite{ACF}, which uses results of~\cite{BGW}. However, differently than the planar situation, the
condition
\[
\la V_n^{\mathrm{eff}}\ra:= \int_\R ds\,V_n^{\mathrm{eff}}(s)\ne0
\]may not hold in  3D, but we will see that the same proof can be adapted to the case  $\la
V_n^{\mathrm{eff}}\ra=0$ by using results of~\cite{BGK}; the boundary conditions at the vertex  depend
explicitly on the curvature and twisting.    

We will be rather economical in the proofs below, since
we do not intend to just repeat whole parts of published works; we are sure that from the statements
below and references to papers and specific equations, the interested reader will have no special difficulties
in filling out the missing details.   

Assume that the curvature, torsion and the speed of the rotation
angle $\dot\alpha$ are compactly supported in $(-1,1)$ and scale them as
\[
\kappa_\delta(s):=\frac1\delta \kappa\left( \frac s\delta \right), \quad
\tau_\delta(s):=\frac1\delta \tau\left( \frac s\delta \right), \quad
\dot\alpha_\delta(s):=\frac1\delta \dot\alpha\left( \frac s\delta \right), 
\]and the continuations of the half-lines to the left and to the right of that support joint at the
origin with an angle~$\theta$; this angle is exactly the integral 
\[
\theta=\int_\R ds\, \kappa_\delta(s) =\int_\R ds\, \kappa(s),\quad \forall \delta>0.
\] Of course, as above, the curve $\gamma$ is supposed to be smooth and without self-intersection.
Here we consider only the above scales.  Our concern now is to study the limit $\delta\to0$ of the
families of operators
\[ (H_n^0(\delta)\psi)(s) = -\ddot\psi(s) + V_{n,\delta}^{\mathrm{eff}}(s)\psi(s), \quad \dom
H^0_n(\delta) =
\hil^2(\R),\quad n\ge0,
\]where
\[ V_{n,\delta}^{\mathrm{eff}}(s):= (\tau_\delta(s)-\dot\alpha_\delta(s))^2\,C_n - \frac14
\kappa_\delta(s)^2.
\]

It turns out that this limit $\delta\to0$ is related to the low energy expansion of the resolvent
$R_{k^2}(H_n^0)$, $\Imm k>0$, as explained on page~8 of~\cite{ACF}. The operator
$H^0_n$ is said to have a {\em resonance at zero} if there exists $\psi_r\in \LL^\infty(\R)$,
$\psi\notin\LL^2(\R)$, such that $H_n^0\psi_r=0$ in the sense of distributions; in this case
$\psi_r$ can be chosen real and is unique (as a subspace). 
 Since all $V_n^{\rm eff}$ have compact support, one has $\int_\R ds\, e^{as}|V_n^{\rm
eff}(s)|<\infty$ for some
$a>0$, which is a technical condition necessary for what follows
\cite{ACF,BGK,BGW}. Now we consider two complementary cases: $\la V_n^{\rm eff}\ra \ne0$ and
$\la V_n^{\rm eff}\ra =0$.

\subsection{$\la V_n^{\rm eff}\ra \ne0$} Assume that this condition holds. In this case we may
directly  apply Lemma~1 in~\cite{ACF}, which employs results of~\cite{BGW}, to obtain: 
\begin{Proposition}\label{propVneZero} {\rm (a)} If  $H^0_n$ has no resonance at zero, then 
$H^0_{n,\delta}$ converges in the norm resolvent sense, as $\delta\to0$, to the one-dimensional
Laplacian
$-\Delta^D$ with Dirichlet boundary condition at the origin, that is, 
\begin{eqnarray*}
\dom (-\Delta^D) &=& \left\{ \psi\in \hil^1(\R)\cap\hil^2(\R\setminus\{0\}): \psi(0)=0 \right\},
\\ (-\Delta^D\psi)(s) &=& -\ddot\psi(s).
\end{eqnarray*}

{\rm (b)} If  $H^0_n$ has a resonance at zero, then  $H^0_{n,\delta}$ converges in the norm resolvent
sense, as
$\delta\to0$, to the one-dimensional Laplacian $-\Delta^r$ given by
\begin{eqnarray*}
\dom (-\Delta^r) &=& \big\{ \psi\in \hil^2(\R\setminus\{0\}):
(c_1^n+c_2^n)\psi(0^+)=(c_1^n-c_2^n)\psi(0^-),
\\ &  &(c_1^n-c_2^n)\dot\psi(0^+)=(c_1^n+c_2^n)\dot\psi(0^-) \big\},
\\ (-\Delta^r\psi)(s) &=& -\ddot\psi(s),
\end{eqnarray*}where 
\begin{eqnarray*} c_1^n &=& \frac1{2\la V_n^{\rm eff}\ra} \int_{\R\times \R}dsdy\; V_n^{\rm
eff}(s)\,|s-y|\,V_n^{\rm eff}(y)\psi_r(y),
\\ c_2^n &=&  -\frac1{2} \int_\R ds\; sV_n^{\rm eff}(s)\psi_r(s).
\end{eqnarray*}Moreover, $c_1^n$ and $c_2^n$ do not vanish simultaneously.
\end{Proposition}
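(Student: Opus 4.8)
The plan is to recognize that the scaling collapses everything onto the low-energy resolvent expansion of the \emph{fixed} operator $H_n^0$, so that the proposition follows by adapting Lemma~1 of \cite{ACF} (which in turn rests on \cite{BGW}). First I would note that the scaled potential obeys $V_{n,\delta}^{\mathrm{eff}}(s)=\delta^{-2}V_n^{\mathrm{eff}}(s/\delta)$, so that under the unitary dilation $(W_\delta\psi)(u)=\delta^{1/2}\psi(\delta u)$ one has $W_\delta H^0_{n,\delta}W_\delta^{-1}=\delta^{-2}H_n^0$. Consequently, with the paper's convention $R_\mu(T)=(T-\mu)^{-1}$,
\[
R_{k^2}(H^0_{n,\delta})=W_\delta^{-1}\,\delta^2\,R_{\delta^2k^2}(H_n^0)\,W_\delta,\qquad \Imm k>0 ,
\]
which expresses the resolvent of the scaled operator at fixed energy $k^2$ through the resolvent of $H_n^0$ at the vanishing energy $\delta^2k^2\to0$. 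This is exactly the mechanism announced before the statement, and it places the problem inside the threshold-expansion framework of \cite{BGW,ACF}; the condition $\int_\R e^{as}|V_n^{\mathrm{eff}}|<\infty$, available here since $V_n^{\mathrm{eff}}$ has compact support, is precisely what those references demand.

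The core input is then the expansion of $R_z(H_n^0)$ as $z\to0$. For a one-dimensional Schr\"odinger operator with short-range potential this expansion exhibits a well-known dichotomy governed by whether the zero-energy equation $-\ddot\psi+V_n^{\mathrm{eff}}\psi=0$ admits a bounded, non-$\LL^2$ solution $\psi_r$ (a resonance). I would quote the relevant expansion from \cite{BGW} and transplant it through the displayed identity, tracking which terms survive the degenerating dilation as $\delta\to0$. In the absence of a resonance the singular contribution to the kernel is absent, the half-lines $s>0$ and $s<0$ decouple, and the limiting kernel is that of $-\Delta^D$; this gives part~(a). When a resonance is present, its contribution to the leading term persists, and matching the one-sided limits of the limiting resolvent kernel and of its derivative at the origin produces the interface conditions defining $-\Delta^r$, with $c_1^n,c_2^n$ emerging as the moments of $V_n^{\mathrm{eff}}\psi_r$ appearing in that matching; this gives part~(b).

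The final assertion, that $c_1^n$ and $c_2^n$ cannot vanish together, I would establish directly from the structure of $\psi_r$. Since $V_n^{\mathrm{eff}}$ is compactly supported, $\psi_r$ is affine off the support and, being bounded, is in fact constant there; write $\psi_r\to a_\pm$ as $s\to\pm\infty$. Integrating $\ddot\psi_r=V_n^{\mathrm{eff}}\psi_r$ against $1$ and against $s$ (the boundary terms dropping because $\dot\psi_r=0$ outside the support) yields $\int V_n^{\mathrm{eff}}\psi_r=0$ and $\int sV_n^{\mathrm{eff}}\psi_r=a_--a_+$, whence $c_2^n=\tfrac12(a_+-a_-)$; and representing $\psi_r$ through the zero-energy Green kernel $\tfrac12|s-y|$, together with $\int V_n^{\mathrm{eff}}\psi_r=0$, collapses the double integral to give $c_1^n=-\tfrac12(a_++a_-)$. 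If both coefficients vanished then $a_+=a_-=0$, so $\psi_r$ would vanish identically outside a compact set, and uniqueness for the second-order linear ODE would force $\psi_r\equiv0$, contradicting that $\psi_r$ is a nontrivial resonance. Hence $c_1^n$ and $c_2^n$ do not vanish simultaneously. (Incidentally $c_1^n\pm c_2^n=-a_\mp$, so the stated conditions are the matching relations $a_-\psi(0^+)=a_+\psi(0^-)$ and $a_+\dot\psi(0^+)=a_-\dot\psi(0^-)$.)

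I expect the main obstacle to be the resonant case: carrying the low-energy resolvent expansion of \cite{BGW} through the degenerating dilation and reading off, from the surviving terms, the exact one-sided boundary data that assemble into the conditions defining $-\Delta^r$. The bookkeeping of which contributions persist in the limit, and their precise identification with the moment integrals $c_1^n,c_2^n$, is the delicate part; by contrast the reduction step and the non-degeneracy of $(c_1^n,c_2^n)$ are comparatively routine.
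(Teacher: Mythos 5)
Your proposal is correct and follows essentially the same route as the paper, which for this proposition simply invokes Lemma~1 of \cite{ACF}: your dilation identity $W_\delta H^0_{n,\delta}W_\delta^{-1}=\delta^{-2}H^0_n$ and the reduction to the low-energy resolvent expansion of \cite{BGW} is exactly the mechanism the paper alludes to (``related to the low energy expansion of the resolvent $R_{k^2}(H^0_n)$''). Your self-contained verification that $c_1^n=-\tfrac12(a_++a_-)$, $c_2^n=\tfrac12(a_+-a_-)$ and hence that $(c_1^n,c_2^n)\neq(0,0)$ is a correct (and welcome) supplement to what the paper leaves to the references.
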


\subsection{$\la V_n^{\rm eff}\ra =0$} Assume that this condition holds. Now we cannot apply directly
Lemma~1  in~\cite{ACF}, but by invoking results of~\cite{BGK} we can check that the proof of such
Lemma~1 may be replicated to conclude: 
\begin{Proposition}\label{propVequalZ} {\rm (a)} If  $H^0_n$ has no resonance at zero, then 
$H^0_{n,\delta}$ converges in the norm resolvent sense, as $\delta\to0$, to the one-dimensional
Laplacian
$-\Delta^D$ with Dirichlet boundary condition at the origin.  

{\rm (b)} If  $H^0_n$ has a resonance at zero, then  $H^0_{n,\delta}$ converges in the norm resolvent
sense, as
$\delta\to0$, to the one-dimensional Laplacian $-\Delta^r$, as in Proposition~\ref{propVneZero}{\rm
(b)}, but now
\begin{eqnarray*} c_1^n &=& \frac1{2W} \int_{\R^3}dsdxdy\; V_n^{\rm eff}(s)\,|s-x|\,V_n^{\rm
eff}(x)\,|x-y|\,V_n^{\rm eff}(y)\psi_r(y),
\\ c_2^n &=&  -\frac1{2} \int_\R ds\; sV_n^{\rm eff}(s)\psi_r(s), \\ W &=& \int_{\R^2}dsdy\; V_n^{\rm
eff}(s)\,|s-y|\,V_n^{\rm eff}(y)\,>0.
\end{eqnarray*}Moreover, $c_1^n$ and $c_2^n$ do not vanish simultaneously.
\end{Proposition}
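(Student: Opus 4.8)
The plan is to reduce the statement, exactly as in the proof of Lemma~1 of \cite{ACF}, to a threshold (low-energy) expansion of the resolvent of the fixed operator $H_n^0=-\Delta+V_n^{\mathrm{eff}}$, and then to feed in the expansion of \cite{BGK} (valid when $\la V_n^{\mathrm{eff}}\ra=0$) in place of the expansion of \cite{BGW} (which requires $\la V_n^{\mathrm{eff}}\ra\ne0$) used in \cite{ACF}. The starting point is that the chosen scaling is the one producing a point interaction: since $\kappa_\delta(s)=\delta^{-1}\kappa(s/\delta)$ and likewise for $\tau_\delta,\dot\alpha_\delta$, one has $V_{n,\delta}^{\mathrm{eff}}(s)=\delta^{-2}V_n^{\mathrm{eff}}(s/\delta)$. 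Hence, under the unitary dilation $(D_\delta\psi)(s):=\delta^{-1/2}\psi(s/\delta)$ on $\LL^2(\R)$ one checks directly that $D_\delta^{-1}H_{n,\delta}^0 D_\delta=\delta^{-2}H_n^0$, so that
\[
R_{z}(H_{n,\delta}^0)=\delta^2\,D_\delta\,R_{\delta^2 z}(H_n^0)\,D_\delta^{-1},\qquad \Imm z\ne0 .
\]
Thus the limit $\delta\to0$ is entirely governed by the behaviour of $R_{\delta^2 z}(H_n^0)$ as the spectral parameter tends to the threshold $0$, i.e. by the low-energy expansion of $R_{k^2}(H_n^0)$ already invoked in the text. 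The compact support of $\kappa,\tau,\dot\alpha$ gives $\int_\R e^{a s}|V_n^{\mathrm{eff}}(s)|\,ds<\infty$ for some $a>0$, which is precisely the hypothesis under which the expansions of \cite{BGW,BGK} apply.

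With this reduction in hand I would run the argument of \cite{ACF} essentially verbatim, the only change being the input expansion. The threshold expansion dichotomises according to whether $H_n^0$ has a zero-energy resonance, i.e. a bounded distributional solution $\psi_r$ of $H_n^0\psi_r=0$ with $\psi_r\notin\LL^2(\R)$. In the non-resonant case the leading part of $R_{\delta^2 z}(H_n^0)$, after conjugation by $D_\delta$ and multiplication by $\delta^2$, forces decoupling of the two half-lines, and the limit resolvent is that of the Dirichlet Laplacian $-\Delta^D$; this yields part~(a) and is insensitive to the value of $\la V_n^{\mathrm{eff}}\ra$. In the resonant case the expansion carries a singular rank-one term built from $\psi_r$, whose far-field asymptotics on each half-line (a bounded profile whose constant and linear parts are fixed by moments of $V_n^{\mathrm{eff}}$ against $\psi_r$) produce in the limit a point interaction at the origin, i.e. the matching conditions defining $-\Delta^r$ in Proposition~\ref{propVneZero}(b). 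This is part~(b).

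The substantive difference between the present case and that of \cite{ACF} lies entirely in the coefficients, and this is where I would spend the real effort. In \cite{BGW} the leading coefficient of the resonant term carries $\la V_n^{\mathrm{eff}}\ra$ in the denominator, which is exactly what is unavailable here. The expansion of \cite{BGK} shows that when $\la V_n^{\mathrm{eff}}\ra=0$ this would-be leading coefficient vanishes, so one must pass to the next order: the scalar weight $\la V_n^{\mathrm{eff}}\ra$ is replaced by the quadratic quantity $W=\int_{\R^2}V_n^{\mathrm{eff}}(s)\,|s-y|\,V_n^{\mathrm{eff}}(y)\,ds\,dy$, and the numerator of $c_1^n$ picks up an extra factor $|x-y|\,V_n^{\mathrm{eff}}$, giving the stated triple integral, while $c_2^n$ is unchanged since it is already a first-moment quantity in $\psi_r$.

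The \emph{main obstacle}, and the point genuinely special to $\la V_n^{\mathrm{eff}}\ra=0$, is to verify that $W\ne0$ so that the new coefficients are well defined, and to carry this through the identification of the limit operator. Here the hypothesis $\la V_n^{\mathrm{eff}}\ra=0$ is exactly what rescues the argument: setting $F(s):=\int_{-\infty}^s V_n^{\mathrm{eff}}(t)\,dt$, the vanishing of the total integral makes $F$ compactly supported, and two integrations by parts (using $\partial_s^2|s-y|=2\delta(s-y)$) collapse $W$ to a nonzero constant multiple of $\int_\R F(s)^2\,ds$, which is of one fixed sign and in particular nonvanishing. The final check, that $c_1^n$ and $c_2^n$ do not vanish simultaneously, would follow as in \cite{BGK,ACF}: simultaneous vanishing would force $\psi_r\in\LL^2(\R)$, contradicting the resonance hypothesis, and the only thing to confirm is that this implication survives the replacement of $\la V_n^{\mathrm{eff}}\ra$ by $W$ in the relevant identities.
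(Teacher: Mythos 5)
Your overall strategy coincides with the paper's: use the dilation $D_\delta$ to reduce the $\delta\to0$ limit to the low-energy expansion of $R_{k^2}(H^0_n)$, then rerun Lemma~1 of \cite{ACF} with the zero-mean expansion of \cite{BGK} substituted for that of \cite{BGW}. The gap lies exactly where you write that one can ``run the argument of \cite{ACF} essentially verbatim, the only change being the input expansion.'' That argument is not a black box accepting an arbitrary threshold expansion: at several points it invokes specific algebraic identities relating the expansion operators $t_{-1},t_0,t_1$ to the factorization $V_n^{\rm eff}=vu$, $v=|V_n^{\rm eff}|^{1/2}$, $u=v\,(\mathrm{sgn}V_n^{\rm eff})$ --- equations (17), (18) and (25) of \cite{ACF}, e.g.\ $t_{-1}u=0$, $t_{-1}^*v=0$, $(v,t_0u)=0$, $(v,t_1u)=-2$ and their resonant-case analogues. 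In \cite{ACF} these are established from the \cite{BGW} expansion, where $\la V_n^{\rm eff}\ra\ne0$ enters; they must be re-proved for the \cite{BGK} expansion, and this is essentially the entire content of the paper's proof: most are located in scattered equations of \cite{BGK}, and the remaining one, $t_{-1}u=0$, is derived from the representation $t_{-1}=-c_0P_0+\frac{c_0}{c}M_0P$ together with $Pu=(v,u)u=\la V_n^{\rm eff}\ra u=0$ (precisely where the hypothesis $\la V_n^{\rm eff}\ra=0$ is used) and $P_0u=0$. Your proposal never touches these verifications, so the reduction to \cite{ACF} is not yet justified; the form of the coefficients $c_1^n$, $W$ is a consequence of, not a substitute for, these identities.

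On the positive side, your verification that $W\ne0$ is a genuine addition that the paper omits (it merely asserts $W>0$). Your integration by parts with the compactly supported antiderivative $F$ of $V_n^{\rm eff}$ is sound, but note that it yields $W=-2\int_\R F(s)^2\,ds$, which is nonzero and \emph{negative}; the inequality $W>0$ in the statement therefore cannot be taken at face value and must be reconciled with the sign conventions of \cite{BGK}. Similarly, your sketch that $c_1^n$ and $c_2^n$ cannot vanish simultaneously (else $\psi_r\in\LL^2(\R)$) is consistent with \cite{ACF,BGK}, but making it rigorous again requires the $t_j$-identities above rather than only the replacement of $\la V_n^{\rm eff}\ra$ by $W$.
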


Note the different expressions for the parameter $c_1^n$ from the case $\la V_n^{\rm eff}\ra
\ne0$; in both Propositions~\ref{propVneZero} and~\ref{propVequalZ}, the expressions for
$c_1^n,c_2^n$ were obtained by working with relations in references~\cite{BGW} and~\cite{BGK},
respectively. In order to replicate the proof of the above mentioned Lemma~1, it is enough to check 
some key properties that can be found spread along reference~\cite{BGK}; there is a complete
parallelism between both cases, although the expressions defining the involved quantities are
different (that was a chief contribution of~\cite{BGK}).   In what follows we indicate what are such
properties, where their versions in case $\la V_n^{\rm eff}\ra =0$ can be found in~\cite{BGK} and we
use the notation of~\cite{ACF,BGK} without explaining the meaning of some of the symbols employed
(e.g.,
$t_j, M_j,\phi_0,\cdots$). Unfortunately a short explanation of the involved symbols will not be very
helpful to the understanding of the large amount of involved technicalities; at any rate, they are not necessary to state the above results, they can be
easily found in the references and the equations in~\cite{ACF,BGK} we shall use in the proof below
will be explicitly indicated.  
\begin{proof} Introduce the functions 
\[ v=|V_n^{\rm eff}|^{1/2},\quad u=|V_n^{\rm eff}|^{1/2}(\mathrm{sgn}V_n^{\rm eff}),
\] so that $V_n^{\rm eff}=vu$ and $\la V_n^{\rm eff}\ra = (v,u)$ (inner product in
$\LL^2(\R)$). The properties needed for the proof of Proposition~\ref{propVequalZ}(a) appear in
equation~(25) of~\cite{ACF}, that is,
\[ (v,t_0u)=0,\quad ((\cdot)v,t_0u)=(v,t_0u(\cdot))=0,\quad (v,t_1u)=-2.
\]  The first and the third ones can be found in equation~(3.83) of~\cite{BGK}, while the second one
is obtained by combining  equations~(2.8) and~(3.98) of that work.  

For the proof of Proposition~\ref{propVequalZ}(b) one need to check equations~(17) and~(18)
of~\cite{ACF}; equation~(18) reads
\begin{eqnarray*} ((\cdot)v,t_{-1}u(\cdot))&=&\frac{2(c_2^n)^2}{(c_1^n)^2+(c_2^n)^2},\\
((\cdot)v,t_0u)&=&\frac{2c_1^nc_2^n}{(c_1^n)^2+(c_2^n)^2},\\
(v,t_1u)&=&-\frac{2(c_2^n)^2}{(c_1^n)^2+(c_2^n)^2};
\end{eqnarray*}these relations are found in equations~(4.16), (4.15) and~(3.91) of~\cite{BGK},
respectively.  Now equation~(17) of~\cite{ACF} reads
\[ t_{-1}u=0,\quad t_{-1}^*v=0,\quad (v,t_0u)=0.
\] The third relation follows from equation~(3.90) of~\cite{BGK}, and their equation~(3.93) implies
(recall we are using their notation)
\[ t^*_{-1}v = (\mathrm{sgn}V_n^{\rm eff})t_{-1}(\mathrm{sgn}V_n^{\rm eff})v=(\mathrm{sgn}V_n^{\rm
eff})t_{-1}u=0,
\]that is, we have got the second relation by accepting that the first one holds. Now we show how to
derive the first one from~\cite{BGK}. By equations~(3.45) and~(3.5) in \cite{BGK} it is found that
\begin{eqnarray*} t_{-1} &=& -c_0 P_0\hat{Q}=-c_0P_0(\Id-\hat{P}) \\ &=&-c_0P_0\left(
\Id-\frac1c M_0P\right) \\ &=&  -c_0P_0 + \frac{c_0}{c} M_0P,
\end{eqnarray*}with $P(\cdot) = (v,\cdot)u$ and (see also equations~(3.2) and~(3.3)  in \cite{BGK})
\[ P_0(\cdot)=(\hat{\phi_0},\cdot)\phi_0, \quad \hat{\phi_0}= (\mathrm{sgn}V_n^{\rm eff})M_0\phi_0.
\] 

The proof finishes as soon as we check that $P_0u=0$ and $Pu=0$.   By the hypothesis on the potential
we have
\[ Pu = (v,u)u = \la V_n^{\rm eff}\ra u=0,
\]and 
\begin{eqnarray*} P_0u &=& (\hat{\phi_0},u)\phi_0 = ((\mathrm{sgn}V_n^{\rm eff})M_0\phi_0,u)\phi_0 
\\ &=& (M_0\phi_0,(\mathrm{sgn}V_n^{\rm eff})u)\phi_0=(M_0\phi_0,v)\phi_0
\end{eqnarray*}which vanishes by equation~(3.10) in~\cite{BGK}.
\end{proof}

\end{document}